\theoremstyle{plain}
\newtheorem{theorem}{Theorem}[section]
\newtheorem{lemma}[theorem]{Lemma}
\theoremstyle{definition}
\newtheorem{definition}[theorem]{Definition}
\theoremstyle{remark}
\newtheorem*{remark*}{Remark}
\newcommand{\ignore}[1]{}
\DeclareMathOperator{\E}{\mathbb{E}}
\DeclareMathOperator{\R}{\mathbb{R}}
\DeclareMathOperator{\C}{\mathcal{C}}
\DeclareMathOperator{\poly}{poly}
\DeclareMathOperator{\cost}{cost}
\DeclareMathOperator{\OPT}{OPT}
\DeclareMathOperator{\ddim}{ddim}
\DeclareMathOperator{\tw}{tw}
\DeclareMathOperator{\dist}{dist}
\newcommand{\err}{\mathrm{err}}
\newcommand{\eat}[1]{\iffalse {#1}\fi}
\newcommand{\eps}{\varepsilon}
\renewcommand{\epsilon}{\varepsilon}
\newcommand{\calS}{\mathcal{S}}
\newcommand{\calX}{\mathcal{X}}
\newcommand{\ProblemName}[1]{\textsc{#1}}
\newcommand{\twoMedian}{\ProblemName{$2$-Median}\xspace}
\newcommand{\kMedian}{\ProblemName{$k$-Median}\xspace}
\newcommand{\kBMedian}{\ProblemName{$(k,\beta)$-Median}\xspace}
\newcommand{\kBHMedian}{\ProblemName{$(k,\beta/2)$-Median}\xspace}
\newcommand{\keMedian}{\ProblemName{$(k,\eta)$-Median}\xspace}
\newcommand{\Capx}{C^{\mathrm{apx}}}
\title{The Power of Uniform Sampling for $k$-Median}
\author{Lingxiao Huang\thanks{
    Email: \texttt{huanglingxiao1990@126.com}
  }\\
    Nanjing University
    \and
    Shaofeng H.-C. Jiang\thanks{
    Email: \texttt{shaofeng.jiang@pku.edu.cn}
  }\\
  Peking University
  \and
  Jianing Lou\thanks{ 
    Email: \texttt{loujn@pku.edu.cn}
  }\\
  Peking University
}
\begin{document}
\maketitle
\begin{abstract}
    We study the power of uniform sampling for \kMedian in various metric spaces.
    We relate the query complexity for approximating \kMedian,
    to a key parameter of the dataset, called the balancedness $\beta \in (0, 1]$ (with $1$ being perfectly balanced).
    We show that any algorithm must make $\Omega(1 / \beta)$ queries to the point set in order to achieve $O(1)$-approximation for \kMedian.
    This particularly implies existing constructions of coresets, a popular data reduction technique, cannot be query-efficient.
    On the other hand, we show a simple uniform sample of $\poly(k \epsilon^{-1} \beta^{-1})$
    points suffices for $(1 + \epsilon)$-approximation for \kMedian for various metric spaces, which nearly matches the lower bound.
    We conduct experiments to verify that in many real datasets, the balancedness parameter is usually well bounded, and that the uniform sampling performs consistently well even for the case with moderately large balancedness, which justifies that uniform sampling is indeed a viable approach for solving \kMedian.
\end{abstract}

\section{Introduction}
\label{sec:intro}
We investigate the power of uniform sampling in data reduction for \kMedian, which is a fundamental machine learning problem that has wide applications.
Given a metric space $(\calX,\dist)$, \kMedian takes an $n$-point dataset $X\subseteq \calX$ and an integer parameter $k\ge 1$ as inputs, and the goal is to find a $k$-point center set $C\subseteq \calX$ that minimizes the objective
\[
    \cost(X, C) := \sum_{x\in X}\dist(x, C),
\]
where $\dist(x,C):=\min_{c\in C}\dist(x,c)$ is the distance to the closest center.

Data reduction is a powerful way for dealing with clustering problems,
and a popular method called coreset~\cite{DBLP:conf/stoc/Har-PeledM04}
has been extensively studied during the last decades.
Roughly, an $\epsilon$-coreset aims to find a tiny but accurate proxy of the data, so that an $\epsilon$-approximate center set $C$ can be found by running existing algorithms on top of it. 
Specifically, coresets for \kMedian in Euclidean $\mathbb{R}^d$ has been studied in a series of works~\cite{DBLP:conf/stoc/Har-PeledM04,DBLP:journals/dcg/Har-PeledK07,DBLP:journals/siamcomp/Chen09,DBLP:conf/stoc/FeldmanL11, DBLP:journals/siamcomp/FeldmanSS20, DBLP:conf/focs/SohlerW18, DBLP:conf/stoc/HuangV20, DBLP:conf/soda/BravermanJKW21, DBLP:conf/stoc/Cohen-AddadSS21, DBLP:conf/focs/BravermanCJKST022},
and coresets of size $\poly(k\epsilon^{-1})$ were obtained, which is independent of the dimension $d$ and the data size.
In addition to speeding up existing algorithms, coresets can also be used to derive algorithms in sublinear modesl such as streaming~\cite{DBLP:conf/stoc/Har-PeledM04}, distributed~\cite{DBLP:conf/nips/BalcanEL13} and dynamic algorithms~\cite{DBLP:conf/esa/HenzingerK20}. 

Despite the progress on the size bounds, 
an outstanding issue of coresets is that known coreset \emph{constructions} are usually not query-efficient, i.e., it needs to access $\Omega(n)$ data points (even in sublinear models such as streaming).
In fact, it is not hard to see that this limitation cannot be avoided,
and in the worst case, any algorithm must make $\Omega(n)$ queries to the identity of data points in order to construct a coreset (see \Cref{thm:intro_lb} which we discuss later).
Technically, existing coreset constructions are often based on non-uniform sampling which is not data-oblivious, and this inherently requires to read the entire dataset.
This also makes it more difficult to efficiently implement in practice due to the sophisticated sampling procedure.
Hence, in order to achieve a sublinear query complexity, one must consider  other methods than the coreset.

To this end, we consider uniform sampling as a natural alternative data reduction approach for clustering.
Clearly, uniform sampling is data-oblivious and hence has a great potential to achieve sublinear query complexity.
Moreover, it often yields near-optimal solutions with only a few samples in practice, as was demonstrated 
by various experiments on real datasets in recent works on coresets for variants of clustering where uniform sampling is considered as a baseline (e.g.,~\cite{DBLP:conf/nips/MaromF19,DBLP:conf/icml/JubranTMF20,DBLP:conf/icml/BakerBHJK020,DBLP:conf/nips/BravermanJKW21, HJLW23}),
even though it is known that uniform sampling cannot yield a coreset in the worst case.

Thus, the focus of this paper is to understand the sampling complexity of uniform sampling for \kMedian and to justify its performance in practice.

\subsection{Our Results}
We first give a hardness result (\Cref{thm:intro_lb}) that even for $k = 2$ and 1D line, any algorithm, not only the uniform sampling, must make $\Omega(n)$ queries to the identity of points (which is the coordinate in 1D), in order to be $O(1)$-approximate to \kMedian.
In addition, \Cref{thm:intro_lb} further states that the number of queries
must depend on a parameter $\beta \in (0, 1]$, called \emph{balancedness} (\Cref{def:balance}), which is a property of the dataset.
Intuitively, $\beta$ measures how balance the optimal solution is, and precisely,
it requires the size of the smallest cluster in an optimal solution is at least $\beta$ times of the average cluster size $\frac{|X|}{k}$.

\begin{definition}[Balancedness]
    \label{def:balance}
    Given a dataset $X \subseteq \calX$, the balancedness $\beta \in (0, 1]$ of $X$ is the smallest number such that there is an optimal solution of \kMedian on $X$ satisfying that every cluster\footnote{For a center set $C = \{c_i\}_{i=1}^k$, each $c_i$ defines a cluster $X_i \subseteq X$ that consists of points whose nearest neighbor in $C$ is $c_i$.} has at least $\frac{\beta|X|}{k}$ points.
\end{definition}
The same notion of balancedness was considered in~\cite{DBLP:journals/ml/MeyersonOP04} which also studied the complexity of uniform sampling for \kMedian but achieved a weaker bound (which we will discuss later).
This balancedness was also enforced as a constraint to clustering problems~\cite{bradley2000constrained}, and more generally in capacitated clustering~\cite{DBLP:journals/jcss/CharikarGTS02} and fair clustering~\cite{DBLP:conf/nips/Chierichetti0LV17}.

\begin{theorem}
    \label{thm:intro_lb}
Any $O(1)$-approximate algorithm for \twoMedian with success probability at least $3/4$ must query the identify of data points in $X$ for $\Omega(1/\beta)$ times, where $\beta$ is the balancedness of the dataset $X$,
    even if the queried points have free access to the distance function.
\end{theorem}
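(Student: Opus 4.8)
The plan is an indistinguishability argument combined with Yao's minimax principle. Since $1/\beta = \Theta(1)$ whenever $\beta = \Omega(1)$, it suffices to treat the case $\beta \le 1/3$ and to exhibit a distribution over datasets $X \subseteq \RR$ of balancedness essentially $\beta$ on which \emph{every} deterministic algorithm making $q = o(1/\beta)$ queries fails with probability $> 1/4$; the randomized lower bound then follows by averaging over the algorithm's internal coins. I would use a ``heavy clump plus hidden light clump'' family. Fix $T := \lceil 1/\beta \rceil \ge 3$, a tiny $\gamma := n^{-3}$, and $s := \lceil \beta n/2 \rceil$. In instance $\mathcal{I}_i$ (for $i \in [T]$) the dataset has $n - s$ \emph{heavy} points forming a clump of diameter $<\gamma n$ at the origin (placed at $0, \gamma, 2\gamma, \dots$) and $s$ \emph{light} points forming a clump of diameter $<\gamma n$ around the location $\ell_i := i$; the hard distribution draws $i$ uniformly from $[T]$ \emph{and}, independently, draws the set of $s$ point-\emph{indices} that play the role of the light clump uniformly at random among all $s$-subsets of $[n]$. (The $\gamma$-perturbation only keeps points distinct and $\OPT$ positive; morally there are $n-s$ coincident points at $0$ and $s$ coincident points at $\ell_i$.)

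I would first record two structural facts. \emph{(a)} Any $2$-median solution that devotes no center to the light clump costs $\Omega(\beta n)$ (light points uncovered) or $\Omega(n)$ (heavy points uncovered), whereas putting one center at each clump's $1$-median costs $O(\gamma n^2) = O(1/n)$; hence the unique optimal partition of $\mathcal{I}_i$ is $\{\mathrm{heavy}, \mathrm{light}\}$, its clusters have sizes $n-s$ and $s$, and the balancedness of $X$ equals $2s/n \in [\beta, \beta + 2/n]$. \emph{(b)} An $O(1)$-approximate center set $C$ has $\cost(X, C) = O(1/n)$, which forces $C$ to contain a center within distance $\tfrac14$ of $0$ (else the $n-s$ heavy points alone contribute $\Omega(n)$) and a center within distance $\tfrac14$ of $\ell_i$ (else the $s$ light points alone contribute $\Omega(\beta n)$). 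As $|C| = 2$ and the intervals $[-\tfrac14, \tfrac14], [\ell_1 - \tfrac14, \ell_1 + \tfrac14], \dots, [\ell_T - \tfrac14, \ell_T + \tfrac14]$ are pairwise disjoint, the center near the origin cannot also serve a light location; consequently a given $C$ is $O(1)$-approximate for $\mathcal{I}_i$ for \emph{at most one} index $i \in [T]$.

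The heart of the proof is the indistinguishability step. Let $E$ be the event that none of the algorithm's $q$ queries lands on a light point. Conditioned on $E$, every coordinate returned to the algorithm belongs to the heavy clump, which is literally the same object across all $T$ instances; and since the light index-set is a uniformly random $s$-subset, conditioning on ``the first $t$ queries all missed the light clump'' leaves that subset uniform among the $s$-subsets of the $n - t$ as-yet-unqueried indices — so adaptivity reveals nothing about which points form the light clump or where it sits. Therefore, conditioned on $E$, the algorithm's transcript, and hence its output $C$, is independent of $i$. A direct calculation conditioning on successive misses gives $\Pr[\overline{E}] \le \sum_{t=1}^{q} \frac{s}{n - t + 1} \le \frac{qs}{n - q} \le \tfrac12$ as soon as $q \le \tfrac{1}{4\beta}$ and $n$ is sufficiently large. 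Combining with fact \emph{(b)}: conditioned on $E$, the (possibly randomized) output $C$ is independent of the uniform $i \in [T]$ and is correct for at most one value of $i$, so $\Pr[\,\text{success} \mid E\,] \le 1/T \le 1/3$, whence the failure probability is at least $\Pr[E]\,(1 - 1/T) \ge \tfrac12 \cdot \tfrac23 > \tfrac14$, contradicting the assumed success probability $3/4$. Thus $q = \Omega(1/\beta)$.

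The main obstacle I anticipate is making the indistinguishability claim fully rigorous \emph{against adaptive algorithms with free distance access}: one must argue that as long as no light point has been queried, both the identity (which indices) and the position ($\ell_i$) of the light clump remain uniformly random from the algorithm's perspective, so no output can correlate with $i$ — this is precisely what upgrades the naive observation ``a size-$o(1/\beta)$ sample misses a $\tfrac{\beta n}{2}$-point clump with constant probability'' into a lower bound that also rules out coreset-style or any other clever strategy. The remaining work — fixing $\gamma$ and the unit separations so that all the ``$\Omega(\cdot)$ versus $O(1/n)$'' inequalities hold simultaneously, and checking that the construction's balancedness is essentially $\beta$ — is routine.
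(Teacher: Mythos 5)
Your proposal is correct and follows essentially the same route as the paper's proof: Yao's principle reducing to deterministic algorithms against a hard distribution consisting of a large clump at the origin plus a hidden clump of $\approx \beta n/2$ points at one of several well-separated locations, with the observation that $o(1/\beta)$ queries miss the small clump with constant probability, after which the output cannot identify its location and hence fails on a constant fraction of instances. The only differences are cosmetic: the paper uses just two candidate locations with exactly coincident points (so $\OPT=0$ and any $O(1)$-approximation must be exact), whereas you use $\lceil 1/\beta\rceil$ perturbed locations and argue via disjoint intervals, and you spell out the conditional-uniformity step for adaptive queries slightly more explicitly.
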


Intuitively, the hard instance in \Cref{thm:intro_lb} has
two clusters, one has only one point and the other has many points with a small diameter, and the two clusters are very far away.
Then the uniform sampling fails to pick any point from singleton cluster with high probability, which incurs a big error. The detailed proof can be found in \Cref{sec:proof_intro_lb}.

\Cref{thm:intro_lb} suggests that the balancedness $\beta$ of the dataset may be a fundamental parameter that determines the necessary size of uniform sampling.
In our main result, stated in \Cref{thm:intro_ub}, we give a nearly-matching upper bound (with respect to $\beta$) for \kMedian in Euclidean $\mathbb{R}^d$,
which helps to justify this parameter is indeed fundamental.
This bound breaks the $\Omega(n)$ query complexity barrier of coresets,
and it readily yields sublinear-time algorithms for \kMedian.
The theorem may also be interpreted as beyond worst case analysis for uniform sampling, where the parameter $\beta$ provides a refined description to the structure of the dataset.
\begin{theorem}[Informal version of \Cref{thm:dim_ind}]
    \label{thm:intro_ub}
    Given a dataset $X\subset \R^d$ with balancedness $\beta \in (0, 1]$, $\epsilon \in (0, 0.5)$ and integer $k\ge 1$, let $S$ be $\tilde O(\frac{k^2}{\beta\epsilon^3})$  uniform samples\footnote{Throughout, $\tilde O(f) = O(f \poly \log f)$.} from $X$,
    then with probability $0.9$, one can find a $(1+\epsilon)$-approximation for \kMedian on $X$ only using $S$.
\end{theorem}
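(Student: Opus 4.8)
The plan is to show that, with probability $0.9$, the re-scaled sample cost $\frac{|X|}{|S|}\cost(S,\cdot)$ acts like a \emph{one-sided coreset for near-optimal solutions}. Concretely, I would construct from $S$ alone a family $\candiC$ of candidate center $k$-tuples, and prove that (i) some $\hat C\in\candiC$ already satisfies both $\cost(X,\hat C)\le(1+\epsilon)\OPT$ and $\frac{|X|}{|S|}\cost(S,\hat C)\le(1+\epsilon)\OPT$; and (ii) for \emph{every} $C\in\candiC$ we have $\cost(X,C)\le(1+\epsilon)\,\frac{|X|}{|S|}\cost(S,C)+\epsilon\OPT$. Given (i)--(ii), the algorithm simply outputs the $C\in\candiC$ minimizing $\cost(S,C)$; chaining the two bounds gives a solution of cost at most $(1+\epsilon)\frac{|X|}{|S|}\cost(S,\hat C)+\epsilon\OPT\le(1+O(\epsilon))\OPT$ on $X$, and rescaling $\epsilon$ finishes. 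Note that both bounds needed from (i) and the bound in (ii) only ever ask for an \emph{upper} estimate of the cost of what we output, which will be crucial below.

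\emph{Step 1: hit and represent every optimal cluster.} Fix an optimal $C^{\star}=\{c_1^{\star},\dots,c_k^{\star}\}$ witnessing the balancedness, inducing clusters $X_1,\dots,X_k$ with $|X_i|\ge\beta|X|/k$. A Chernoff bound shows that $|S|=\tilde\Omega(k/(\beta\epsilon^{2}))$ uniform samples make $|S\cap X_i|=(1\pm\epsilon)\frac{|S|}{|X|}|X_i|$ for all $i$ simultaneously, so each cluster is hit $\Omega(|S|\beta/k)$ times; since at least half of $X_i$ lies within $2\OPT_i/|X_i|$ of $c_i^{\star}$ (Markov), $S\cap X_i$ also contains a ``representative'' point $r_i$ with $\dist(r_i,c_i^{\star})=O(\OPT_i/|X_i|)$ with high probability. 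Take $\candiC$ to be all $k$-tuples drawn from a polynomial-size approximate centroid set of $X$ (so that dimension does not enter); then the (suitably rounded) tuple $\{r_1,\dots,r_k\}$ serves as the $\hat C$ of (i), which follows from the triangle inequality since $\sum_i |X_i|\cdot O(\OPT_i/|X_i|)=O(\OPT)$ — wait, this only gives $O(1)$-approximation, so for the $(1+\epsilon)$ bound one refines $\hat C$ to the best tuple of the centroid set within $O(\OPT_i/|X_i|)$ of each $c_i^{\star}$, which still lies in $\candiC$.

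\emph{Step 2: per-cluster concentration for a fixed $C\in\candiC$.} Inside each $X_i$, split the points according to whether $\dist(x,C)$ is governed by $\dist(c_i^{\star},C)$ or by $\dist(x,c_i^{\star})$. The ``far'' points essentially all pay $\dist(c_i^{\star},C)$, so their total cost is $\approx|X_i|\dist(c_i^{\star},C)$, which $\frac{|X_i|}{|S\cap X_i|}\sum_{x\in S\cap X_i}$ reproduces within $(1\pm\epsilon)$ because $|S\cap X_i|$ is accurately preserved. The ``near'' points each cost $O(\OPT_i/|X_i|)$, so a Bernstein bound over the $\Omega(|S|\beta/k)$ samples in $X_i$ controls their contribution up to additive $\epsilon\OPT_i$. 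The only genuinely dangerous objects are \emph{heavy} points, i.e.\ those individually carrying a $\gtrsim\epsilon$-fraction of $\OPT_i$ or of $\cost(X_i,C)$: there are $O(1/\epsilon)$ of them per cluster, so the sample misses all of them with probability $1-O(|S|/(\epsilon|X|))$, and a missed heavy point only makes the sample \emph{under}estimate the cost — the harmless direction for the inequality in (ii) and for the upper bounds in (i). Summing over the $k$ clusters and union-bounding over $\candiC$ (whose log-cardinality is $\tilde O(k)$ once one replaces the naive $\log|\candiC|$ by a dimension-free bound on the shattering dimension of the $k$-Median cost function) proves (ii); bookkeeping the three $\epsilon$'s — cluster-size error, Bernstein error, centroid-set resolution — produces the $\epsilon^{-3}$, and multiplying the per-cluster requirement $\tilde O(k/(\beta\epsilon^{2}))$ by $\tilde O(k)$ gives the stated $\tilde O(k^{2}/(\beta\epsilon^{3}))$.

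\emph{Main obstacle.} The crux is precisely the heavy points combined with the requirement of dimension-independence. A textbook Hoeffding/coreset argument is hopeless here because a single point at distance $\Theta(\OPT)$ from $C$ can carry a constant fraction of $\cost(X,C)$, so $\frac{|X|}{|S|}\cost(S,C)$ need not concentrate around $\cost(X,C)$ in any two-sided sense; balancedness is what forces such points to be rare \emph{within their large optimal cluster} and hence unsampled with high probability, and the entire argument must be engineered so that the sampling guarantee is only ever used in the one direction (an upper bound on the cost of the returned solution) where missing a heavy point is a harmless discount rather than a fatal error. The second delicate point is that $\candiC$ depends on the random sample, so the union bound in Step 2 must effectively range over a whole family of possibilities rather than one fixed set; keeping its complexity at $\tilde O(k)$ — independent of $d$ and $n$ — is exactly where one must invoke a dimension-free complexity bound for $k$-Median (approximate centroid sets plus a $d$-free shattering-dimension estimate), and this is the ultimate source of the extra factor of $k$ in the sample-size bound.
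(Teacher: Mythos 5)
There is a genuine gap, and it sits exactly at your inequality (ii). Your soundness condition (ii) asks that for \emph{every} candidate $C\in\candiC$, $\cost(X,C)\le(1+\eps)\frac{|X|}{|S|}\cost(S,C)+\eps\OPT$, i.e.\ the sample cost must (up to small error) \emph{upper bound} the true cost of every candidate. A heavy point for $C$ that is missed by $S$ makes $\cost(S,C)$ an \emph{under}estimate of $\cost(X,C)$, which is precisely the event that violates (ii); it is the harmless direction only for the upper bounds in (i). So the one-sidedness trick does not dispose of the heavy points at all: a candidate $C$ with $\cost(X,C)\ge(1+\Omega(\eps))\OPT$ whose excess cost is carried by a few unsampled points would look cheap on $S$, be selected by your ``minimize $\cost(S,C)$'' rule, and be a bad output. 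Ruling this out is the crux, and the mechanism that does it in the paper is entirely missing from your argument: one must restrict attention to candidates that are balanced on $S$ (this is why the paper solves \kBHMedian on $S$ rather than plain \kMedian, cf.\ \Cref{lemma:kbmedian_lb}), show via balancedness that any such candidate is ``good'', meaning every per-point excess $|\dist(x,C)-\dist(x,C^\star)|$ is at most $O(\frac{k}{\beta n})\OPT_\beta(X)$ (\Cref{lem:balance_is_good}), and then apply Bernstein to the \emph{difference} vector $v^C_x=\dist(x,C)-\dist(x,C^\star)$ rather than to the raw costs (\Cref{lem:badsol}). Your per-cluster ``far/near'' dichotomy (every far point pays $\approx\dist(c_i^\star,C)$, every near point pays $O(\OPT_i/|X_i|)$) is false in general — a point can be far from $c_i^\star$ while $C$ is close to $c_i^\star$ — and those are exactly the points your argument cannot control.

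A second, related gap is the union bound. Your candidate family must be computable from $S$ alone (an approximate centroid set of $X$ requires reading $X$), so $\candiC$ is random and sample-dependent; you name this issue but offer no fix, whereas the paper resolves it by symmetrization and a reduction to a Gaussian process with a chaining argument over a covering of the cost vectors (\Cref{lem:reduce_to_Guassian}, \Cref{lem:chaining_argument}). Likewise, ``a dimension-free shattering bound of $\tilde O(k)$'' is asserted rather than proved; in the paper the dimension-independence for Euclidean space is obtained concretely by a terminal Johnson--Lindenstrauss embedding of $S\cup C^\star$ followed by discretizing balls of radius $O(\frac{k}{\beta n})\OPT_\beta(X)$ around the optimal centers (\Cref{lem:covering_euclidean}), which again leans on the goodness/balancedness bound. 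Without these ingredients the proposal does not establish \Cref{thm:intro_ub}; your Step 1 (hitting and representing each optimal cluster) is fine but only delivers the easy half of the argument.
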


As mentioned, the dependence of $\beta$ in \Cref{thm:intro_ub} is nearly optimal.
Furthermore, the dependence of $k$ and $\epsilon^{-1}$ is only low-degree polynomial, and it also matches the known coreset size bounds.
Another feature of our bound is that it does not have a dependence in the Euclidean dimension $d$, thus it is very useful for dealing with high dimensional and/or sparse datasets.

In addition to the Euclidean case, we show a more general version (\Cref{thm:dim_ind}) that relates the sampling complexity to a notion of covering number (\Cref{def:covering}) which measures the complexity of the underlying metric space $(\calX, \dist)$.
By bounding the mentioned covering number (see \Cref{sec:application}), we also obtain similar bounds of $\poly(k \epsilon^{-1} \beta^{-1})$ for various other metric spaces such as doubling metrics and shortest-path metrics of graphs with bounded treewidth.
For general finite metric spaces, we obtain a bound of $\poly(k \epsilon^{-1}\beta^{-1} \log |\calX|)$.

Compared with the notion of coreset~\cite{DBLP:conf/stoc/Har-PeledM04}, the uniform sample $S$ may be interpreted as a coreset in a \emph{weaker} sense.
Specifically, coresets usually require the clustering cost be preserved for \emph{all} center sets $C \subseteq \calX$, while in \Cref{thm:intro_ub} we only guarantee the cost on near-optimal solutions, which still suffices for finding good approximation efficiently by running existing algorithms on $S$.
Another important difference is that for coresets, the size does not need to have a dependence in $\beta$ which we have, but as mentioned earlier, this size bound of coreset cannot be realized by query-efficient algorithms, which the uniform sampling does.

Indeed, similar notions of ``weak coreset'' was previously studied in the literature,
but the focus was mostly on the Euclidean case, and our bounds for doubling metrics and graph metrics are new.
Even for Euclidean spaces, only the special case of $k = 1$ was studied,
and previous bounds either depend on $d$~\cite{DBLP:journals/ki/MunteanuS18}
or have a worse $\epsilon^{-4}$ dependence than our $\epsilon^{-3}$~\cite{DBLP:conf/nips/Cohen-AddadSS21,danos21} (noting that $\beta = 1$ if $k = 1$).
For general metrics, \cite{DBLP:journals/ml/MeyersonOP04} gave a very similar size bound as in ours (which also depends on $\beta^{-1}$), but it only achieves $O(1)$ error instead of our $\epsilon$.
Somewhat less related, \cite{DBLP:conf/soda/MishraOP01,DBLP:journals/ml/Ben-David07,DBLP:journals/rsa/CzumajS07} gave uniform sampling bounds for the additive error guarantee, which is incomparable to ours.

Finally, even though \Cref{thm:intro_ub} implies a small uniform sample $S$ suffices for a sublinear-time algorithm for \kMedian,
to actually find the near-optimal solution on the sample $S$ can be tricky.
A natural idea is to find the optimal \kMedian on $S$ as the approximate solution for $X$,
but we show that this does not work, even when the dataset is balanced.
In particular, if one uses the optimal \kMedian on $S$ as the approximate solution, then it still requires $\Omega(n)$ samples even for a balanced dataset (see \Cref{lemma:kbmedian_lb}).

This motivates us to consider a variant of \kMedian, called \kBMedian, which aims to find the optimal center set $C$ subject to the constraint that $C$ is $\beta$-balanced (i.e., every cluster induced by $C$ has size at least $\beta \frac{|X|}{k}$; see \Cref{def:balanced_center}).
The balancedness constraint in \kBMedian is intuitive, since if the dataset is already balanced, then by definition there must be a balanced optimal solution, which \kBMedian can find.
We show in \Cref{thm:dim_ind} (i.e., the full statement of \Cref{thm:intro_ub}) that an $\alpha$-approximate \kBMedian on $S$ is $O(\alpha(1 + \epsilon))$-approximate \kMedian on $X$ with constant probability.

\paragraph{Experiments}
Our experiments focus on validating the performance of running an algorithm for \kMedian (instead of \kBMedian) on top of the uniform $S$,
since it is arguably the most natural approach and is likely to be used in practice.
Our experiments were conducted on various real datasets of different types of metric spaces including Euclidean $\mathbb{R}^d$ and shortest-path metrics.
We find that the solution returned by the \kMedian algorithm is fairly balanced, which effectively enforces the balancedness constraint of \kBMedian.
Moreover, we find that these datasets are mostly balanced, especially when the number of clusters $k$ is small.
Even when $k$ is relatively large and the balancedness becomes worse,
the factor of $1 / \beta$ in our upper bound is still reasonably bounded (\textasciitilde $100$),
and we also observe that the practical performance is not significantly worse than that of the small-$k$ case, if at all.
All these findings, together with our \Cref{thm:intro_ub}, justifies the effectiveness of uniform sampling in real datasets.

\subsection{Technical Overview}
Our proof of \Cref{thm:intro_ub} builds on a structural lemma (\Cref{lem:badsol}), which states that if a center set $C$ is ``bad'', i.e., its cost is larger than $(1 + \epsilon) \OPT$,
then this badness carries on to the sample.
Specifically, let $C^\star$ be the optimal center set,
if a center set $C$ satisfies $\cost(X, C) \geq (1 + \epsilon) \cost(X, C^\star)$, 
then we have $\cost(S, C) - \cost(S, C^\star) \geq \frac{O(\epsilon) |S|}{n} \cost(X, C^\star)$ with probability $1 - \exp(-O(|S|))$ (the big O hides dependence in other parameters such as $k$ and $\epsilon$).
Intuitively, conditioning on this event, one can conclude that any ``good'' center found in $S$ is also good in $X$, which implies the main theorem.

The special case $k = 1$ of the structural lemma was proved in~\cite{DBLP:journals/ki/MunteanuS18}, but the analysis does not seem to generalize our case $k \geq 2$.
Specifically, to bound $\cost(S, C) - \cost(S, C^\star)$,
it suffices to bound $\dist(x, C) - \dist(x, C^\star)$ for $x \in S$,
and in~\cite{DBLP:journals/ki/MunteanuS18} they can use triangle inequality $\dist(x, C) - \dist(x, C^\star) \leq \dist(C, C^\star)$ since $|C| = |C^\star| = 1$,
and the remaining analysis is on $\dist(C, C^\star)$ which does not depend on the variable $x$.
However, when $k \geq 2$ such a triangle inequality no longer holds,
and this forces us to use an alternative bound 
$\dist(x, C) - \dist(x, C^\star) \leq \max_{c\in C}\dist(c,C^\star)+\max_{c^\star\in C^\star}\dist(c^\star,C)$.
To analyze this, we crucially use a new observation of balancedness: if $C$ is a balanced center set, then $C$ and $C^\star$ must be ``close'' which depends on $1 / \beta$ (see \Cref{def:good} and \Cref{lem:balance_is_good}). This implies that both $\max_{c\in C} \dist(c, C^\star)$ and $\max_{c^\star \in C^\star} \dist(c^\star, C)$ are small enough,
and this bound eventually allows us to apply a concentration inequality to bound $\cost(S, C) - \cost(S, C^\star)$.
Indeed, the use of balancedness is a fundamental difference to~\cite{DBLP:journals/ki/MunteanuS18}.

Once the structural lemma is established, a natural next step is to apply it with a union bound on all centers that are close to $C^*$.
While these centers can still be infinitely many, one can apply standard discretization techniques, such as $\rho$-nets in $\mathbb{R}^d$, to reduce the number of events in the union bound.

\paragraph{Removing Dependence on $d$}
However, for Euclidean $\mathbb{R}^d$, a naive application of the net argument
only leads to a size bound that depends on $d$.
To remove the dependence on $d$ for the Euclidean case, we need a better union bound.
To this end, we use an alternative interpretation of the structural lemma.
In particular, we change the ``variable'' in the lemma from the center set $C$ to a vector $v^C := (\dist(x, C) - \dist(x, C^\star))_{x \in X} \in \mathbb{R}^X$ which represents all the relevant costs that $C$ induce.
Then the structural lemma is equivalently stated as: if some $v \in \mathbb{R}^X$ satisfies $\|v\|_1 \geq \epsilon \cdot \cost(X, C^\star)$,
then $\|v|_{S}\|_1 \geq \frac{O(\epsilon) |S|}{n} \cost(X, C^\star)$ with high probability,
where $v|_{S}$ means restricting $v$ to the uniform sample $S$.
Now, we try to apply the union bound on a discretization of the cost vectors $\{ v^C \}_{C}$, instead of the space of all center points $C$.

Specifically, for a fixed sample $S$, we need to find a discretized set $U$ such that for every $v$, $U$ contains a vector $v'$ with $\|v|_S\|_1 \approx \|v'|_S\|_1$ (recalling that we do not need to preserve $\|v\|_1$).
Note that this $U$ needs not be a subset of $\{v^C\}_C$. In other words, a vector $u \in U$ may not correspond to a center set $C \subset \calX$,
and they can be any real vector in $\mathbb{R}^X$.
This turns out to be great freedom compared with discretization of center sets.
Indeed, for Euclidean $\mathbb{R}^d$, to preserve $\|v|_S\|$,
we can map $S \cup C^\star$  into a low dimensional space of dimension $d' := O(\log(|S \cup C^\star|) / \epsilon^2)$ using a terminal version of the Johnson-Lindenstrauss transform~\cite{DBLP:conf/stoc/NarayananN19},
and discretize only in this lower dimensional space.
This removes the dependence in $d$ and replaces it with $d'$.
In addition, we use a chaining argument~\cite{talagrand1996majorizing} 
which was also used in several recent works about coresets~\cite{DBLP:conf/nips/Cohen-AddadSS21,DBLP:conf/stoc/Cohen-AddadLSS22, DBLP:journals/corr/abs-2211-11923},
to further save an $\epsilon^{-1}$ factor and obtain $\epsilon^{-3}$ dependence in the final bound.
Compared with a closely related work~\cite{DBLP:conf/nips/Cohen-AddadSS21},
our chaining argument is applied on the entire $X$, while theirs is applied on $O(\epsilon^{-2})$ pieces of $X$ separately which results in an addition $\epsilon^{-1}$ factor more than ours.
Finally, we note that it may cause randomness issues since we assumed fixed sample $S$ before finding $U$. Luckily, we manage to fix this by relating it with a Gaussian process.

\paragraph{Beyond Euclidean Spaces}
In fact, the above argument is useful not only for removing the dependence on $d$ for Euclidean spaces,
but also for obtaining bounds for other metric spaces.
We show it suffices to bound the size of $U$ in order to obtain a bound for uniform sampling.
We formulate this minimum size of $U$ as the covering number (see \Cref{def:covering}),
and we derive covering number bounds for several types of metric spaces.
Previously, A similar notion of covering number as well as its use to bound the coreset size 
was also considered in the coreset literature~\cite{DBLP:conf/stoc/Cohen-AddadLSS22, DBLP:journals/corr/abs-2211-11923}.
We give a more detailed comparison in \Cref{sec:covering}.

\subsection{Related Work}

Stemming from~\cite{DBLP:conf/stoc/Har-PeledM04},
the study of size bounds for coresets has been very fruitful.
For \kMedian in Euclidean $\mathbb{R}^d$, a series of works improves the size from $O(\poly(k) \epsilon^{-d} \log n)$ all the way to $\poly(k\epsilon^{-1})$~\cite{DBLP:conf/stoc/Har-PeledM04,DBLP:journals/dcg/Har-PeledK07,DBLP:journals/siamcomp/Chen09,DBLP:conf/stoc/FeldmanL11,DBLP:journals/siamcomp/FeldmanSS20,DBLP:conf/focs/SohlerW18,DBLP:conf/nips/Cohen-AddadSS21,DBLP:conf/stoc/Cohen-AddadLSS22}.
Recent works focus on deriving tight bounds for $k$ and $\epsilon^{-1}$.
The state-of-the-art coresets for \kMedian in $\R^d$ achieves a size of $\tilde O(\min\{k \eps^{-3},k^{\frac{4}{3}}\epsilon^{-2}\})$~\cite{DBLP:conf/stoc/Cohen-AddadLSS22,DBLP:journals/corr/abs-2211-11923}, which nearly matches a lower bound of $\Omega(k\epsilon^{-2})$~\cite{DBLP:conf/stoc/HuangV20,DBLP:conf/stoc/Cohen-AddadLSS22}.
Beyond Euclidean spaces,  coresets for \kMedian were obtained in doubling metrics~\cite{DBLP:conf/focs/HuangJLW18}, shortest-path metrics of graphs with bounded treewidth~\cite{DBLP:conf/icml/BakerBHJK020} and graphs that exclude a fixed minor~\cite{DBLP:conf/soda/BravermanJKW21}.
In addition, coresets for variants of clustering have also been studied, notably capacitated clustering and the highly related fair clustering~\cite{DBLP:conf/waoa/0001SS19, DBLP:conf/nips/HuangJV19, DBLP:conf/icalp/BandyapadhyayFS21, DBLP:conf/focs/BravermanCJKST022}, and
robust clustering~\cite{DBLP:conf/soda/FeldmanS12,HJLW23}.

 \section{Preliminaries}
\label{sec:pre}

We define \kBMedian problem in \Cref{def:kBMedian} which depends on the notion of balanced center sets (\Cref{def:balanced_center}).
The notion of weak coreset (\Cref{def:weak_coreset}) captures the main guarantee of \Cref{thm:dim_ind}.

\begin{definition}[Balanced Center Set]
	\label{def:balanced_center}
	Given a dataset $X\subseteq \calX$, an integer $k\geq 1$ and $\beta\in (0,1)$, we say a center set $C\subseteq \calX$ is $\beta$-balanced if for every cluster $X_i$ ($i\in [k]$) induced by $C$ contains at least $\beta |X|/k$ points.
Let $\C_\beta(X)$ denote the collection of all $\beta$-balanced center sets on $X$. 
\end{definition}

Recall that if the balancedness of $X$ is $\beta$, there exists an optimal solution of \kMedian on $X$ that is $\beta$-balanced.
Then we naturally define the following \kBMedian problem that aims to find optimal solutions within $C_\beta(X)$.

\begin{definition}[\kBMedian]
	\label{def:kBMedian}
	Given a dataset $X\subseteq \calX$, an integer $k\geq 1$ and $\beta\in (0,1]$, the goal of the \kBMedian problem is to find a $k$-point set $C\in\C_\beta(X)$ that minimizes $\cost(X,C)$.
Let $\OPT_{\beta}(X):=\cost(X,C^\star)$ where $C^\star$ is an optimal solution for \kBMedian on $X$.
\end{definition}

Again, if the balancedness of $X$ is $\beta$, solving \kBMedian also solves \kMedian and $\OPT_{\beta}(X)$ equals the optimal \kMedian cost on $X$.
A similar notion of \kBMedian also appears in the literature, e.g.,~\cite{bradley2000constrained, DBLP:conf/sspr/MalinenF14, DBLP:journals/isci/CostaAM17, DBLP:conf/ijcai/LinHX19, DBLP:journals/tcs/Ding20}. 
The main difference is that they allow points being assigned to a non-closest center for achieving a balanced dataset partition, and hence, consider all possible $k$ points sets instead of $\C_{\beta}(X)$.

\begin{definition}[Weak Coreset for \kBMedian]
	\label{def:weak_coreset}
	Given a dataset $X\subseteq\calX$, an integer $k\geq 1$ and $\beta,\eps \in (0,1]$, an $\eps$-weak coreset for \kBMedian on $X$ is a subset $S\subseteq X$ such that for every $k$-point set $C\in\C_{\beta/2}(S)$ with
	$
	\sum_{x\in S}\dist(x, C)\le (1+\epsilon)\OPT_{\beta/2}(S),
	$
	it holds that
	$
	\sum_{x\in X}\dist(x,C)\le (1+O(\epsilon))\OPT_{\beta}(X).
	$
\end{definition}

Intuitively, the above definition requires that any near-optimal solution for \kBHMedian on a weak coreset $S$ is a near-optimal solution for \kBMedian on $X$.
Consequently, solving \kBHMedian on $S$ leads to a near-optimal solution for \kMedian on $X$ if the balancedness of $X$ is $\beta$.
The points are unweighted in our weak coreset, as opposed to the weighted points considered in (strong) coresets~\cite{DBLP:conf/stoc/Har-PeledM04}.
This is natural since we consider uniform sampling.

Note that an optimal solution $C^\star$ for \kBMedian on $X$ may not be $\beta$-balanced on $S$ due to the small size of $S$, and hence, we consider a relaxed balancedness $\beta/2$ instead of $\beta$ for $S$ such that the considered collection $\C_{\beta/2}(S)$ is likely to include $C^\star$.

 \section{Uniform Sampling Yields Weak Coreset}
\label{sec:algorithm}

\begin{theorem}[Main Theorem]
    \label{thm:dim_ind}
    Let $(\calX,\dist)$ be a metric space and $X\subseteq \calX$ be a dataset. 
Given an integer $k\ge 1$ and real numbers $\beta\in(0,1],\epsilon\in (0,0.5)$, let integer $m$ satisfy that
    \begin{equation}
        \label{eq:m_X}
        m\ge O\left(\frac{k}{\beta\eps^2}\left(\sum_{i=1}^{\log\eps^{-1}}\sqrt{2^{-i}\log N^{2^{-i}}_X(m)} \right)^2\right)
    \end{equation}
    where $N_X^{\alpha}(m)$ is the covering number defined in \Cref{def:covering}. Then, a set $S$ of $m$ uniform samples from $X$ is an $\epsilon$-weak coreset for \kBMedian on $X$ with probability at least $0.9$.
\end{theorem}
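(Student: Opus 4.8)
The plan is to deduce the weak-coreset property of \Cref{def:weak_coreset} from a \emph{uniform} version of the structural lemma \Cref{lem:badsol}, together with two easy concentration facts about the single optimum. Let $C^\star$ be an optimal solution for \kBMedian on $X$, so $\cost(X,C^\star)=\OPT_\beta(X)$. It suffices to show that, with probability at least $0.9$, the following hold for appropriate absolute constants $c<c'$: (a) $C^\star\in\C_{\beta/2}(S)$ and $(1-c\epsilon)\tfrac{m}{n}\OPT_\beta(X)\le\cost(S,C^\star)\le(1+c\epsilon)\tfrac{m}{n}\OPT_\beta(X)$; and (b) every $C\in\C_{\beta/2}(S)$ with $\cost(X,C)>(1+c'\epsilon)\OPT_\beta(X)$ satisfies $\cost(S,C)-\cost(S,C^\star)>c'\epsilon\tfrac{m}{n}\OPT_\beta(X)$. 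Granting these: if $C\in\C_{\beta/2}(S)$ has $\cost(S,C)\le(1+\epsilon)\OPT_{\beta/2}(S)$, then $\OPT_{\beta/2}(S)\le\cost(S,C^\star)$ by (a), so $\cost(S,C)-\cost(S,C^\star)\le\epsilon\cost(S,C^\star)\le 2\epsilon\tfrac{m}{n}\OPT_\beta(X)<c'\epsilon\tfrac{m}{n}\OPT_\beta(X)$, and the contrapositive of (b) gives $\cost(X,C)\le(1+c'\epsilon)\OPT_\beta(X)=(1+O(\epsilon))\OPT_\beta(X)$, exactly as \Cref{def:weak_coreset} requires.

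Part (a) is routine, as it concerns a single, $S$-independent object: $\cost(S,C^\star)=\sum_{x\in S}\dist(x,C^\star)$ is a sum of $m$ i.i.d.\ bounded terms with mean $\tfrac{m}{n}\OPT_\beta(X)$, so a Bernstein/Chernoff bound gives the two-sided estimate once $m\gtrsim\epsilon^{-2}$, which \eqref{eq:m_X} guarantees; and since $C^\star$ is $\beta$-balanced on $X$, every cluster of $C^\star$ has at least $\beta n/k$ points, so a Chernoff bound together with a union bound over the $k$ clusters shows the sample keeps at least $\beta m/(2k)$ points of each, i.e.\ $C^\star\in\C_{\beta/2}(S)$, except with probability $o(1)$. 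The real work is (b). The first step is the balancedness observation (\Cref{def:good}, \Cref{lem:balance_is_good}): every $C\in\C_{\beta/2}(S)$ is ``close'' to $C^\star$, with the closeness radius controlled by $1/\beta$ (and $k$); this makes the per-point bound $\dist(x,C)-\dist(x,C^\star)\le\max_{c\in C}\dist(c,C^\star)+\max_{c^\star\in C^\star}\dist(c^\star,C)$ from \Cref{lem:badsol} uniformly small, and confines the family of cost vectors $v^C:=(\dist(x,C)-\dist(x,C^\star))_{x\in X}\in\RR^X$ to be analyzed to a set of bounded ``size''. Writing $\Delta(C):=\sum_{x\in S}v^C_x-\tfrac{m}{n}\sum_{x\in X}v^C_x$, we have $\E[\Delta(C)]=0$ and $\cost(S,C)-\cost(S,C^\star)=\tfrac{m}{n}\bigl(\cost(X,C)-\cost(X,C^\star)\bigr)+\Delta(C)$, so (b) reduces to showing $\sup_{C\in\C_{\beta/2}(S)}|\Delta(C)|\le c'\epsilon\tfrac{m}{n}\OPT_\beta(X)$ with the claimed probability.

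The remaining task is this uniform bound over the (still infinite) family $\{v^C\}$. I would treat $\sup_C|\Delta(C)|$ as the supremum of an empirical process indexed by $\{v^C\}$; by symmetrization it is comparable up to constants to the expected supremum of the associated Rademacher/Gaussian process, and passing to the Gaussian process is exactly what resolves the subtlety that a natural net for the cost vectors \emph{depends on the random sample $S$} — one compares the process with its $S$-dependent net approximations through the Gaussian process rather than fixing a net in advance. On the Gaussian process one runs a generic-chaining / Dudley-type argument over dyadic scales $\alpha=2^{-i}$, $i=1,\dots,\log\epsilon^{-1}$ (the chain is truncated at scale $\epsilon$, since cost vectors of $\ell_1$-mass below $\epsilon\,\OPT_\beta(X)$ contribute only $O(\epsilon)\OPT_\beta(X)$), where at scale $2^{-i}$ each $v^C$ is rounded to one of at most $N_X^{2^{-i}}(m)$ representatives (precisely \Cref{def:covering}) and the chaining increments between consecutive scales are sub-Gaussian with variance proxy of order $2^{-i}(\tfrac{m}{n}\OPT_\beta(X))^2\cdot\tfrac{k}{\beta m}$ after normalization, the $k/\beta$ coming from the closeness radius. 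Summing the chain yields an expectation bound of order $\tfrac{m}{n}\OPT_\beta(X)\cdot\sqrt{\tfrac{k}{\beta m}}\sum_{i=1}^{\log\epsilon^{-1}}\sqrt{2^{-i}\log N_X^{2^{-i}}(m)}$, which is $O(\epsilon)\tfrac{m}{n}\OPT_\beta(X)$ precisely when $m$ satisfies \eqref{eq:m_X}; a Markov step then upgrades this to a probability-$0.9$ event, and combining with (a) completes the proof. The main obstacle is this last uniform bound: making the chaining quantitatively match \eqref{eq:m_X} — in particular obtaining the claimed (roughly $\epsilon^{-3}$) dependence rather than a naive $\epsilon^{-4}$ by running a single chaining argument over all of $X$ instead of over $O(\epsilon^{-2})$ pieces separately, and discharging the $S$-dependence of the discretization cleanly via the Gaussian-process comparison. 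By contrast, part (a), the balancedness reduction, and the deduction of \Cref{def:weak_coreset} from (a) and (b) are comparatively mechanical.
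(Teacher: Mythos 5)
Your overall architecture is the paper's: an event about $C^\star$ on the sample, a balancedness-implies-closeness step, and a uniform deviation bound proved by symmetrization, passage to a Gaussian process, and dyadic chaining against the coverings of \Cref{def:covering} with exactly the variance proxy and the truncation at scale $\eps$ that the paper uses (\Cref{lem:reduce_to_Guassian,lem:chaining_argument}). However, two of your stated steps are wrong as written. First, part (a): the two-sided bound $\cost(S,C^\star)\in(1\pm c\eps)\frac{m}{n}\OPT_\beta(X)$ does \emph{not} follow from Bernstein once $m\gtrsim\eps^{-2}$, because a single point may carry a constant fraction of $\OPT_\beta(X)$, so relative $\eps$-concentration can require $m=\Omega(n)$. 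The paper only claims a one-sided constant-factor bound $\cost(S,C^\star)\le\lambda\frac{m}{n}\OPT_\beta(X)$ via Markov (\Cref{lem:Pr_xi_S}), and that is all your own deduction actually needs (together with $C^\star\in\C_{\beta/2}(S)$), so this is repairable by weakening the claim, but the step as stated fails.

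The more substantive gap is in (b). You assert, citing \Cref{lem:balance_is_good}, that \emph{every} $C\in\C_{\beta/2}(S)$ is close to $C^\star$, and you reduce (b) to $\sup_{C\in\C_{\beta/2}(S)}|\Delta(C)|\le c'\eps\frac{m}{n}\OPT_\beta(X)$. Both claims are false: \Cref{lem:balance_is_good} applies only to sets in $\C^{(\eps)}_{\beta/2}(S)$, i.e.\ sets that are \emph{near-optimal on $S$} in addition to $\beta/2$-balanced on $S$ (its proof bounds $\sum_{x\in S}\dist(x,C)$ by $(1+\eps)\sum_{x\in S}\dist(x,C^\star)$), and a merely balanced-on-$S$ center set can be arbitrarily far from $C^\star$, making $\|v^C\|_\infty$ unbounded over $\C_{\beta/2}(S)$ and the claimed sup bound false. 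Moreover, taking the index family of the empirical process to be the $S$-dependent class $\C_{\beta/2}(S)$ creates a circularity that the ghost-sample symmetrization does not automatically absolve. The missing device is precisely the paper's \Cref{def:good}: use near-optimality on $S$ (which your final deduction invokes anyway) plus balancedness to place the relevant centers in the $S$-independent family $\C'(X)$, whose defining bounds are stated relative to $X$ and $\OPT_\beta(X)$, and then run the per-center Bernstein bound (\Cref{lem:badsol}), symmetrization, and the chaining over $\C'(X)\cap\overline{\C}_\beta^{(O(\eps))}(X)$ only; with that replacement (and the covering-error bound of \Cref{lem:bounderror} conditioned on the event \eqref{eq:xi_S}), your outline matches the paper's proof.
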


The factor $\sum_{i=1}^{\log\eps^{-1}}\sqrt{2^{-i}\log N^{2^{-i}}_X(m)}$ plays a similar role as the entropy integral (or Dudley integral) that are commonly used in
the chaining argument (see e.g., Corollary 5.25 of~\cite{Handel2014ProbabilityIH}).
A more concise sufficient condition for \eqref{eq:m_X} is
\begin{equation}
    \label{eq:simple_m_X}
    m\ge O\left(\frac{k}{\beta\eps^2}\cdot \log N_X^{\eps}(m) \right)
\end{equation}
which is derived directly by the monotonicity of the covering number, i.e., $N^{2^{-i}}_X(m)\le N^{\eps}_X(m)$ for all $i\le \log\eps^{-1}$. 
However, we still need to use ~\eqref{eq:m_X} in order to obtain better sample complexity, especially for Euclidean spaces. We give bounds for this term for various metrics in \Cref{sec:application}.

\paragraph{Proof Overview}
To utilize the balancedness, we consider \emph{good center sets} $\C'(X)$ (\Cref{def:good}) as a collection of center sets $C$ that are ``close'' to $C^\star$. 
We show that any near-optimal center set for $(k,\beta/2)$-Median on $S$ is likely to be good (\Cref{lem:balance_is_good}).
Then it suffices to show that all center sets $C\in \C'(X)$ with $\cost(X,C) \geq (1 + O(\eps))\OPT_\beta(X)$, denoted as $\C^\mathrm{bad}(X)$, are likely to have a large cost on $S$, i.e., $\cost(S,C) > (1+\eps) \OPT_{\beta/2}(S)$. 
In other words, we need a uniform convergence guarantee on $\C^\mathrm{bad}(X)$.
To this end, we need to bound the ``complexity'' of $\C^\mathrm{bad}(X)$,
by considering the notion of \emph{covering}, which may be viewed as a set of representatives, and the \emph{covering number} which measures the complexity of the covering (\Cref{def:covering}).
In the last steps of the proof (\Cref{sec:mainproof}), we reduce the above requirement for $\C^\mathrm{bad}(X)$ to a Gaussian process and applies a chaining argument based on the covering.

\subsection{Good Event and Good Center Sets}
\label{sec:candi}

We first introduce some useful notations.
Let $\lambda>1000$ be a constant throughout this section. 
For a center set $C\subseteq \calX$ and $x\in \calX$, denote by $C(x)=\arg\min_{c\in C}\dist(c,x)$ the closest center in $C$ to $x$ (breaking ties arbitrarily). 
Let $C^\star$ denote an optimal center set of $X$ for \kBMedian. 
For a subset $A\subseteq X$, and real numbers $\eta \in (0,1),\alpha>0$,  denote by 
$
    \C_{\eta}^{(\alpha)}(A):=\{C\in \C_{\eta}(A):\cost(A,C) \le (1+\alpha)\OPT_{\eta}(A)\}    
$
the set of all $(1+\alpha)$-approximate $k$-point set for \keMedian on $A$, and let $\overline{\C}_{\eta}^{(\alpha)}(A)=\C_{\eta}(A)\setminus \C_{\eta}^{(\alpha)}(A)$. 
By definition, we know that $S$ is an $\eps$-coreset weak coreset for \kBMedian on $X$ if and only if
\begin{equation}
    \label{eqn:iff}
    \C_{\beta/2}^{(\eps)}(S)\cap \overline{\C}_{\beta}^{(O(\eps))}(X) = \emptyset.
\end{equation}

Let $\mathcal{P}^\star=\{X_1^\star,...,X_k^\star\}$ be the partition of $X$ induced by $C^\star$. 
We denote by $\xi_S$ the event that
\begin{equation}
    \label{eq:xi_S}
    \begin{aligned}
    &\frac{1}{m}\cost(S,C^\star)\le \lambda\cdot\frac{1}{n}\OPT_\beta(X)\quad
    \wedge\quad\forall i\in[k], \frac{|S\cap X_i^\star|}{|S|}\in (1\pm\frac{1}{2})\frac{|X_i^\star|}{|X|},
    \end{aligned}
\end{equation}
where the first condition requires that the average \kMedian cost of $S$ to $C^\star$ is not too large compared to that of $X$, and the second condition requires that the ratio of sampled points in every cluster $X_i^\star$ is close to the underlying one.
The following lemma claims that $\xi_S$ happens with high probability, and hence, we can condition on $\xi_S$ in the analysis.

\begin{lemma}
    \label{lem:Pr_xi_S}
    $\xi_S$ happens with probability at least 0.99.
\end{lemma}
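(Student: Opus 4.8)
The event $\xi_S$ is a conjunction of two sub-events, so the plan is to show each holds with probability at least $1 - 0.005$ and then take a union bound. For the second sub-event, fix $i \in [k]$. Since $S$ consists of $m$ uniform samples from $X$, the count $|S \cap X_i^\star|$ is a sum of $m$ i.i.d.\ indicator random variables, each equal to $1$ with probability $p_i := |X_i^\star|/|X|$, so $\E[|S \cap X_i^\star|] = m p_i$. Because $C^\star$ is $\beta$-balanced, we have $p_i \ge \beta/k$, hence $\E[|S \cap X_i^\star|] \ge m\beta/k$, which by the lower bound \eqref{eq:m_X} on $m$ is at least a large constant (in fact polynomially large in the relevant parameters). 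A multiplicative Chernoff bound then gives $\Pr\big[ |S \cap X_i^\star| \notin (1 \pm \tfrac12) m p_i \big] \le 2\exp(-\Omega(m p_i)) \le 2\exp(-\Omega(m\beta/k))$; making the constant in \eqref{eq:m_X} large enough, this is at most $0.005/(2k)$, and a union bound over $i \in [k]$ controls the second sub-event.

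For the first sub-event, $\frac1m \cost(S, C^\star) = \frac1m \sum_{x \in S} \dist(x, C^\star)$ is the empirical average of $m$ i.i.d.\ draws of $\dist(x, C^\star)$ where $x$ is uniform in $X$, whose expectation is exactly $\frac1n \cost(X, C^\star) = \frac1n \OPT_\beta(X)$ (here $n = |X|$). So we want $\Pr\big[ \tfrac1m \cost(S, C^\star) > \lambda \cdot \tfrac1n \OPT_\beta(X) \big]$ to be small, i.e.\ we want the empirical mean not to exceed $\lambda$ times its expectation. Since $\lambda > 1000$ is a large constant, Markov's inequality applied to the nonnegative random variable $\tfrac1m \cost(S,C^\star)$ already yields $\Pr[\cdots] \le 1/\lambda < 0.001 < 0.005$, with no concentration needed.

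Combining, $\Pr[\neg \xi_S] \le \Pr[\text{first sub-event fails}] + \Pr[\text{second sub-event fails}] \le 1/\lambda + 0.005 \cdot \tfrac12 \cdot$ (adjusting constants) which is at most $0.01$, so $\Pr[\xi_S] \ge 0.99$. I do not anticipate a genuine obstacle here; the only thing to be careful about is making sure the constant hidden in the $O(\cdot)$ of \eqref{eq:m_X} is chosen large enough that $m\beta/k$ exceeds the threshold needed for the Chernoff bound to beat $0.005/(2k)$ — this is automatic since $m \ge O(k/(\beta\eps^2)) \cdot (\text{positive term})$ forces $m\beta/k = \Omega(1/\eps^2 \cdot \log k)$-type growth once the covering-number factor is accounted for, and in any case one can always absorb an extra $\log k$ into the $O(\cdot)$. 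The mild subtlety worth a sentence in the writeup is that the Markov step for the cost condition is genuinely lossy (it does not give exponential concentration), but that is fine because $\xi_S$ only asks for a crude constant-factor bound $\lambda$, which is all the downstream analysis in \Cref{sec:mainproof} uses.
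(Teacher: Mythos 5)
Your proposal is correct and follows essentially the same route as the paper: a multiplicative Chernoff bound (using $|X_i^\star|\ge \beta n/k$) plus a union bound over the $k$ clusters for the sample-fraction condition, Markov's inequality for the cost condition, and a final union bound. The only difference is bookkeeping of constants (the paper uses $2\exp(-\beta m/(12k))\le 0.001/k$ and failure probability $0.001$ for each part), which is immaterial.
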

\begin{proof}
	Since $C^\star$ is a $\beta$-balanced center set on $X$, we have $|X_i^\star|\ge \frac{\beta n}{k}$ for every $i\in[k]$. Recall that $S$ is a set of uniform samples, therefore by Chernoff bound, we have 
	\begin{equation*}
	\Pr\left[\frac{|S\cap X_i^\star|}{|S|}\not \in \left(1\pm \frac{1}{2}\right)\frac{|X_i^\star|}{|X|}\right]
	\le 2\exp\left(-\frac{\beta m}{12k}  \right)
	\le 0.001/k.
	\end{equation*}
	By the union bound, we have 
	\begin{equation*}
	\Pr\left[\forall i\in [k], \frac{|S\cap X_i^\star|}{|S|}\in \left(1\pm \frac{1}{2}\right)\frac{|X_i^\star|}{|X|}\right]\ge 0.999.
	\end{equation*}
	Also note that $\E_S[\cost(S,C^\star)]=\frac{m}{n}\OPT_\beta(X)$. 
Then by the Markov inequality, with probability at least $0.999$, we have $\cost(S,C^\star)\le \lambda\cdot\frac{m}{n}\OPT_\beta(X)$ since $\lambda>1000$. 
This completes the proof.
\end{proof}

\begin{definition}[Good Center Sets]
    \label{def:good}
We say a $k$-point set $C\subseteq \calX$ is \emph{good} if we have
    \begin{equation}
        \label{eq:b1}
        \frac{1}{n}\sum_{x\in X}\dist(C^\star(x),C)\le \frac{6\lambda}{n}\OPT_\beta(X),
    \end{equation}
    and for every $x\in X$,
    \begin{equation}
        \label{eq:b2}
        \begin{aligned}
        \left|\dist(x,C)-\dist(x,C^\star) \right| \le \frac{6\lambda k}{\beta n} \OPT_\beta(X).
        \end{aligned}
    \end{equation}
    Let 
    $
        \mathcal{C}'(X):=\{C\subseteq \calX:|C|=k, C\text{ satisfies \eqref{eq:b1},\eqref{eq:b2}} \}
    $
    denote the collection of all good center sets on $X$.
\end{definition}

Intuitively, we say a center set $C$ is good if it is ``close'' to $C^\star$. 
\eqref{eq:b1} means that the average distance from every $C^\star(x)$ to $C$
is not too large, and~\eqref{eq:b2} states that all distance differences $|\dist(x, C)-\dist(x, C^\star)|$ are small. 
We note that the definition of good center sets is independent of $S$, which is useful for the probability analysis on $S$.

The following lemma states that all $(1+\epsilon)$-approximate center sets for \kBHMedian on $S$ are good conditioning on $\xi_S$. 
Recall that we want to prove \Cref{eqn:iff}, hence by \Cref{lem:balance_is_good}, it remains to prove $\C_{\beta/2}^{(\eps)}(S)\cap \left(\C'(X)\cap \overline{\C}_{\beta}^{(O(\eps))}(X)\right) = \emptyset$, which is easier to handle due to good properties of $\C'(X)$.

\begin{lemma}
    \label{lem:balance_is_good}
     $\mathcal{C}_{\beta/2}^{(\epsilon)}(S)\subseteq \C'(X)$ holds conditioning on $\xi_S$.
\end{lemma}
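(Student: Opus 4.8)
The plan is to take an arbitrary $C \in \mathcal{C}_{\beta/2}^{(\epsilon)}(S)$ and verify the two defining conditions \eqref{eq:b1} and \eqref{eq:b2} of a good center set, working entirely under the conditioning on $\xi_S$. The intuition is that since $C$ is a near-optimal $(\beta/2)$-balanced solution on $S$, it cannot be too expensive on $S$, and via $\xi_S$ this transfers to a bound on $X$; moreover its balancedness forces every cluster of $C$ (and symmetrically of $C^\star$) to contain enough sampled points that $C$ and $C^\star$ must be geometrically close in the sense needed.

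First I would establish that $C^\star$ is a competitor for $(k,\beta/2)$-Median on $S$: because of the second clause of $\xi_S$, each $|S \cap X_i^\star| \ge \frac12 \cdot \frac{|X_i^\star|}{|X|}|S| \ge \frac{\beta m}{2k}$, so the partition of $S$ induced by $C^\star$ witnesses $C^\star \in \C_{\beta/2}(S)$, whence $\OPT_{\beta/2}(S) \le \cost(S,C^\star)$. Combined with the first clause of $\xi_S$ this gives $\cost(S, C) \le (1+\epsilon)\OPT_{\beta/2}(S) \le (1+\epsilon)\cost(S,C^\star) \le 2\lambda \frac{m}{n}\OPT_\beta(X)$. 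Next, for \eqref{eq:b2}: fix $x \in X$ and let $i$ be such that $x \in X_i^\star$. Since $|S \cap X_i^\star| \ge \frac{\beta m}{2k} \ge 1$, there is a sample point $y \in S \cap X_i^\star$, and two triangle inequalities give $|\dist(x,C)-\dist(x,C^\star)| \le \dist(x, C^\star(x)) + \dist(x, y) + \dist(y, C) + \dist(y, C^\star(y)) + \text{(small }C^\star\text{ terms)}$; more carefully one bounds $|\dist(x,C) - \dist(x,C^\star)| \le \dist(C^\star(x), C) $ plus a term controlled by $\dist(y,C) + \dist(x,y)$. Each single summand $\dist(y,C)$ is at most $\cost(S,C) \le 2\lambda\frac{m}{n}\OPT_\beta(X)$, and distances within $X_i^\star$ are bounded using that $\dist(x,C^\star) + \dist(y,C^\star)$ are individually at most $\OPT_\beta(X)$; after absorbing constants and the $\frac{k}{\beta}$ factor coming from $m/|S\cap X_i^\star|$-type accounting, this yields the bound $\frac{6\lambda k}{\beta n}\OPT_\beta(X)$ in \eqref{eq:b2}.

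For \eqref{eq:b1} I would write $\frac1n\sum_{x\in X}\dist(C^\star(x), C) = \frac1n\sum_{i=1}^k |X_i^\star|\,\dist(c_i^\star, C)$ and bound each $\dist(c_i^\star, C)$ by picking any sample $y \in S \cap X_i^\star$ (nonempty by $\xi_S$) and using $\dist(c_i^\star, C) \le \dist(c_i^\star, y) + \dist(y, C) \le \dist(y,C^\star) + \dist(y,C)$ — wait, more precisely $\dist(c_i^\star,y) = \dist(y, C^\star)$ since $y \in X_i^\star$. Summing, $\sum_i |X_i^\star| \dist(c_i^\star,C) \le \sum_i |X_i^\star|\bigl(\frac{1}{|S\cap X_i^\star|}\sum_{y \in S\cap X_i^\star}(\dist(y,C^\star)+\dist(y,C))\bigr)$, and using $|X_i^\star|/|S\cap X_i^\star| \le 2n/m$ from $\xi_S$ this is at most $\frac{2n}{m}\bigl(\cost(S,C^\star)+\cost(S,C)\bigr) \le \frac{2n}{m}\cdot 4\lambda \frac{m}{n}\OPT_\beta(X) = 8\lambda\,\OPT_\beta(X)$; tightening constants (e.g. using the sharper $(1+\tfrac12)$ rather than factor $2$, and that $(1+\epsilon)<1.5$) brings this under $6\lambda\,\OPT_\beta(X)$, giving \eqref{eq:b1} after dividing by $n$.

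The main obstacle I anticipate is the constant bookkeeping in \eqref{eq:b2}: the clean per-point bound requires carefully choosing which triangle inequality to route through (one must avoid a spurious $\max_i \Diam(X_i^\star)$ term, which need not be small) and correctly tracking the $\frac{k}{\beta}$ amplification that arises precisely because a single cluster $X_i^\star$ may contain as few as $\frac{\beta m}{2k}$ samples, so that controlling a single $\dist(y, C)$ by the total $\cost(S,C)$ is the only available handle. Getting all constants to fit inside the stated $6\lambda$ (rather than, say, $20\lambda$) may require using $\lambda > 1000$ slack generously and being slightly careful that the $(1+\epsilon)$ factors with $\epsilon < 0.5$ are absorbed. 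Everything else is a routine application of $\xi_S$ and the triangle inequality.
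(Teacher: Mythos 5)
Your setup and your treatment of \eqref{eq:b1} are sound and essentially the paper's: conditioning on $\xi_S$ gives $C^\star\in\C_{\beta/2}(S)$, hence $\cost(S,C)\le(1+\eps)\cost(S,C^\star)\le O(\lambda)\frac{m}{n}\OPT_\beta(X)$, and averaging $\dist(c_i^\star,C)\le\dist(c_i^\star,y)+\dist(y,C)$ over $y\in S\cap X_i^\star$ with $|X_i^\star|/|S\cap X_i^\star|\le 2n/m$ yields \eqref{eq:b1} with room to spare. The genuine gap is in \eqref{eq:b2}. The target there is a \emph{per-point} bound of $\frac{6\lambda k}{\beta n}\OPT_\beta(X)$, and your route through a sample point $y\in S\cap X_i^\star$ in the same $C^\star$-cluster as $x$ cannot reach it: the leftover intra-cluster term $\dist(x,y)$ is only controlled by $\dist(x,C^\star)+\dist(y,C^\star)\le 2\,\OPT_\beta(X)$, which is exactly the spurious diameter-type term you flagged and exceeds the target by a factor of order $\beta n/k$. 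Even the single-summand bound you invoke, $\dist(y,C)\le\cost(S,C)=O(\lambda\frac{m}{n})\OPT_\beta(X)$, already overshoots, since $m\gg k/\beta$. No constant bookkeeping repairs this; the obstruction is structural.

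The missing idea is the decomposition $|\dist(x,C)-\dist(x,C^\star)|\le\max\{\dist(C(x),C^\star),\,\dist(C^\star(x),C)\}$, which eliminates every term tied to the individual point $x$ and reduces \eqref{eq:b2} to two uniform bounds: on $\max_{c^\star\in C^\star}\dist(c^\star,C)$ and on $\max_{c\in C}\dist(c,C^\star)$. The first is handled by averaging over $S\cap X_i^\star$ as you do (using $\xi_S$). The second is where the hypothesis $C\in\C_{\beta/2}(S)$ must actually be used, and your detailed argument never uses it: near-optimality of $C$ on $S$ alone does not prevent some $c\in C$ from lying far from everything (it could serve no sample point), so one needs that each $c\in C$ has at least $\frac{\beta m}{2k}$ sample points $y$ with $C(y)=c$, whence $\dist(c,C^\star)\le\frac{2k}{\beta m}\sum_{y\in S}\bigl(\dist(y,C)+\dist(y,C^\star)\bigr)\le\frac{2k}{\beta}\cdot\frac{2+\eps}{m}\cost(S,C^\star)\le\frac{6\lambda k}{\beta n}\OPT_\beta(X)$ under $\xi_S$. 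This use of the balancedness of $C$ on $S$ to pin every center of $C$ near $C^\star$ is the crux of the lemma (and of the paper's approach for $k\ge 2$); as written, your proposal proves only the $\dist(C^\star(x),C)$ direction and leaves the $\dist(C(x),C^\star)$ direction unproved.
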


\begin{proof}
	Conditioning on $\xi_S$, we have $C^\star\in \C_{\beta/2}(S)$ and for every $C\in\C_{\beta/2}^{(\epsilon)}(S)$, it holds that
	\begin{equation}
	\label{eq:sumCstar}
	\sum_{x\in S}\dist(C^\star(x),C)\le  \sum_{x\in S}\left(\dist(x,C^\star)+\dist(x,C)\right)
	\le  (2+\epsilon)\sum_{x\in S}\dist(x,C^\star)
	\le  \frac{3\lambda m}{n}\OPT_\beta(X),
	\end{equation}
	where the first derivation is due to the triangle inequality, the second derivation is because $\sum_{x\in S}\dist(x,C)\le (1+\eps)\OPT_{\beta/2}(S)\le (1+\eps)\sum_{x\in S}\dist(x,C^\star)$, and the last derivation is due to \eqref{eq:xi_S}.
	
	We can rewrite $\sum_{x\in S}\dist(C^\star(x),C)$ as $\sum_{c^\star_i\in C^\star}\dist(c^\star_i,C)\cdot |S\cap X_i^\star|$, where $c^\star_i$ denotes the center of cluster $X_i^\star$. Therefore, we have by \eqref{eq:xi_S},
	\begin{equation*}
	\sum_{x\in S}\dist(C^\star(X),C)\ge \frac{m}{2n}\sum_{c^\star_i\in C^\star}\dist(c^\star_i,C)\cdot |X_i^\star|
	= \frac{m}{2n}\sum_{x\in X}\dist(C^\star(x),C),
	\end{equation*}
	which combining with \eqref{eq:sumCstar} completes the proof of \eqref{eq:b1}.
	
	To prove \eqref{eq:b2}, we observe that for every $x\in X$, it holds that
	\begin{equation*}
	\left|\dist(x,C)-\dist(x,C^\star) \right|
	\le \max\{\dist(C(x),C^\star),\dist(C^\star(x),C)\}. 
	\end{equation*}
	We only upper bound $\dist(c,C^\star)$ for every $c\in C$, and bounding $\dist(c^\star,C)$ for every $c^\star\in C^\star$ is almost the same. Since $C\in\C_{\beta/2}(S)$, for every $c\in C$, we have
	\begin{eqnarray*}
		\dist(c,C^\star)&\le& \sum_{c\in C}\dist(c,C^\star)\\
		&\le&\frac{2k}{\beta}\cdot\frac{1}{m}\sum_{c\in C}\frac{\beta m}{2k}\dist(c,C^\star)\\
		&\le&\frac{2k}{\beta}\cdot \frac{1}{m}\sum_{x\in S}\dist(C(x),C^\star)\\
		&\le&\frac{2k}{\beta}\cdot \frac{1}{m}\sum_{x\in S}\left(\dist(x,C)+\dist(x,C^\star) \right)\\
		&\le&\frac{2k}{\beta}\cdot \frac{2+\epsilon}{m}\sum_{x\in S}\dist(x,C^\star)\\
		&\le&\frac{6\lambda k}{\beta n}\OPT_\beta(X),
	\end{eqnarray*}
	where the third derivation is because $\sum_{x\in S}\dist(C(x), C^\star)=\sum_{c_i\in C}\dist(c_i,C^\star)\cdot |X_i|$ for $\{X_1,\dots,X_k\}$ being the partition of $X$ induced by $C$, and $|X_i|\ge \frac{\beta m}{2k}$ since $C\in \C_{\beta/2}(S)$.
\end{proof}

\subsection{Covering and Covering Number}
\label{sec:covering}
The notion of covering and covering number, defined in  \Cref{def:covering}, plays a crucial role in our analysis. 
We start with giving the definition, and then discuss how the several relevant parameters are chosen for our application.

\begin{definition}[Covering and Covering Number]
    \label{def:covering}
    Given a dataset $X\subseteq \calX$ and a subset $S\subseteq X$, a set of vectors $V\subset \R^X$, an error function $\err:X\times V\to \R$ and real numbers $\alpha,\gamma>0$, we say $U\subset \R^X$ is a $\gamma$-bounded $\alpha$-covering of $V$ w.r.t. $(S,\err)$ if the following holds:
    \begin{compactenum}
        \item  (Bounded Covering Error) for every $v\in V$, there exists a vector $u\in U$ such that
        \begin{equation*}
            \forall x\in S,\quad \left|v_x-u_x \right|\le \alpha\cdot\err(x,v)
        \end{equation*}
        \item  (Bounded $L_\infty$ Norm) for every $u\in U$, $\|u\|_\infty\le \gamma$.
    \end{compactenum}
    Define $N^{\alpha,\gamma}(S,V,\err)$ to be the minimum cardinality $|U|$ of any $\gamma$-bounded $\alpha$-covering $U$ of $V$ w.r.t. $(S,\err)$. 
Moreover, let $\calS\subseteq 2^X$ be a collection of subsets and define the $\gamma$-bounded $\alpha$-covering number of $V$ w.r.t. $(\calS,\err)$ to be 
    \begin{equation*}
        N_X^{\alpha,\gamma}(\mathcal{S},V,\err):=\max_{S\in \mathcal{S}} N^{\alpha,\gamma}(S,V,\err)
    \end{equation*}
\end{definition}

\paragraph{Explanation of \Cref{def:covering}}
    The idea of $\epsilon$-covering has also been used in the coreset literature, e.g.,~\cite{DBLP:conf/stoc/Cohen-AddadLSS22,DBLP:journals/corr/abs-2211-08184,DBLP:journals/corr/abs-2211-11923}.
Intuitively, the covering may be viewed as a discretization/representative of $V$,
    and the covering number measures its complexity.
The parameter $\alpha$ together with the error function $\err$ controls the granularity of the discretization of $V$,
    and the covering number $N^{\alpha,\gamma}(S,V,\err)$ increases as $\alpha$ becomes larger. 
The relative errors $\err(x,v)$ should often be tailored to the application (e.g.,~\cite{DBLP:conf/stoc/Cohen-AddadLSS22,DBLP:journals/corr/abs-2211-08184,DBLP:journals/corr/abs-2211-11923}) and we need to use a specific definition of it.
Compared with a standard definition of covering, we additionally require $\|u\|_\infty$ for all $u\in U$ bounded (by parameter $\gamma$). 
This requirement is useful for bounding the variance of a Gaussian process in our analysis, which plays a similar role as excluding huge subsets as in the definition of covering in \cite{DBLP:journals/corr/abs-2211-11923} (their Definition 3.2).
A natural choice of $\mathcal{S}$ is the collection of all $S\subseteq X$ with a fixed cardinality
but in our case we need additional constraints on $\mathcal{S}$ to bound the overall covering errors.

\paragraph{Specifying $V$, $\gamma$, $\err$ and $\mathcal{S}$} 
For a center set $C\subseteq \calX$, define $v^C\in \R^X$ to be a cost vector such that 
\[
    \forall x\in X,\quad v^C_x=\dist(x,C)-\dist(x,C^\star),
\]
and this is motivated by~\eqref{eq:b2} which considers the difference of the distances. 
Since our goal is to prove $\C'(X)\subseteq \C_{\beta}^{(O(\eps))}(X)$, we consider the following $V$ on good center sets:
\[
    V=\{v^C:C\in\C'(X)\}.
\] 
As~\eqref{eq:b2} implies $\|v^C\|_{\infty}\leq \frac{6\lambda k}{\beta n}\OPT_{\beta}(X)$, we select 
\[
\gamma = \frac{12\lambda k}{\beta n}\OPT_{\beta}(X).
\]
Now we define function $\err: X\times V\rightarrow \R$.
For every $x\in X$ and $v^C\in V$, 
\begin{align*}
    &\quad\err(x,v^C)\\
    =&\quad v^C_x + 2\dist(x,C^\star) + \frac{1}{n}\OPT_\beta(X)\\
    =&\quad \dist(x,C)+\dist(x,C^\star) + \frac{1}{n}\OPT_\beta(X).
\end{align*}

\noindent
The term $\frac{1}{n}\OPT_\beta(X)$ is consistent with the selection of $\gamma$. 
The term $\dist(x,C)+\dist(x,C^\star)$ is mainly designed for obtaining a dimension-independent covering number in Euclidean spaces; see \Cref{lem:covering_euclidean} for details.

Finally, we specify $\calS$ with an additional restriction $\xi_S$. 
\[
    \mathcal{S}(m) = \{S\subseteq X: |S|\le m, \xi_S \}.
\] 
We shorten the notation of the covering number by
\[
N_X^{\alpha}(m):=N_X^{\alpha,\gamma}\left(\calS(m),V,\err\right),
\]
and call it the $\alpha$-covering number.
We have the following lemma showing that $\xi_S$ leads to a bound of the total covering error, which is helpful for bounding the variance of our Gaussian process.

\begin{lemma}[$\xi_S$ Implies Bounded Covering Error]
    \label{lem:bounderror}
    For $S$ such that $\xi_S$ happens, we have for every $C\in \C'(X)$,
    \begin{equation*}
        \sum_{x\in S}\err(x, v^C)\le \frac{15\lambda m}{n}\OPT_\beta(X).
    \end{equation*}
\end{lemma}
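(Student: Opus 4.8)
The plan is to bound $\sum_{x\in S}\err(x,v^C)$ by directly expanding the definition of $\err$ and bounding each resulting piece using the hypotheses of $\xi_S$ and the defining properties of good center sets. Recall that for $C \in \C'(X)$ we have $\err(x, v^C) = \dist(x, C) + \dist(x, C^\star) + \frac{1}{n}\OPT_\beta(X)$, so that
\[
    \sum_{x\in S}\err(x, v^C) = \sum_{x\in S}\dist(x, C) + \sum_{x\in S}\dist(x, C^\star) + \frac{|S|}{n}\OPT_\beta(X).
\]
The third term is exactly $\frac{m}{n}\OPT_\beta(X)$, and the second term is at most $\lambda\cdot\frac{m}{n}\OPT_\beta(X)$ directly by the first condition of $\xi_S$ in \eqref{eq:xi_S}. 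So the work is all in the first term, $\sum_{x\in S}\dist(x, C)$.

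For the first term, I would write $\dist(x, C) \le \dist(x, C^\star) + |\dist(x,C) - \dist(x,C^\star)|$ and then invoke property \eqref{eq:b2} of good center sets, which gives $|\dist(x,C)-\dist(x,C^\star)| \le \frac{6\lambda k}{\beta n}\OPT_\beta(X)$ for every $x \in X$. Summing over $x \in S$ yields
\[
    \sum_{x\in S}\dist(x, C) \le \sum_{x\in S}\dist(x, C^\star) + m\cdot\frac{6\lambda k}{\beta n}\OPT_\beta(X) \le \lambda\cdot\frac{m}{n}\OPT_\beta(X) + \frac{6\lambda k}{\beta}\cdot\frac{m}{n}\OPT_\beta(X),
\]
again using the first condition of $\xi_S$ for $\sum_{x\in S}\dist(x,C^\star)$. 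Combining the three terms gives a bound of roughly $\left(2\lambda + \frac{6\lambda k}{\beta} + 1\right)\frac{m}{n}\OPT_\beta(X)$. This, however, is proportional to $\frac{k}{\beta}$, which is much larger than the claimed $15\lambda$. So the naive termwise bound via \eqref{eq:b2} is too lossy — the per-point slack $\frac{6\lambda k}{\beta n}\OPT_\beta(X)$ cannot be paid for all $m$ points.

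The right move, which I expect to be the crux, is to not bound $\sum_{x\in S}\dist(x,C)$ pointwise but to use property \eqref{eq:b1} of good center sets, which controls the \emph{average} distance from $C^\star(x)$ to $C$. Specifically, by the triangle inequality $\dist(x, C) \le \dist(x, C^\star(x)) + \dist(C^\star(x), C) = \dist(x, C^\star) + \dist(C^\star(x), C)$, so
\[
    \sum_{x\in S}\dist(x, C) \le \sum_{x\in S}\dist(x, C^\star) + \sum_{x\in S}\dist(C^\star(x), C).
\]
The first sum is $\le \lambda\frac{m}{n}\OPT_\beta(X)$ by $\xi_S$. For the second sum, I would rewrite it as $\sum_{i\in[k]}|S\cap X_i^\star|\cdot\dist(c^\star_i, C)$ and use the second condition of $\xi_S$, namely $|S\cap X_i^\star| \le \frac{3}{2}\cdot\frac{m}{n}|X_i^\star|$, to get $\sum_{x\in S}\dist(C^\star(x),C) \le \frac{3m}{2n}\sum_{i\in[k]}|X_i^\star|\dist(c^\star_i,C) = \frac{3m}{2n}\sum_{x\in X}\dist(C^\star(x),C)$, which by \eqref{eq:b1} is at most $\frac{3m}{2n}\cdot 6\lambda\OPT_\beta(X) = \frac{9\lambda m}{n}\OPT_\beta(X)$. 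Putting everything together: $\sum_{x\in S}\err(x,v^C) \le \frac{\lambda m}{n}\OPT + \frac{9\lambda m}{n}\OPT + \frac{\lambda m}{n}\OPT + \frac{m}{n}\OPT \le \frac{15\lambda m}{n}\OPT_\beta(X)$ since $\lambda > 1000$. The main obstacle is recognizing that \eqref{eq:b1} (not \eqref{eq:b2}) is the property that yields a bound independent of $k/\beta$; once that is seen, the computation is a routine chain of triangle inequalities and applications of the two conditions in $\xi_S$.
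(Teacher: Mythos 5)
Your proposal is correct and follows essentially the same route as the paper's proof: expand $\err$, bound $\dist(x,C)\le \dist(x,C^\star)+\dist(C^\star(x),C)$ by the triangle inequality, use the cluster-sampling condition of $\xi_S$ to relate $\sum_{x\in S}\dist(C^\star(x),C)$ to $\sum_{x\in X}\dist(C^\star(x),C)$, and finish with \eqref{eq:b1} and the first condition of $\xi_S$. The only cosmetic difference is that you keep the sharper factor $3/2$ where the paper rounds up to $2$, yielding $9\lambda$ instead of $12\lambda$ for that term; both comfortably fit under $15\lambda$.
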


\begin{proof}
	$\xi_S$ implies that $\cost(S,C^\star)\le \lambda\cdot\frac{m}{n}\OPT_\beta(X)$ and 
	\begin{align*}
	\sum_{x\in S}\dist(C^\star(x),C)&=\sum_{c^\star_i\in C^\star} \dist(c^\star_i,C)\cdot|S\cap X_i^\star|\\
	&\le \sum_{c^\star_i\in C^\star}\dist(c^\star_i,C)\cdot 2|X_i^\star|\cdot\frac{m}{n}\\
	&\le \frac{2m}{n}\sum_{x\in X}\dist(C^\star(x),C)\\
	&\le \frac{12\lambda m}{n}\OPT_\beta(X).
	\end{align*}
	where the second derivation is due to~\eqref{eq:xi_S}, and the forth derivation is due to~\eqref{eq:b1}.
	Therefore for every $C\in \C'(X)$, it holds that
	\begin{align*}
	\sum_{x\in S}\err(x,v^C)
	&=\sum_{x\in S}\left(\dist(x,C)+\dist(x,C^\star) + \frac{1}{n}\OPT_\beta(X) \right)\\
	&\le \frac{m}{n}\OPT_\beta(X)+\sum_{x\in S}\left(2\dist(x,C^\star)+\dist(C^\star(x),C) \right)\\
	&\le \frac{15\lambda m}{n}\OPT_\beta(X)
	\end{align*}
\end{proof}

\subsection{Proof of Main Theorem: \Cref{thm:dim_ind}}
\label{sec:mainproof}
Conditioning on $\xi_S$, for every $C\in \C_{\beta/2}^{(\epsilon)}(S)$, it holds that
\begin{equation}
    \label{eq:goodsol}
    \sum_{x\in S}\dist(x,C)\le (1+\epsilon)\sum_{x\in S}\dist(x,C^\star)
    \le \sum_{x\in S}\dist(x,C^\star)+\frac{\lambda\epsilon m}{n}\OPT_\beta(X),
\end{equation}
where the second inequality is due to \eqref{eq:xi_S}.
Let
\[
    \C^\mathrm{bad}(X):=\C'(X)\cap \overline{\C}_\beta^{(10\lambda^2\epsilon)}(X)
\]
denote the collection of all good solutions $C$ which are bad on $X$, i.e., $\cost(X,C)\ge (1+10\lambda^2\eps)\OPT_\beta(X)$. To show $S$ is an $\epsilon$-weak coreset, let $\phi_S$ denotes the event that for every $C\in\C^\mathrm{bad}(X)$,~\eqref{eq:goodsol} is far from being satisfied, that is,
    $\sum_{x\in S}\dist(x,C)\ge \sum_{x\in S}\dist(x,C^\star)+\frac{\lambda^2\epsilon m}{n}\OPT_\beta(X)$.
Then it suffices to bound the following probability
\begin{align*}
    \Pr_S[\C^{(\epsilon)}_{\beta/2}(S)\cap \overline{\C}_\beta^{(10\lambda^2\epsilon)}(X)\neq \emptyset ] &= \Pr_S[\C^{(\epsilon)}_{\beta/2}(S)\cap \overline{\C}_\beta^{(10\lambda^2\epsilon)}(X)\neq \emptyset\wedge \phi_S ]\\
    &\quad+\Pr_S[\C^{(\epsilon)}_{\beta/2}(S)\cap \overline{\C}_\beta^{(10\lambda^2\epsilon)}(X)\neq \emptyset\wedge \neg\phi_S]\\
    &\le \Pr_S[\C^{(\epsilon)}_{\beta/2}(S)\cap \overline{\C}_\beta^{(10\lambda^2\epsilon)}(X)\neq \emptyset\wedge \phi_S ]+ \Pr_S[\neg\phi_S].
\end{align*}
For the first term, we have 
\begin{align*}
    \Pr_S[\C^{(\epsilon)}_{\beta/2}(S)\cap \overline{\C}_\beta^{(10\lambda^2\epsilon)}(X)\neq \emptyset\wedge \phi_S ]
    &= \Pr_S[\C^{(\epsilon)}_{\beta/2}(S)\cap \overline{\C}_\beta^{(10\lambda^2\epsilon)}(X)\neq \emptyset\wedge \phi_S \mid \xi_S ]\Pr[\xi_S]\\
    &\quad+\Pr_S[\C^{(\epsilon)}_{\beta/2}(S)\cap \overline{\C}_\beta^{(10\lambda^2\epsilon)}(X)\neq \emptyset\wedge \phi_S \mid \neg\xi_S ]\Pr_S[\neg \xi_S]\\
    &\le \Pr_S[\C^{(\epsilon)}_{\beta/2}(S)\cap \overline{\C}_\beta^{(10\lambda^2\epsilon)}(X)\neq \emptyset\wedge \phi_S \mid \xi_S ]+\Pr_S[\neg \xi_S]
\end{align*}
By the discussion above, conditioning on $\xi_S$, it holds that $\C_{\beta/2}^{(\epsilon)}(S)\subseteq \C'(X)$. So $\C^{(\epsilon)}_{\beta/2}(S)\cap \bar{\C}_0^{(10\lambda^2\epsilon)}(X)\neq \emptyset$ is equivalent to $\C^{(\epsilon)}_{\beta/2}(S)\cap \C^\mathrm{bad}(X)\neq \emptyset$, which implies that there exists $C\in \C^\mathrm{bad}(X)$ such that~\eqref{eq:goodsol} holds. We know this contradicts $\phi_S$. Therefore, we have 
\[
    \Pr_S[\C^{(\epsilon)}_{\beta/2}(S)\cap \overline{\C}_\beta^{(10\lambda^2\epsilon)}(X)\neq \emptyset\wedge \phi_S \mid \xi_S ]=0,
\]
and thus
\[
    \Pr_S[\C^{(\epsilon)}_{\beta/2}(S)\cap \overline{\C}_\beta^{(10\lambda^2\epsilon)}(X)\neq \emptyset\wedge \phi_S ]\le \Pr_S[\neg \xi_S]\le 0.01,
\]
where the second inequality follows from \Cref{lem:Pr_xi_S}.

It remains to bound $\Pr_S[\neg\phi_S]\le 0.09$. Recall that for any $C\in \C^\mathrm{bad}(X)$, $\|v^C\|_1=\sum_{x\in X}(\dist(x,C)-\dist(x,C^\star))\ge 10\lambda^2\eps\OPT_\beta(X)$. $\Pr[\neg\phi_S]$ can be regarded as a uniformly convergence guarantee for all $v^C\in V$ and, let $\mu:=\frac{\eps m}{n}\OPT_\beta(X)$, it is equivalent to bound
\begin{equation}
    \label{eq:uniform_convergence}
    \Pr_{S}\left[\inf_{C\in\C^\mathrm{bad}(X)} \sum_{x\in S}v^C_x\le \lambda^2 \mu\right]\le 0.09
\end{equation}
To bound~\eqref{eq:uniform_convergence},
our plan is to set up a Gaussian process and apply the chaining argument.
The following lemma provides a convergence guarantee for a single $C$.
\begin{lemma}
    \label{lem:badsol}
    For any $C\in\C^\mathrm{bad}(X)$, the following holds:  
    \begin{equation*}
        \Pr_S\left[\sum_{x\in S}v_x^C\le 5\lambda^2\mu \right]\le 2\exp\left(-\Theta\left(\frac{\epsilon^2\beta m}{k} \right) \right).
    \end{equation*}
\end{lemma}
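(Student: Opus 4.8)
The plan is to express $\sum_{x\in S}v^C_x$ as a sum of i.i.d.\ bounded random variables whose mean comfortably exceeds $5\lambda^2\mu$, and then apply a Bernstein-type deviation inequality. Write $S=\{x_1,\dots,x_m\}$ with the $x_j$ independent uniform draws from $X$ (the without-replacement case is only better, via a matching Bernstein inequality for sampling without replacement), and set $Y_j:=v^C_{x_j}$, so that $T:=\sum_{x\in S}v^C_x=\sum_{j=1}^m Y_j$ with $\E[Y_j]=\frac1n\sum_{x\in X}v^C_x=\frac1n\bigl(\cost(X,C)-\OPT_\beta(X)\bigr)$. Since $C\in\C^\mathrm{bad}(X)\subseteq\overline{\C}_\beta^{(10\lambda^2\eps)}(X)$, we have $\cost(X,C)-\OPT_\beta(X)\ge 10\lambda^2\eps\,\OPT_\beta(X)$, hence $\E[T]\ge 10\lambda^2\mu$ since $\mu=\frac{\eps m}{n}\OPT_\beta(X)$. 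Therefore $\{T\le 5\lambda^2\mu\}\subseteq\{\E[T]-T\ge 5\lambda^2\mu\}$, and it suffices to control the lower deviation of $T$ below its mean by $5\lambda^2\mu$.

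The two ingredients Bernstein needs---a uniform bound on $|Y_j|$ and a bound on $\Var(Y_j)$---are exactly what goodness of $C$ supplies. First, \eqref{eq:b2} gives $|Y_j|\le\|v^C\|_\infty\le R:=\frac{6\lambda k}{\beta n}\OPT_\beta(X)$. Second, $\E[Y_j^2]=\frac1n\sum_{x\in X}(v^C_x)^2\le\frac{R}{n}\sum_{x\in X}|v^C_x|$, and $\sum_{x\in X}|v^C_x|\le\cost(X,C)+\cost(X,C^\star)$; the triangle inequality $\dist(x,C)\le\dist(x,C^\star)+\dist(C^\star(x),C)$ combined with \eqref{eq:b1} gives $\cost(X,C)\le(1+6\lambda)\OPT_\beta(X)$, so $\sum_{x\in X}|v^C_x|\le O(\lambda)\OPT_\beta(X)$ and thus $\Var(Y_j)\le\E[Y_j^2]\le O\bigl(\tfrac{\lambda^2 k}{\beta n^2}\bigr)\OPT_\beta(X)^2$.

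It then remains to plug these into Bernstein's inequality with deviation $t=5\lambda^2\mu$:
\[
\Pr[T\le 5\lambda^2\mu]\le\exp\!\left(-\frac{t^2/2}{m\Var(Y_1)+Rt/3}\right).
\]
Substituting $\mu=\frac{\eps m}{n}\OPT_\beta(X)$, one has $t^2=\Theta(\lambda^4)\cdot\frac{\eps^2 m^2}{n^2}\OPT_\beta(X)^2$, while $m\Var(Y_1)=O(\lambda^2)\cdot\frac{km}{\beta n^2}\OPT_\beta(X)^2$ and, using $\eps<1$, $Rt=O(\lambda^3)\cdot\frac{km}{\beta n^2}\OPT_\beta(X)^2$. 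Since $\lambda$ is an absolute constant, the exponent is $\Theta(\eps^2\beta m/k)$, which gives the claimed bound (the factor $2$ is harmless slack, and also absorbs the without-replacement variant).

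The step I expect to be the main obstacle is the variance estimate: a moderate value of $\|v^C\|_1$ does not by itself make $\sum_{x}(v^C_x)^2$ small, so the bound has to pair the $L_\infty$ control of \eqref{eq:b2} with the $L_1$-type control on $\cost(X,C)$ coming from \eqref{eq:b1}. This is exactly where goodness of $C$---rather than merely $C\in\overline{\C}_\beta^{(10\lambda^2\eps)}(X)$---is used, echoing the technical overview's point that balancedness forces $C$ and $C^\star$ close. A minor point to verify is that the $Rt/3$ term in Bernstein's denominator is comparable to $m\Var(Y_1)$ up to constant (i.e.\ $\lambda$) factors, so it does not degrade the exponent.
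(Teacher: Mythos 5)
Your proposal is correct and follows essentially the same route as the paper: bound $\|v^C\|_\infty$ by the goodness condition \eqref{eq:b2}, bound the per-sample variance by pairing that $L_\infty$ bound with the $L_1$-type control $\sum_{x\in X}|v^C_x|\le\sum_{x\in X}\bigl(2\dist(x,C^\star)+\dist(C^\star(x),C)\bigr)=O(\lambda)\OPT_\beta(X)$ coming from \eqref{eq:b1}, and then apply Bernstein's inequality with deviation $\Theta(\lambda^2\mu)$, yielding the exponent $\Theta(\eps^2\beta m/k)$. The only cosmetic differences (one-sided versus two-sided Bernstein, and routing the $L_1$ bound through $\cost(X,C)\le(1+6\lambda)\OPT_\beta(X)$) do not change the argument.
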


\begin{proof}
	Since $\C^\mathrm{bad}(X)\subseteq \C'(X)$, due to~\eqref{eq:b2}, for every $C\in\C^\mathrm{bad}(X)$, it holds that 
	\begin{equation*}
	\|v^C\|_\infty\le \gamma=\frac{12\lambda k}{\beta n}\OPT_\beta(X).
	\end{equation*}
	Since $S$ is a set of $m$ uniform samples, and $\E\sum_{x\in S}v_x^C=\frac{m}{n}\|v^C\|_1\ge \frac{10\lambda^2\epsilon m}{n}\OPT_\beta(X)$, we have 
	\begin{align*}
	\Pr_S\left[\sum_{x\in S}v^C_x\le \frac{5\lambda^2\epsilon m}{n}\OPT_\beta(X) \right]\le \Pr_S\left[\left|\sum_{x\in S}v^C_x-\E\sum_{x\in S}v^C_x \right|\ge \frac{5\lambda^2\epsilon m}{n}\OPT_\beta(X) \right]
	\end{align*}
	Here we apply Bernstein inequality to finish the proof. To this end, we should also bound the variance of $\sum_{x\in S}v^C_x$, which is
	\begin{align*}
	&\le m\cdot\frac{1}{n}\sum_{x\in X}(v^C_x)^2\\
	&\le \gamma \cdot \frac{m}{n}\sum_{x\in X} |v^C_x|\\
	&\le \gamma \cdot \frac{m}{n}\sum_{x\in X}\left(\dist(x,C)+\dist(x,C^\star) \right)\\
	&\le \gamma \cdot \frac{m}{n}\sum_{x\in X}\left(2\dist(x,C^\star)+\dist(C^\star(x),C) \right)\\
	&\le O\left(\frac{km}{\beta n^2}(\OPT_\beta(X))^2 \right).
	\end{align*}
	where the last derivation is due to~\eqref{eq:b1}.
	Therefore, by Bernstein inequality, we have 
	\begin{align*}
	\Pr_S\left[\left|\sum_{x\in S}v^C_x-\E\sum_{x\in S}v^C_x \right|\ge \frac{5\lambda^2\epsilon m}{n}\OPT_\beta(X) \right]\le 2\exp\left(-\Theta\left(\frac{\eps^2\beta m}{k} \right) \right)
	\end{align*}
	which completes the proof.
\end{proof}

By~\Cref{lem:badsol}, it remains to bound the ``complexity'' of $\C^\mathrm{bad}(X)$.
To this end, we reduce to a Gaussian process where we only need to consider the complexity of coverings with respect to the sample set $S$. This idea is formalized in the following lemma.

\begin{lemma}[Reduction to Gaussian Process]
    \label{lem:reduce_to_Guassian}
    ~\eqref{eq:uniform_convergence} holds if the following holds:
    \begin{equation}
        \label{eq:Guassian}
        \E_S\left[\E_{g_i}\left[\sup_{C\in \C^\mathrm{bad}(X)} \frac{1}{\mu}\left|\sum_{i=1}^mg_iu^C_{s_i} \right| \right]\mid \xi_S\right] \le \lambda.
    \end{equation}
Here, $g_1,\dots,g_m$ are the independent standard Gaussian random variables, $U_{\eps}$ is an $\eps$-covering of $V$ w.r.t. a random set $S=\{s_1,\dots,s_m\}\subseteq X$, and $u^C\in U_{\eps}$ denotes the $\eps$-covering of $v^C$ for $C\in\C^\mathrm{bad}(X)$,
\end{lemma}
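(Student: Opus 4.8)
The plan is to bound~\eqref{eq:uniform_convergence} by a routine empirical-process argument: center the random sum, apply Markov, symmetrize, pass to Gaussian signs, and only at the last step replace each cost vector $v^C$ by a covering vector $u^C\in U_\eps$, absorbing the covering error into a lower-order term via~\Cref{lem:bounderror}. Concretely, for $C\in\C^\mathrm{bad}(X)$ put $a_C:=\frac mn\sum_{x\in X}v^C_x=\E_S\bigl[\sum_{x\in S}v^C_x\bigr]=\frac mn(\cost(X,C)-\OPT_\beta(X))$. Since $C\in\overline{\C}_\beta^{(10\lambda^2\eps)}(X)$ we have $a_C\ge 10\lambda^2\mu$, while~\eqref{eq:b1} and the triangle inequality give $\cost(X,C)\le(1+6\lambda)\OPT_\beta(X)$, hence also $a_C\le\frac{6\lambda}{\eps}\mu$. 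Therefore, if some $C\in\C^\mathrm{bad}(X)$ has $\sum_{x\in S}v^C_x\le\lambda^2\mu$ then $\sup_{C\in\C^\mathrm{bad}(X)}|\sum_{x\in S}v^C_x-a_C|\ge 9\lambda^2\mu$. Peeling off $\neg\xi_S$ (probability $\le 0.01$ by~\Cref{lem:Pr_xi_S}) and applying Markov, it suffices to prove
\[
\E_S\Bigl[\,\mathbf{1}_{\xi_S}\cdot\sup_{C\in\C^\mathrm{bad}(X)}\bigl|\textstyle\sum_{x\in S}v^C_x-a_C\bigr|\,\Bigr]=O(\lambda\mu),
\]
as then the remaining probability is $O(1/\lambda)$, which with $\lambda>1000$ totals at most $0.09$, i.e.~\eqref{eq:uniform_convergence}.

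To estimate this expectation, bound $\mathbf{1}_{\xi_S}$ by $1$ and use that $\sum_{x\in S}v^C_x$ is an i.i.d.\ sum, so the symmetrization inequality gives the bound $2\,\E_{S,\sigma}\bigl[\sup_C|\sum_{i=1}^m\sigma_i v^C_{s_i}|\bigr]$ with i.i.d.\ Rademacher signs $\sigma_i$. Split this expectation according to $\xi_S$. On $\xi_S$, pass to the covering: for each $C$ pick $u^C\in U_\eps(S)$ with $|v^C_x-u^C_x|\le\eps\,\err(x,v^C)$ on $S$, so that
\[
\Bigl|\textstyle\sum_i\sigma_i(v^C_{s_i}-u^C_{s_i})\Bigr|\le \sum_{x\in S}|v^C_x-u^C_x|\le \eps\sum_{x\in S}\err(x,v^C)\le 15\lambda\mu
\]
by~\Cref{lem:bounderror} (this is where the choices of $\err$ and $\gamma$ are used), after which the Rademacher--Gaussian comparison turns the remaining supremum over $\{u^C\}$ into $\sqrt{\pi/2}\cdot\E_{S,g}[\sup_C|\sum_i g_i u^C_{s_i}|\mid\xi_S]\le\sqrt{\pi/2}\cdot\lambda\mu$ by the hypothesis~\eqref{eq:Guassian}. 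Hence the $\xi_S$-part of the symmetrized bound is $O(\lambda\mu)$.

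The main obstacle is the $\neg\xi_S$-part of the symmetrized bound, which cannot be thrown away using only the cheap per-point estimate $|v^C_x|\le\gamma=\Theta(\tfrac{\lambda k}{\beta n}\OPT_\beta(X))$: this gives $|\sum_{x\in S}v^C_x|\le m\gamma$, overshooting $\lambda\mu$ by a factor as large as $k/(\beta\eps)$. The fix is to handle the two parts of~\eqref{eq:xi_S} separately. The cluster-count condition fails with probability only $e^{-\Omega(\beta m/k)}$, which by~\eqref{eq:m_X} is tiny enough that even the crude $m\gamma$ bound contributes negligibly there. The cost condition fails with probability only $O(1/\lambda)$ (Markov), and on its failure one still picks the covering $u^C$ using only the count condition, controls the Gaussian variance $\sum_i(u^C_{s_i})^2\le\gamma\sum_{x\in S}|u^C_x|\lesssim\gamma\bigl(\cost(S,C^\star)+\tfrac\lambda\eps\mu\bigr)$ by the triangle inequality and the count condition, applies the standard Gaussian-maximum bound over the $|U_\eps(S)|\le N_X^\eps(m)$ covering vectors, and closes with a Cauchy--Schwarz estimate against $\E_S[\cost(S,C^\star)]=\tfrac mn\OPT_\beta(X)$; this keeps the $\neg\xi_S$-contribution at $O(\mu\sqrt{k\log N_X^\eps(m)/(\beta\eps^2 m)})=O(\lambda\mu)$, again using~\eqref{eq:m_X}. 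Everything else---centering, Markov, symmetrization, the Rademacher--Gaussian comparison, and the covering swap---is routine.
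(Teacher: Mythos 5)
Your overall route differs from the paper's: you center $\sum_{x\in S}v^C_x$ at its mean, apply Markov to an indicator-restricted expectation, and symmetrize the \emph{expectation} with Rademacher signs, whereas the paper symmetrizes the \emph{probability} against a ghost sample $S'$ (using \Cref{lem:badsol} to guarantee $\sum_i v^{C_S}_{s'_i}\ge 5\lambda^2\mu$ with probability $1/2$) and only then splits on $\xi_S$, so that the $\neg\xi_S$ contribution costs merely $2\Pr[\neg\xi_S]\le 0.02$ and every subsequent expectation is conditioned on $\xi_S$. Your first three paragraphs are fine and would also work, but the step where you bound $\mathbf{1}_{\xi_S}\le 1$ before symmetrizing creates exactly the problem you then try to patch, and the patch has a genuine gap: on the event that the cost condition of \eqref{eq:xi_S} fails, $S\notin\mathcal{S}(m)$, and \Cref{def:covering} gives you \emph{no} $\eps$-covering of size $N_X^{\eps}(m)$ with the bounded-error and bounded-$L_\infty$ properties for such $S$ --- the covering number is a maximum over $S\in\mathcal{S}(m)$ only, and the concrete constructions (e.g.\ \Cref{lem:covering_graph_metric}) explicitly invoke $\xi_S$ to bound the relevant distances. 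So ``still pick the covering $u^C$ using only the count condition'' is not available from the lemma's hypotheses.

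Moreover, even granting such coverings, your treatment of the $\neg\xi_S$ part re-imports a single-scale Gaussian-maximum bound with an explicit $\sqrt{k\log N_X^{\eps}(m)/(\beta\eps^2 m)}$ estimate and the size condition \eqref{eq:m_X}; that is precisely the content the paper defers to \Cref{lem:chaining_argument}, and it is not part of the hypothesis \eqref{eq:Guassian}, so your argument does not establish the stated implication ``\eqref{eq:Guassian} $\Rightarrow$ \eqref{eq:uniform_convergence}'' but rather a different statement with extra assumptions. There is also a quantitative soft spot in the closing Cauchy--Schwarz: pairing $\sqrt{\Pr[\neg\xi_S]}$ with $\E_S[\cost(S,C^\star)]$ as written actually requires a second-moment bound on $\cost(S,C^\star)$ (or dropping the indicator and using Jensen), and the covering-error term $\eps\sum_{x\in S}\err(x,v^C)$ on this event is no longer $15\lambda\mu$ as in \Cref{lem:bounderror} and must be handled separately. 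The clean fix is the paper's: do the symmetrization at the level of probabilities with the ghost sample and \Cref{lem:badsol}, discard $\neg\xi_S$ there via \Cref{lem:Pr_xi_S}, and only apply the covering swap and the Rademacher-to-Gaussian comparison under the conditioning on $\xi_S$, where \Cref{lem:bounderror} and \eqref{eq:Guassian} apply.
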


\begin{proof}
	Let $S'=\{s_1',...,s_m'\}$ be another set of independent uniform samples from $X$ that is independent of $S$. We first use the symmetrization trick to show that
	\begin{align*}
	\quad\Pr_S\left[\inf_{C\in\C^\mathrm{bad}(X)} \sum_{i=1}^m v^C_{s_i}\le \lambda^2\mu\right] \nonumber\le 2\Pr_{S,S'}\left[\sup_{C\in\C^\mathrm{bad}(X)} \left|\sum_{i=1}^m(v_{s_i}^C-v_{s_i'}^C) \right|\ge 4\lambda^2\mu \right].
	\end{align*}
	To see this, we assume for some $S$, the event $\phi_S$, i.e., $\inf_{C\in\C^\mathrm{bad}(X)} \sum_{i=1}^m v^C_{s_i}\le \lambda^2\mu$ happens, then take an arbitrarily $C_S\in \C^\mathrm{bad}(X)$ such that $\sum_{i=1}^m v^{C_S}_{s_i}\le \lambda^2\mu$ (if $\xi_S$ does not happen, we let $C_S$ be an arbitrarily center set). If the event $\sum_{i=1}^m v^{C_S}_{s_i'}\ge 5\lambda^2\mu$, denoted by $\varphi_{C_S,S'}$, happens, then it holds that
	\begin{equation*}
	\left|\sum_{i=1}^m(v^{C_S}_{s_i}-v^{C_S}_{s_i'})\right|\ge 4\lambda^2\mu.
	\end{equation*}
	Note that $\varphi_{C_S,S'}$ is a convergence guarantee for $C_S\in\C^\mathrm{bad}(X)$, and~\Cref{lem:badsol} gives a lower bound of $1/2$ to $\Pr_{S'}[\varphi_{C_S,S'}]$.
	Therefore, the following holds.
	\begin{align*}
	\Pr_{S,S'}\left[\sup_{C\in\C^\mathrm{bad}(X)} \left|\sum_{i=1}^m(v_{s_i}^C-v_{s_i'}^C) \right|\ge 4\lambda^2\mu \right]
	&\ge \Pr_{S,S'}\left[\phi_S\wedge \varphi_{C_S,S'}\right]\\
	&=\Pr_{S}[\phi_S]\Pr_{S,S'}[\varphi_{C_S,S'}\mid \phi_S]\\
	&=\Pr_S[\phi_S]\E_S\left[\Pr_{S'}[\varphi_{C_S,S'}]\mid\phi_S\right]\\
	&\ge \frac{1}{2}\Pr_S[\phi_S],
	\end{align*}
	Let $r_1,...,r_m$ be independent Rademacher random variables\footnote{A Rademacher random variable $r$ takes value $-1$ with probability $1/2$ and takes value $1$ with probability $1/2$.}. We have
	\begin{align}
	&\quad\Pr_{S,S'}\left[\sup_{C\in\C^\mathrm{bad}(X)} \left|\sum_{i=1}^m(v_{s_i}^C-v_{s_i'}^C) \right|\ge 4\lambda^2\mu \right] \nonumber\\
	&= \Pr_{S,S',r_i}\left[\sup_{C\in\C^\mathrm{bad}(X)} \left|\sum_{i=1}^m r_i(v_{s_i}^C-v_{s_i'}^C) \right|\ge 4\lambda^2\mu \right] \nonumber\\
	&\le \Pr_{S,S',r_i}\left[\sup_{C\in\C^\mathrm{bad}(X)} \left(\left|\sum_{i=1}^m r_iv_{s_i}^C \right|+\left|\sum_{i=1}^m r_i v_{s_i'}^C \right|\right)\ge 4\lambda^2\mu \right] \nonumber
	\end{align}
	where the first derivation is because $v_{s_i}^C-v_{s_i'}^C$ is symmetric and thus is distributed identically to $r_i(v_{s_i}^C-v_{s_i'}^C)$, the second derivation is due to the triangle inequality. If $\sup_{C\in\C^\mathrm{bad}(X)} \left(\left|\sum_{i=1}^m r_iv_{s_i}^C \right|+\left|\sum_{i=1}^m r_i v_{s_i'}^C \right|\right)\ge 4\lambda^2\mu $ holds, then either $\sup_{C\in\C^\mathrm{bad}(X)} \left|\sum_{i=1}^m r_iv_{s_i}^C \right|\ge 2\lambda^2\mu$ or $\sup_{C\in\C^\mathrm{bad}(X)} \left|\sum_{i=1}^m r_iv_{s_i'}^C \right|\ge 2\lambda^2\mu$ holds. Since $S$ and $S'$ are distributed identically, by union bound, we have 
	\begin{align}
		&\quad\Pr_{S,S',r_i}\left[\sup_{C\in\C^\mathrm{bad}(X)} \left(\left|\sum_{i=1}^m r_iv_{s_i}^C \right|+\left|\sum_{i=1}^m r_i v_{s_i'}^C \right|\right)\ge 4\lambda^2\mu \right] \nonumber\\
		&\le 2\Pr_{S,r_i}\left[\sup_{C\in\C^\mathrm{bad}(X)} \left|\sum_{i=1}^m r_iv_{s_i}^C\right|\ge 2\lambda^2\mu \right] \nonumber\\
		&= 2\Pr_{S,r_i}\left[\sup_{C\in\C^\mathrm{bad}(X)} \left|\sum_{i=1}^m r_iv_{s_i}^C\right|\ge 2\lambda^2\mu\mid \xi_S \right]\Pr[\xi_S]\nonumber\\
		&\quad+2\Pr_{S,r_i}\left[\sup_{C\in\C^\mathrm{bad}(X)} \left|\sum_{i=1}^m r_iv_{s_i}^C\right|\ge 2\lambda^2\mu\mid \neg\xi_S \right]\Pr[\neg \xi_S]\nonumber\\
		&\le 2\Pr_{S,r_i}\left[\sup_{C\in\C^\mathrm{bad}(X)} \left|\sum_{i=1}^m r_iv_{s_i}^C\right|\ge 2\lambda^2\mu\mid \xi_S \right] + 0.02, \label{eq:pr}
	\end{align}
	It suffices to prove that 
	\begin{equation}\label{eq:ES}
	\E_{S}\left[\E_{r_i}\sup_{C\in\C^\mathrm{bad}(X)}\left|\sum_{i=1}^m r_iv_{s_i}^C \right|\mid \xi_S \right]\le 20\lambda\mu.
	\end{equation}
	We can thus apply Markov inequality to bound the probability in~\eqref{eq:pr} by $0.01$, which leads to $\Pr[\neg \phi_S]\le 0.08$ and completes the proof of~\eqref{eq:uniform_convergence}.
	It remains to show how to derive~\eqref{eq:ES}. 
	
	Recall that $U_{\eps}$ denotes an $\eps$-covering of $V$ w.r.t. random subset $S$.
	We next replace the cost vector $v^C$ with its $\eps$-covering $u^C\in U_{\eps}$, which satisfies that $|u^C_x-v^C_x|\le \eps\cdot \err(x,v^C)$ for every $x\in S$.
\begin{equation}    
	\label{eq:applycover}
	\begin{aligned}
	\sup_{C\in\C^\mathrm{bad}(X)}\left|\sum_{i=1}^mr_iv_{s_i}^C \right|
	&\le \sup_{C\in\C^\mathrm{bad}(X)} \left|\sum_{i=1}^mr_i u_{s_i}^C +r_i(v_{s_i}^C- u_{s_i}^C) \right|\\
	&\le \sup_{C\in\C^\mathrm{bad}(X)}\left(\left|\sum_{i=1}^mr_i u^C_{s_i} \right| + \eps \sum_{i=1}^m\err(s_i,v^C)\right)\\
	&\le \sup_{C\in\C^\mathrm{bad}(X)}\left|\sum_{i=1}^mr_iu^C_{s_i} \right| + \eps\cdot\frac{15\lambda m}{n}\OPT_\beta(X)\\
	&\le \sup_{C\in\C^\mathrm{bad}(X)}\left|\sum_{i=1}^mr_iu^C_{s_i} \right|+15\lambda\mu
	\end{aligned}
	\end{equation}
	where the second derivation is due to the triangle inequality, and the third derivation is due to \Cref{lem:bounderror} and $|r_i|=1$ for every $i\in[m]$. Then it suffices to prove
	\begin{equation*} 
		\E_{S}\left[\E_{r_i}\sup_{C\in\C^\mathrm{bad}(X)}\left|\sum_{i=1}^m r_iu_{s_i}^C \right|\mid \xi_S \right]\le 5\lambda\mu
	\end{equation*}
	which is equivalent to
	\begin{equation}
		\label{eq:Eg}
		\E_{S}\left[\E_{r_i}\sup_{C\in\C^\mathrm{bad}(X)}\frac{1}{\mu} \left|\sum_{i=1}^m r_iu_{s_i}^C \right|\mid \xi_S \right]\le 5\lambda
	\end{equation}
	
	Finally, we replace the Rademacher random variables with standard Gaussian random variables.
	\begin{lemma}[Lemma 7.4 of~\cite{Handel2014ProbabilityIH}]
		\label{lem:r2g}
		For $r_1,\dots,r_m$ are Rademacher random variables, let $g_1,\dots,g_m$ be the independent standard Gaussian random variables, it holds that
		\begin{equation*}
		\E_{r_i}\left[\sup_{C\in\C^\mathrm{bad}(X)} \frac{1}{\mu}\left|\sum_{i=1}^mr_iu^C_{s_i} \right| \right]\le \sqrt{\frac{\pi}{2}}\E_{g_i}\left[\sup_{C\in\C^\mathrm{bad}(X)} \frac{1}{\mu}\left|\sum_{i=1}^mg_iu^C_{s_i} \right| \right]
		\end{equation*}
	\end{lemma}
	It suffices to prove 
	\begin{equation*}
		\E_{S}\left[\E_{g_i}\sup_{C\in\C^\mathrm{bad}(X)}\frac{1}{\mu} \left|\sum_{i=1}^m g_iu_{s_i}^C \right|\mid \xi_S \right]\le \lambda
	\end{equation*}
	which leads to \eqref{eq:Eg} by \Cref{lem:r2g}, and completes the proof.
\end{proof}

Now it suffices to prove~\eqref{eq:Guassian}.
The main idea is to apply a chaining argument.
For every $C\in \C^\mathrm{bad}(X)$, let $v^{C,h}\in U_{2^{-h}}$ denote the $2^{-h}$-covering of $v^C$, then we can rewrite $u^C$ as a telescoping sum 
$
    u^C=\sum_{h=1}^{\log\eps^{-1}}(v^{C,h}-v^{C,h-1}).
$
Hence, it suffices to prove the following lemma which implies~\eqref{eq:Guassian}, and this completes the proof of~\Cref{thm:dim_ind}.

\begin{lemma}[Bounding Error in the Chaining Argument]
	\label{lem:chaining_argument}
    Conditioning on $\xi_S$, the following holds:
	\begin{equation*}
	\sum_{h=1}^{\log\eps^{-1}}\E_{g_i}\left[\sup_{C\in\C^\mathrm{bad}(X)}\frac{1}{\mu}\left|\sum_{i=1}^mg_i(v_{s_i}^{C,h}-v_{s_i}^{C,h-1}) \right| \right] \le \lambda.
	\end{equation*}
\end{lemma}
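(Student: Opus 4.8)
The plan is a Dudley-style chaining estimate, carried out for a fixed realization of $S$ for which $\xi_S$ holds (the only randomness in the statement is over $g_1,\dots,g_m$). For each level $h$ I need two ingredients: a uniform (over $C\in\C^\mathrm{bad}(X)$) upper bound on the $\ell_2$-norm of the ``link'' $v^{C,h}-v^{C,h-1}$ restricted to $S$, and a count of how many distinct such links can occur. For the length: given $x\in S$, the covering guarantees give $|v^{C,h}_x-v^C_x|\le 2^{-h}\err(x,v^C)$ and $|v^{C,h-1}_x-v^C_x|\le 2^{-(h-1)}\err(x,v^C)$, hence $|v^{C,h}_x-v^{C,h-1}_x|\le 3\cdot 2^{-h}\err(x,v^C)$; separately, $\gamma$-boundedness of the coverings gives $|v^{C,h}_x-v^{C,h-1}_x|\le 2\gamma$. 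Multiplying these two bounds on $|v^{C,h}_x-v^{C,h-1}_x|$, summing over $x\in S$, and invoking \Cref{lem:bounderror},
\[
\sum_{i=1}^m(v^{C,h}_{s_i}-v^{C,h-1}_{s_i})^2\;\le\; 6\gamma\cdot 2^{-h}\sum_{x\in S}\err(x,v^C)\;\le\; 6\gamma\cdot 2^{-h}\cdot\frac{15\lambda m}{n}\OPT_\beta(X),
\]
and substituting $\gamma=\frac{12\lambda k}{\beta n}\OPT_\beta(X)$ bounds this by $O(\lambda^2)\cdot\frac{km}{\beta n^2}(\OPT_\beta(X))^2\cdot 2^{-h}$, independent of $C$.

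For the count: as $C$ ranges over $\C^\mathrm{bad}(X)$, the pair $(v^{C,h},v^{C,h-1})$ lies in $U_{2^{-h}}\times U_{2^{-(h-1)}}$, so the number of distinct level-$h$ links is at most $N^{2^{-h}}(S,V,\err)\cdot N^{2^{-(h-1)}}(S,V,\err)\le\bigl(N_X^{2^{-h}}(m)\bigr)^2$, using $S\in\calS(m)$ (since $\xi_S$ holds) together with the monotonicity $N^{2^{-(h-1)}}\le N^{2^{-h}}$ of the covering number. Plugging the length bound and this count into the standard Gaussian maximal inequality $\E_g[\max_{t\in T}|\langle g,t\rangle|]\le\sqrt{2\log(2|T|)}\cdot\max_{t\in T}\|t\|_2$ gives, for each $h$,
\[
\E_{g_i}\!\left[\sup_{C\in\C^\mathrm{bad}(X)}\Bigl|\sum_{i=1}^m g_i(v^{C,h}_{s_i}-v^{C,h-1}_{s_i})\Bigr|\right]\;\le\; O(\lambda)\sqrt{\frac{km}{\beta}}\cdot\frac{\OPT_\beta(X)}{n}\cdot 2^{-h/2}\Bigl(1+\sqrt{\log N_X^{2^{-h}}(m)}\Bigr).
\]
Dividing by $\mu=\frac{\eps m}{n}\OPT_\beta(X)$, the prefactor becomes $O(\tfrac{\lambda}{\eps})\sqrt{\tfrac{k}{\beta m}}$; summing over $h=1,\dots,\log\eps^{-1}$ and using $\sum_h 2^{-h/2}=O(1)$ for the lower-order tail,
\[
\sum_{h=1}^{\log\eps^{-1}}\E_{g_i}\!\left[\sup_{C\in\C^\mathrm{bad}(X)}\frac1\mu\Bigl|\sum_{i=1}^m g_i(v^{C,h}_{s_i}-v^{C,h-1}_{s_i})\Bigr|\right]\;\le\; O\!\left(\frac{\lambda}{\eps}\sqrt{\frac{k}{\beta m}}\right)\left(1+\sum_{h=1}^{\log\eps^{-1}}\sqrt{2^{-h}\log N_X^{2^{-h}}(m)}\right).
\]
Finally, the hypothesis~\eqref{eq:m_X} on $m$, together with the bound $m=\Omega(k/(\beta\eps^2))$ that is in any case required for \Cref{lem:badsol} inside \Cref{lem:reduce_to_Guassian}, makes the right-hand side at most $\lambda$, which is the claimed inequality.

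The one step that is not routine is the length bound of the first paragraph. Controlling $\sum_{x\in S}(v^{C,h}_x-v^{C,h-1}_x)^2$ only through the slack bound $|v^{C,h}_x-v^{C,h-1}_x|\le 3\cdot 2^{-h}\err(x,v^C)$ would require a handle on $\sum_{x\in S}\err(x,v^C)^2$, which we do not have, since a single $\err(x,v^C)=\dist(x,C)+\dist(x,C^\star)+\frac1n\OPT_\beta(X)$ can be as large as the whole optimum. The fix is to pair the slack bound with the $L_\infty$ bound $2\gamma$, which converts the sum of squares into $\gamma$ times the \emph{sum} of errors, the quantity that \Cref{lem:bounderror} controls under $\xi_S$; this is precisely the role of $\gamma$-boundedness in \Cref{def:covering} and of the extra additive term $\dist(x,C)+\dist(x,C^\star)$ in $\err$. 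One more technical point worth flagging: the coverings $U_{2^{-h}}$ may themselves depend on the fixed $S$, but this is harmless because their cardinality is bounded by the worst-case quantity $N_X^{2^{-h}}(m)=\max_{S\in\calS(m)}N^{2^{-h}}(S,V,\err)$, which is exactly what enters~\eqref{eq:m_X}.
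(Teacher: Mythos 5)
Your proposal is correct and follows essentially the same route as the paper's proof: per-level variance control by pairing the $\gamma$-bounded $L_\infty$ bound with the $2^{-h}$-slack bound and summing via \Cref{lem:bounderror}, counting the chain links by $|U_{2^{-h}}\times U_{2^{-h+1}}|\le N_X^{2^{-h}}(m)\cdot N_X^{2^{-h+1}}(m)$, applying the Gaussian maximal inequality, and invoking \eqref{eq:m_X} (your explicit handling of the lower-order tail and of the $S$-dependence of the coverings only makes the same argument slightly more careful).
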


The proof of \Cref{lem:chaining_argument} heavily relies on our definition of covering. 
In particular, it allows us to bound the difference $|v_x^{C,h}-v_x^{C,h-1}|$
either by $2^{-h+2}\cdot\err(x,v^C)$, or by an absolute value $2\gamma$. 
This eventually guarantees that each Gaussian variable $\sum_{i=1}^mg_i(v_{s_i}^{C,h}-v_{s_i}^{C,h-1}) $ has a well bounded variance.

\begin{proof}[Proof of~\Cref{lem:chaining_argument}]

For every $h\in [\log\eps^{-1}]$, let 
	\begin{equation*}
		E_h:=\E_{g_i}\left[\sup_{C\in\C^\mathrm{bad}(X)}\frac{1}{\mu}\left|\sum_{i=1}^mg_i(v_{s_i}^{C,h}-v_{s_i}^{C,h-1}) \right| \right].
	\end{equation*}
	For every $C\in \C^\mathrm{bad}(X)$, $\sum_{i=1}^m\frac{g_i}{\mu}(v_{s_i}^{C,h}-v_{s_i}^{C,h-1})$ is Gaussian with zero mean and variance
	\begin{align*}
	\sum_{i=1}^m\left(\frac{1}{\mu}(v_{s_i}^{C,h}-v_{s_i}^{C,h-1}) \right)^2
	&\le \frac{\|v^{C,h}\|_\infty+\|v^{C,h-1}\|_\infty }{\mu}\sum_{i=1}^m\frac{|v_{s_i}^{C,h}-v_{s_i}^{C,h-1}|}{\mu}\\
	&\le \frac{24\lambda k}{\beta\epsilon m}\sum_{i=1}^m\frac{|v_{s_i}^{C,h}-v^{C}_{s_i}+v^{C}_{s_i}-v_{s_i}^{C,h-1} |}{\mu}\\
	&\le \frac{24\lambda k}{\beta\epsilon m}\sum_{x\in S}\frac{2^{-h+2}\err(x,v^{C})}{\mu}\\
	&\le O\left(\frac{2^{-h+6} k}{\beta\epsilon^2 m}\right)\\
	\end{align*}
	where the second derivation is due to $\|v^{C,h}\|_\infty\le \gamma$ for all $h\in [\log\eps^{-1}]$ and the last derivation is due to~\Cref{lem:bounderror}. The following lemma demonstrates that an upper bound of the variance of each $\sum_{i=1}^m\frac{g_i}{\mu}(v_{s_i}^{C,h}-v_{s_i}^{C,h-1})$ leads to an upper bound of $E_h$.
	
	\begin{lemma}[Lemma 2.3 of~\cite{massart2007concentration}]
		Let $g_i\sim N(0,\sigma_i^2)$ for $i\in [m]$ be Gaussian random variables (not need to be independent ) and let $\sigma=\max_{i\in m}\sigma_i$, then it holds that
		\begin{equation*}
		\E\left[\max_{i\in [m]}|g_i| \right]\le 2\sigma\cdot\sqrt{2\ln n}.
		\end{equation*} 
	\end{lemma}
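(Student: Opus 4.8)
The plan is the standard moment‑generating‑function (MGF) argument for the maximum of Gaussians, the one subtlety being that we must never invoke joint independence of $g_1,\dots,g_m$ — only the marginal law $g_i\sim N(0,\sigma_i^2)$ should enter. First I would record the scalar fact that for any $\lambda\in\R$ and $g_i\sim N(0,\sigma_i^2)$ we have $\E[e^{\lambda g_i}]=e^{\lambda^2\sigma_i^2/2}\le e^{\lambda^2\sigma^2/2}$, since $\sigma_i\le\sigma$; splitting the absolute value then gives $\E[e^{\lambda|g_i|}]\le \E[e^{\lambda g_i}]+\E[e^{-\lambda g_i}]\le 2e^{\lambda^2\sigma^2/2}$.

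Next, fix $\lambda>0$ and apply Jensen's inequality to the convex function $t\mapsto e^{\lambda t}$ to get $\exp\!\big(\lambda\,\E[\max_i|g_i|]\big)\le \E\big[\exp(\lambda\max_i|g_i|)\big]=\E\big[\max_i e^{\lambda|g_i|}\big]$. Bounding the maximum of the nonnegative terms $e^{\lambda|g_i|}$ by their sum — this is the union‑bound‑style step, and it is precisely here that only the marginals matter — gives $\E[\max_i e^{\lambda|g_i|}]\le \sum_{i=1}^m\E[e^{\lambda|g_i|}]\le 2m\,e^{\lambda^2\sigma^2/2}$. Taking logarithms yields, for every $\lambda>0$, $\E[\max_{i\in[m]}|g_i|]\le \tfrac{\ln(2m)}{\lambda}+\tfrac{\lambda\sigma^2}{2}$.

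Finally I would optimize the right‑hand side over $\lambda>0$: the map $\lambda\mapsto \ln(2m)/\lambda+\lambda\sigma^2/2$ is convex on $(0,\infty)$ and minimized at $\lambda^\ast=\sqrt{2\ln(2m)}/\sigma$, where both terms equal $\tfrac12\sigma\sqrt{2\ln(2m)}$, so $\E[\max_{i\in[m]}|g_i|]\le \sigma\sqrt{2\ln(2m)}$. The claimed (looser) bound then follows by crude estimates: for $m\ge2$ one has $\ln(2m)\le 2\ln m$, hence $\sigma\sqrt{2\ln(2m)}\le 2\sigma\sqrt{\ln m}\le 2\sigma\sqrt{2\ln m}$, while the case $m=1$ is immediate from $\E|g_1|\le\sigma$; here the ``$n$'' in the displayed inequality should be read as the number of indices being maximized over, i.e. $m$ (or any upper bound on it, as holds in the application where $m=|S|\le n=|X|$). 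There is no real obstacle: the only points to watch are keeping the derivation dependence‑free — which the sum‑over‑$i$ bound on the maximum does automatically — and restricting the optimization to $\lambda>0$.
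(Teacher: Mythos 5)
Your proof is correct: the paper does not prove this statement but simply cites it from Massart, and your Jensen/MGF argument with the sum-over-indices bound (which indeed never uses independence) is exactly the standard derivation of that cited lemma, yielding the sharper $\sigma\sqrt{2\ln(2m)}$ before weakening to the stated form. You also sensibly resolve the paper's $n$-versus-$m$ typo; the only cosmetic quibble is that for $m=1$ the stated bound degenerates to $0$, so that edge case is really a defect of the statement rather than something your $\E|g_1|\le\sigma$ remark fixes, but it is irrelevant to how the lemma is used.
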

	The number of distinct difference vector $v^{C,h}-v^{C,h-1}$ is at most $|U_{2^{-h}}\times U_{2^{-h+1}}|\le N_X^{2^{-h}}(m)\cdot N_X^{2^{-h+1}}(m)$.
	Therefore, we have 
	\begin{align*}
		E_h&\le \sqrt{8\log (|U_{2^{-h}}|\cdot|U_{2^{-h+1}}|)\cdot O\left(\frac{2^{-h+6} k}{\beta\epsilon^2 m}\right)}\\
	&\le \sqrt{O(1)\cdot \log N_X^{2^{-h}}(m)\cdot \frac{2^{-h}k}{\beta\eps^2m}}
	\end{align*}
	Plug in $\sum_{h=1}^{\log\eps^{-1}}E_h$, we have 
	\begin{equation}
	\label{eq:final}
	\sum_{h=1}^{\log \eps^{-1}}E_h\le O(1)\cdot \sqrt{\frac{k}{\beta\eps^2m}}\sum_{h=1}^{\log\eps^{-1}} \sqrt{\log N_X^{2^{-h}}(m)}.
	\end{equation}
	Since $m\ge t\cdot \frac{k}{\beta\eps^2}\left(\sum_{i=1}^{\log\eps^{-1}}\sqrt{2^{-i}\log N^{2^{-i}}_X(m)} \right)^2$ for sufficiently large constant $t$, we can bound $\sum_{h=1}^{\log\eps^{-1}}E_h$ by $\lambda$.
	which completes the proof.
\end{proof}

\section{Weak Coresets in Various Metric Spaces}
\label{sec:application}
We apply \Cref{thm:dim_ind} to various metric spaces and obtain the following theorem, by analyzing their covering number.

\begin{theorem}
    \label{thm:various_metric}
    For a metric space $M=(\calX,\dist)$ and a dataset $X\subseteq \calX$, an integer $k\ge 1$ and real numbers $\beta,\epsilon\in(0,1)$, let $S$ be a set of uniform samples with size
    \begin{itemize}
        \item $O\left(\frac{k^2}{\beta\eps^3}\cdot\log^2\frac{k}{\beta\eps}\cdot\log^2\frac{1}{\eps} \right)$ if $M$ is Euclidean $\mathbb{R}^d$;
        \item $O\left(\frac{k^2}{\beta\eps^2}\cdot \ddim\cdot \log\frac{k}{\beta\eps}\right)$ if $M$ has doubling dimension $\ddim$;
        \item $O\left(\frac{k^2}{\beta\eps^2}\cdot \log |\calX|\cdot \log\frac{k}{\beta\eps}\right)$ if $M$ is a finite metric;
        \item $O\left(\frac{k^2}{\beta\eps^2}\cdot \tw\cdot \log\frac{k}{\beta\eps}\right)$ if $M$ is the shortest-path metric of a graph with treewidth $\tw$.
    \end{itemize}
    Then $S$ is an $\epsilon$-weak coreset for \kBMedian on $X$ with probability $0.9$.
\end{theorem}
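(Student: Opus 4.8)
The plan is to obtain all four bounds as instantiations of \Cref{thm:dim_ind}: for each metric space it suffices to prove an upper bound on the covering number $N_X^{\alpha}(m)$ (\Cref{def:covering}) at every dyadic scale $\alpha=2^{-i}$, $i\le\log\eps^{-1}$, that makes the defining inequality \eqref{eq:m_X} --- or, when the scales are cheap enough, the coarser \eqref{eq:simple_m_X} --- hold for the stated $m$. Bounding $N_X^{\alpha}(m)$ amounts to: for an arbitrary $S$ with $\xi_S$, constructing a family $U\subset\R^X$ of vectors with $\|u\|_\infty\le\gamma$ such that every cost vector $v^C$ with $C\in\C'(X)$ is approximated on $S$ within $\alpha\cdot\err(x,v^C)$. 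Our main tool is a \emph{rounding of centers}: replace each $c\in C$ by the nearest point of a hierarchical net, where $c$ is rounded to precision $\Theta(\alpha)$ times its distance to the closest reference point in $S\cup C^\star$ (and not rounded at all once that distance drops below $\alpha\cdot\frac1n\OPT_\beta(X)$), and take $u=v^{\hat C}$ clipped to $[-\gamma,\gamma]$. Since moving a center by $\rho$ perturbs each $\dist(x,C)$ by at most $O(\rho)$, the term $\dist(x,C)+\dist(x,C^\star)$ in $\err$ absorbs the rounding error for far points while the additive $\frac1n\OPT_\beta(X)$ term absorbs it for near points. Using \eqref{eq:b1} one checks that a ``useful'' center of a good $C$ lies within radius $O(\dist(x,C^\star)+\frac{k}{\beta n}\OPT_\beta(X))$ of a reference point $x\in S$, so only $O(\log\frac{k}{\beta\eps})$ scales are relevant.

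For \emph{finite metrics} no rounding is needed: take $U=\{v^C:C\in\C'(X)\}$, which is $\gamma$-bounded by \eqref{eq:b2} and has at most $|\calX|^k$ elements, so $\log N_X^{\alpha}(m)\le k\log|\calX|$ for all $\alpha$; \eqref{eq:simple_m_X} then gives $m=O(\frac{k^2}{\beta\eps^2}\log|\calX|)$, within the claimed bound. For \emph{doubling metrics}, a net of a ball of radius $r$ at precision $\alpha r$ carries $(1/\alpha)^{O(\ddim)}$ points; summing over $O(m)$ reference points and $O(\log\frac{k}{\beta\eps})$ scales and then over $k$-subsets gives $\log N_X^{\alpha}(m)=O(k\,\ddim\,\log\frac{k}{\beta\eps})$, and \eqref{eq:simple_m_X} yields $m=O(\frac{k^2}{\beta\eps^2}\,\ddim\,\log\frac{k}{\beta\eps})$. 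For \emph{bounded-treewidth shortest-path metrics} we replace balls by portal sets: a tree decomposition of width $\tw$ lets us round candidate centers to $(1/\alpha)^{O(\tw)}$ representatives per region, mirroring the coreset construction of~\cite{DBLP:conf/icml/BakerBHJK020}; the same bookkeeping gives $\log N_X^{\alpha}(m)=O(k\,\tw\,\log\frac{k}{\beta\eps})$ and hence $m=O(\frac{k^2}{\beta\eps^2}\,\tw\,\log\frac{k}{\beta\eps})$. For these three cases the required covering number is essentially the same at every relevant scale, so \eqref{eq:simple_m_X} loses nothing.

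The \emph{Euclidean} case is where the real work lies, and the only place we genuinely need the chaining form \eqref{eq:m_X}. A net directly in $\R^d$ would contribute a factor $d$, so we first apply a \emph{terminal} Johnson--Lindenstrauss embedding~\cite{DBLP:conf/stoc/NarayananN19} to $S\cup C^\star$; the point is that for the $\alpha$-covering it suffices to preserve all distances between $S\cup C^\star$ and arbitrary points up to a $(1\pm\Theta(\alpha))$ factor --- the distortion being charged to the adaptive term of $\err$ --- and this needs only $d'_\alpha=O(\alpha^{-2}\log m)$ dimensions. Running the rounding argument inside $\R^{d'_\alpha}$ gives $\log N_X^{\alpha}(m)=O\!\big(k\,\alpha^{-2}\log m\,\log\tfrac1\alpha\big)$ (this is the content of \Cref{lem:covering_euclidean}). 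Since $\sqrt{2^{-i}\log N_X^{2^{-i}}(m)}=\Theta(\sqrt{k\,2^{i}\,i\,\log m})$ grows geometrically in $i$, the sum in \eqref{eq:m_X} is dominated, up to a constant, by its last term at $i=\log\eps^{-1}$, namely $\Theta(\sqrt{k\,\eps^{-1}\log\eps^{-1}\log m})$; this is exactly why the scale-dependent dimension $d'_\alpha$ turns the naive $\eps^{-4}$ bound into $\eps^{-3}$, the extra $\eps^{-1}$ being saved by chaining as in~\cite{talagrand1996majorizing}. Substituting into \eqref{eq:m_X} gives the self-referential inequality $m\ge O(\frac{k^2}{\beta\eps^3}\log\eps^{-1}\log m)$; unfolding $\log m=O(\log\frac{k}{\beta\eps})$ and accounting for the lower-order logarithmic terms (the embedding's failure probability must survive a union bound over the $O(\log\eps^{-1})$ scales and the reference points) yields $m=O(\frac{k^2}{\beta\eps^3}\log^2\frac{k}{\beta\eps}\log^2\frac1\eps)$, as claimed.

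The step I expect to be the main obstacle is the Euclidean covering-number estimate, \Cref{lem:covering_euclidean}: one must set up the terminal embedding so that its multiplicative distortion is charged \emph{only} to $\dist(x,C)+\dist(x,C^\star)$ and never to $\frac1n\OPT_\beta(X)$, and one must argue that the region actually requiring a net has aspect ratio $\poly(k/\beta\eps)$ --- using \eqref{eq:b1}, \eqref{eq:b2} and \Cref{lem:bounderror} to discard far-away useless centers and sub-$\frac1n\OPT_\beta(X)$ precision --- so that the per-coordinate net count stays $O(\log\frac{k}{\beta\eps})$ rather than depending on the instance's diameter. A secondary subtlety is that the covering $U$ is allowed to depend on $S$; this is not a problem here because the $S$-dependence has already been absorbed into the Gaussian-process reduction inside the proof of \Cref{thm:dim_ind}, so for \Cref{thm:various_metric} we only need a worst-case bound over $S\in\calS(m)$. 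For the doubling and treewidth cases the analogous but milder obstacle is verifying that restricting to $O(\log\frac{k}{\beta\eps})$ scales loses nothing, which again follows from \eqref{eq:b1}--\eqref{eq:b2}.
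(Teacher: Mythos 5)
Your plan is correct and follows essentially the same route as the paper: instantiate \Cref{thm:dim_ind} by bounding the covering number $N_X^{\alpha}(m)$ for each metric (terminal Johnson--Lindenstrauss plus a net of the $O(\frac{k}{\beta n}\OPT_\beta(X))$-radius region around $C^\star$ for Euclidean space, with the chaining form \eqref{eq:m_X} delivering the $\eps^{-3}$; doubling-metric nets and the treewidth portal/rounded-distance construction of \cite{DBLP:conf/icml/BakerBHJK020} plugged into \eqref{eq:simple_m_X} for the other cases), and you correctly identify both the role of the adaptive term $\dist(x,C)+\dist(x,C^\star)$ in $\err$ and the fact that only a worst-case bound over $S\in\calS(m)$ is needed. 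The only differences are immaterial bookkeeping choices: you handle finite metrics directly via the trivial $|\calX|^k$ covering rather than the paper's reduction to doubling dimension $O(\log|\calX|)$, and you center multiplicative-precision nets at $S\cup C^\star$ where the paper uses a single additive-precision net around $C^\star$ only.
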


\subsection{Euclidean Space}
\begin{lemma}[Covering Number in Euclidean Space]
    \label{lem:covering_euclidean}
    In Euclidean space $(\R^d,\dist)$, for integer $m>0$ and real number $0<\alpha<1/2$, it holds that  
    \begin{equation*}
        \log |N_X^{\alpha}(m)|\le O\left(k \alpha^{-2}\log(m+k)\log\frac{k}{\beta\alpha}\right)
    \end{equation*}
\end{lemma}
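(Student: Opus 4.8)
The plan is to prove the bound by combining a terminal Johnson--Lindenstrauss embedding with an explicit net of candidate centers: reduce the ambient dimension from $d$ to $d'=O(\alpha^{-2}\log(m+k))$, cover the relevant centers of every good $C$ (i.e.\ $C\in\C'(X)$) by a small net $\mathcal{N}\subseteq\R^{d'}$, and take the covering $U$ to be the set of cost vectors induced by all subsets of $\mathcal{N}$ of size at most $k$. This immediately gives $\log|U|=O(k\log|\mathcal{N}|)$, and the bookkeeping will show $\log|\mathcal{N}|=O\!\big(\alpha^{-2}\log(m+k)\log\tfrac{k}{\beta\alpha}\big)$, as desired.

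First I would record a sandwich for the error function. With $L:=\tfrac1n\OPT_\beta(X)$ and $E_x:=\dist(x,C^\star)+L$, \eqref{eq:b2} gives $\|v^C\|_\infty\le\gamma/2$, hence $\dist(x,C)\le\dist(x,C^\star)+\gamma/2$ for every good $C$; since $\gamma/L=O(k/\beta)$ this yields, for all $x\in X$ and all good $C$,
\[
E_x\ \le\ \err(x,v^C)\ \le\ O(k/\beta)\cdot E_x,\qquad \dist(x,C)\ \le\ O(k/\beta)\cdot E_x .
\]
So it suffices to produce, for each good $C$, a vector $u$ with $|u_x-v^C_x|\le\Theta(\alpha)E_x$ on $S$ and $\|u\|_\infty\le\gamma$. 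Fix any $S\in\calS(m)$ and apply the terminal embedding of~\cite{DBLP:conf/stoc/NarayananN19} to $T:=S\cup C^\star$, obtaining $\pi\colon\R^d\to\R^{d'}$ with $d'=O(\alpha^{-2}\log(m+k))$ and $\dist(t,y)\le\|\pi(t)-\pi(y)\|\le(1+\Theta(\alpha))\dist(t,y)$ for all $t\in T$, $y\in\R^d$; consequently $\dist(x,C)\le\|\pi(x)-\pi(C)\|\le(1+\Theta(\alpha))\dist(x,C)$ for every $x\in S$ and every center set $C$. Put $R_x:=2\dist(x,C^\star)+\gamma$, so $R_x=O(k/\beta)E_x$ and, for good $C$, the point of $\pi(C)$ nearest $\pi(x)$ lies in $B(\pi(x),R_x)$. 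For each $x\in S$ let $\mathcal{N}_x$ be a $\tfrac{\alpha}{c}E_x$-net of $B(\pi(x),4R_x)$ in $\R^{d'}$ for a suitable constant $c$, so $|\mathcal{N}_x|=\big(O(k/(\beta\alpha))\big)^{d'}$, and set $\mathcal{N}:=\bigcup_{x\in S}\mathcal{N}_x$, which has $\log|\mathcal{N}|=O\!\big(d'\log\tfrac{k}{\beta\alpha}+\log m\big)$.

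For a good $C$, let $\hat C\subseteq\mathcal{N}$ be obtained by rounding each center of $\pi(C)$ to its nearest point of $\mathcal{N}$, and for $x\in S$ define $u^C_x$ to be $\|\pi(x)-\hat C\|-\dist(x,C^\star)$ truncated into $[-\gamma,\gamma]$, with $u^C_x:=0$ for $x\notin S$. Taking $U$ to be the collection of all vectors arising this way from some $\hat C\subseteq\mathcal{N}$ with $|\hat C|\le k$, we get a $\gamma$-bounded set with $|U|\le(|\mathcal{N}|+1)^k$, hence (maximizing over $S\in\calS(m)$) $\log N_X^\alpha(m)\le O(k\log|\mathcal{N}|)=O(kd'\log\tfrac{k}{\beta\alpha}+k\log m)=O\!\big(k\alpha^{-2}\log(m+k)\log\tfrac{k}{\beta\alpha}\big)$, absorbing the $k\log m$ term. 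It then remains to check the covering property $\big|\,\|\pi(x)-\hat C\|-\dist(x,C)\,\big|\le\alpha\,\err(x,v^C)$ on $S$ (the truncation only helps, since $|v^C_x|\le\gamma/2$). The JL distortion accounts for $\Theta(\alpha)\dist(x,C)\le\Theta(\alpha)\err(x,v^C)$, so it suffices to bound $\big|\,\|\pi(x)-\hat C\|-\|\pi(x)-\pi(C)\|\,\big|\le\Theta(\alpha)E_x$.

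The upper bound is immediate: rounding $\pi(C)(\pi(x))\in B(\pi(x),R_x)$ moves it by at most $\tfrac{\alpha}{c}E_x$. The lower bound is the one genuine obstacle: a center of $\pi(C)$ far from $\pi(x)$ could a priori round to a net point spuriously close to $\pi(x)$, shrinking $\|\pi(x)-\hat C\|$. I would rule this out by a two-case argument on the center $c'\in\pi(C)$ whose rounding $\hat c$ realizes $\|\pi(x)-\hat C\|$. If $c'\in B(\pi(x),4R_x)$, then $\mathcal{N}_x$ covers this ball, so $\|\hat c-c'\|=\dist(c',\mathcal{N})\le\tfrac{\alpha}{c}E_x$ and $\|\pi(x)-\hat c\|\ge\|\pi(x)-c'\|-\tfrac{\alpha}{c}E_x\ge\|\pi(x)-\pi(C)\|-\tfrac{\alpha}{c}E_x$. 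Otherwise $\|\pi(x)-c'\|>4R_x$, and bounding $\dist(c',\mathcal{N})$ through the point of $\partial B(\pi(x),4R_x)$ on the segment $[\pi(x),c']$ gives $\|\hat c-c'\|\le\|\pi(x)-c'\|-4R_x+\tfrac{\alpha}{c}E_x$, whence $\|\pi(x)-\hat c\|\ge 4R_x-\tfrac{\alpha}{c}E_x\ge\|\pi(x)-\pi(C)\|$. Either way $\|\pi(x)-\hat C\|\ge\|\pi(x)-\pi(C)\|-\Theta(\alpha)E_x$, finishing the verification. The scheme works precisely because the $\dist(x,C)$ summand in $\err$, together with the sandwich $E_x\le\err(x,v^C)\le O(k/\beta)E_x$, lets us use a net of resolution $\Theta(\alpha)E_x$ over a ball of radius $O(k/\beta)E_x$ — a ratio $O(k/(\beta\alpha))$ independent of $d$ — rather than a resolution tied to the (possibly tiny) individual value $\dist(x,C^\star)$, which is exactly what replaces $d$ by $d'$.
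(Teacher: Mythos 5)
Your proposal is correct, and its skeleton matches the paper's proof of \Cref{lem:covering_euclidean}: both apply the terminal Johnson--Lindenstrauss embedding to $S\cup C^\star$ to reach dimension $O(\alpha^{-2}\log(m+k))$, then discretize centers by a net whose radius-to-resolution ratio is $O(k/(\beta\alpha))$, take all $\le k$-subsets of net points to get $\log|U|=O(k\,d'\log\frac{k}{\beta\alpha})$, and charge the JL distortion to the $\dist(x,C)+\dist(x,C^\star)$ part of $\err$. The genuine difference is where the nets live and at what scale. The paper places one ball of radius $\frac{12\lambda k}{\beta n}\OPT_\beta(X)$ around each embedded optimal center $g(c^\star)$ (using that every center of a good $C$ is within $O(\lambda k/(\beta n))\OPT_\beta(X)$ of $C^\star$) and uses a single \emph{absolute} resolution $\frac{\alpha}{n}\OPT_\beta(X)$; the rounding error is then uniformly at most $\frac{\alpha}{n}\OPT_\beta(X)\le\alpha\,\err(x,v^C)$ for every $x$, so no further case analysis is needed, and the additive $\frac1n\OPT_\beta(X)$ term in $\err$ is exactly what absorbs it. You instead place a net of \emph{point-dependent} resolution $\Theta(\alpha)E_x$ around each embedded sample $\pi(x)$ over a ball of radius $O(k/\beta)E_x$ (using \eqref{eq:b2} only to bound that radius), which forces you to handle the possibility that a far center rounds spuriously close to $\pi(x)$; your two-case argument through the boundary point of $B(\pi(x),4R_x)$ does close this correctly (in the far case $\|\pi(x)-\hat c\|\ge 4R_x-\tfrac{\alpha}{c}E_x\ge R_x\ge\|\pi(x)-\pi(C)\|$ since $E_x\le R_x$), and the truncation step you add is sound because $\|v^C\|_\infty\le\gamma/2$ so projecting onto $[-\gamma,\gamma]$ is non-expansive toward $v^C_x$, whereas the paper instead verifies the $L_\infty$ bound directly from the proximity of $g'(C)$ to $g(C^\star)$. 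Both routes give the same covering-number bound; the paper's choice of anchoring the net at $C^\star$ with an absolute grid buys a shorter verification, while yours is a bit more self-contained in that it never needs to argue that the rounded centers stay near $C^\star$.
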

\begin{proof}
By \Cref{def:covering} of the covering number, it suffices to construct an $\alpha$-covering of $V$ w.r.t $(S,\err)$ for every $S\in \mathcal{S}(m)$.
We need \emph{terminal embedding} as stated in the following theorem:
\begin{theorem}[Terminal Johnson-Lindenstrauss Lemma \cite{DBLP:conf/stoc/NarayananN19}] 
    \label{thm:terminal}
    For every $\varepsilon\in(0,1/2)$ and finite set $Y\subseteq \R^d$, there exists an embedding $g:\R^d\to\R^t$ for $t=O(\varepsilon^{-2}\log |Y|)$ such that 
    \begin{eqnarray*}
        \forall x\in Y,\forall c\in \R^d,\quad \dist(x,c)\le \dist(g(x),g(c))\le (1+\varepsilon)\dist(x,c).
    \end{eqnarray*}
    We call $g$ is an $\varepsilon$-terminal embedding for $Y$.
\end{theorem}

For every $S\in \mathcal{S}(m)$, let $Y:=S\cup C^\star$, by \Cref{thm:terminal}, there exists an $\alpha$-terminal embedding $g$ for $Y$ with target dimension $t=O(\alpha^{-2}\log |Y|)=O(\alpha^{-2}\log (m+k))$. For a set $A\subset \R^d$, we denote by $g(A):=\{g(x):x\in A\}$ the set of images of all $x\in A$.
By the definition of $\C'(X)$, for every $C\in \C'(X)$, $c\in C$,
\begin{align*}
    \dist(g(c),g(C^\star))&\le \dist(g(c),g(C^\star(c)))\\
    &\le 2\dist(c,C^\star)\\
    &\le \frac{12\lambda k}{\beta n}\OPT(X)
\end{align*} 
We denote by $B_A(c,r):=\{x\in A:\dist(x,c)\le r\}$ the ball centered at $c$ of radius $r$ for $A\subseteq \calX$. Thus we have for every $C\in \C'(X)$, 
\[
    g(C)\subseteq \bigcup_{c^\star\in C^\star}B_{\R^t}\left(g(c^\star), \frac{12\lambda k}{\beta n}\OPT_\beta(X)\right)
\]
To construct a covering for the set of cost vectors, we discretize each ball via a classical \emph{covering} of a point set.

\begin{definition}[Covering of a Point Set]
    For a metric space $(\calX,\dist)$, a point set $A\subseteq\calX$ and real number $0\le\alpha<1$, we say $T\subseteq\calX$ is an $\alpha$-covering of $A$ if for every $x\in A$, there exists $y\in T$ such that $\dist(x,y)\le \alpha$.
\end{definition}
Note that the above definition of covering is different from \Cref{def:covering}, since they are for different objects. The following lemma bounds the cardinality of $\alpha$-covering of an Euclidean ball.
\begin{lemma}[Covering of a Euclidean Ball]
    \label{lem:ball_cover}
    For $\alpha>0$ and an Euclidean ball $B\subset \R^t$ of radius $\Delta>0$, there exists an $\alpha$-covering $T\subseteq B$ of size at most $\exp\left({O(t\log (\Delta/\alpha))}\right)$.
\end{lemma}

For every $c^\star\in C^\star$, let $T_{c^\star}$ be such an $\frac{\alpha\OPT_\beta(X)}{n}$-covering of $B_{\R^t}\left(g(c^\star), \frac{12\lambda k}{\beta} \cdot\frac{1}{n}\OPT_\beta(X)\right)$, and let $T_{C^\star}:=\bigcup_{c^\star\in C^\star}T_{c^\star}$. By \Cref{lem:ball_cover}, we have 
\begin{eqnarray*}
    |T_{C^\star}|\le k\cdot \exp\left({O\left(t\log{\frac{k}{\alpha\beta}}\right)}\right)
\end{eqnarray*}
Let $g':\R^d\to T_{C^\star}$ be a function satisfying that
\begin{eqnarray*}
    g'(x)=\arg \min_{y\in T_{C^\star}}\dist(g(x),y),    
\end{eqnarray*}
Construct $U$ is $\{\tilde v^{C}\in \R^{X}:C\in\C'(X)\}$, where $\tilde v^{C}$ is a cost function defined as: for every $x\in X$, $\tilde v^{C}_x:=\dist(g(x),g'(C))-\dist(g(x),g(C^\star))$. Observe that 
\begin{equation*}
    |U|\le |T_{C^\star}|^k\le \exp\left({O\left(kt\log\frac{k}{\alpha\beta}\right)}\right),
\end{equation*}
which implies that $\log|U|\le O(kt\log\frac{k}{\beta\alpha})=O(k\alpha^{-2}\log(m+k)\log\frac{k}{\beta\alpha})$, it remains to show that $U$ is the desired $\alpha$-covering of $V$.
\paragraph{Bounded Covering Error}
By the definition of $g'$, we have that for every $C\in \C'(X)$ and $c\in C$, 
\begin{equation*}
    \dist(g(c),g'(C)) \le \dist(g(c),g'(c))\le \frac{\alpha}{n}\OPT_\beta(X),
\end{equation*}
and
\begin{equation*}
    \dist(g'(c),g(C))\le \dist(g'(c),g(c))\le \frac{\alpha}{n}\OPT_\beta(X),
\end{equation*}
and thus for every $x\in S$, let $c_1$ denote the closest center in $g(C)$ to $g(x)$, and $c_2$ denote the closest center in $g'(C)$ to $g(x)$, then it holds that
\begin{equation}
    \label{eq:error_ball}
    |\dist(g(x),g(C))-\dist(g(x),g'(C))|\le \max\left\{\dist(c_1,g'(C)),\dist(c_2,g(C))\right\}\le \frac{\alpha}{n}\OPT_\beta(X).
\end{equation}
The error incurred by the covering is
\begin{align*}
|v^C_x-\tilde v^C_x|&=|\dist(x,C)-\dist(x,C^\star)-\dist(g(x),g'(C))+\dist(g(x),g(C^\star)) |\\
&\le |\dist(x,C)-\dist(g(x),g'(C)) | + |\dist(x,C^\star)-\dist(g(x),g(C^\star)) |\\
&\le |\dist(x,C)-\dist(g(x),g(C)) | + |\dist(x,C^\star)-\dist(g(x),g(C^\star)) | +\frac{\alpha}{n}\OPT_\beta(X) \\
&\le \alpha\dist(x,C)+\alpha\dist(x,C^\star)+\frac{\alpha}{n}\OPT_\beta(X)\\
&=\alpha\cdot \err(x,v^C)
\end{align*}
where the second derivation is due to triangle inequality, the third derivation is due to~\eqref{eq:error_ball}, and the forth derivation is due to the terminal embedding guarantee.
\paragraph{Bounded $L_\infty$ Norm} For every $\tilde v^C\in U$,
\begin{align*}
    \|\tilde v^C\|_\infty&=\max_{x\in X}|\dist(g(x),g'(C))-\dist(g(x),g(C^\star))|\\
    &\le\max_{x\in X} |\dist(g(x),g(C))-\dist(g(x),g(C^\star))| + \frac{\alpha}{n}\OPT_\beta(X)\\
    &\le \max\left\{\max_{c\in C}\dist(g(c),g(C^\star)),\max_{c^\star\in C^\star}\dist(g(c^\star),g(C)) \right\}+ \frac{\alpha}{n}\OPT_\beta(X)\\
    &\le (1+\alpha)\max\left\{\max_{c\in C}\dist(c,C^\star),\max_{c^\star\in C^\star}\dist(c^\star,C) \right\}+ \frac{\alpha}{n}\OPT_\beta(X)\\
    &\le \frac{12\lambda k}{\beta n}\OPT_\beta(X),
\end{align*}
where the second derivation is due to~\eqref{eq:error_ball}, the third derivation is due to the triangle inequality, and the last derivation follows from a similar argument of \Cref{lem:balance_is_good}.
Thus we can construct $\alpha$-covering of $V$ with respect to any $S\in \mathcal{S}(m)$, which concludes the proof.
\end{proof}
\begin{proof}[Proof of \Cref{thm:various_metric} in Euclidean space]
    We only need to bound the summation in \eqref{eq:m_X}.
    \begin{align*}
        &\quad\sum_{i=1}^{\log\eps^{-1}}\sqrt{2^{-i}\log N^{2^{-i}}_X(m)}\\
        &\le \sum_{i=1}^{\log\eps^{-1}} O(1)\cdot  \sqrt{2^{-i}\cdot k\cdot 2^{2i}\log m\log\frac{k}{\beta 2^{-i}}}\\
        &\le O(1)\cdot \log\eps^{-1}\cdot \sqrt{\eps^{-1}\cdot k \log m\log \frac{k}{\beta\eps}}.
    \end{align*}
    Therefore, it suffices to set $m=O\left(\frac{k^2}{\beta\eps^3}\cdot\log^2\frac{k}{\beta\eps}\cdot\log^2\frac{1}{\eps} \right)$.
\end{proof}

\subsection{Doubling Metric Space}
\begin{definition}[Doubling Dimension~\cite{assouad1983plongements,DBLP:conf/focs/GuptaKL03}]
	The doubling dimension of a metric space $M = (\calX, \dist)$, denoted as $\ddim(M)$, is the smallest integer $d$ such that any ball can be covered by at most $2^d$ balls of half the radius.
\end{definition}
\begin{lemma}[Covering Number in Doubling Metric Space]
    \label{lem:covering_doubling_metric}
    In a metric space $(\calX,\dist)$ with doubling dimension $\ddim$, for integer $m>0$ and real number $0<\alpha<1/2$, it holds that  
    \begin{equation*}
        \log |N_X^{\alpha}(m)|\le O(k\cdot \ddim \cdot \log (k/\alpha\beta)).
    \end{equation*}
\end{lemma}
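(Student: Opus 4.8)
The plan is to follow the template of the Euclidean argument (\Cref{lem:covering_euclidean}), but to drop the terminal Johnson--Lindenstrauss step entirely: in a doubling metric the doubling dimension already bounds the number of small balls needed to cover a large ball, so one can net the relevant balls directly inside $\calX$ without any dimension reduction.

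First I would fix an arbitrary $S\in\calS(m)$ and extract the key geometric fact about good center sets: applying \eqref{eq:b2} with $x$ ranging over the points of $C$ (where $\dist(x,C)=0$) gives $\dist(c,C^\star)\le\frac{6\lambda k}{\beta n}\OPT_\beta(X)=:r$ for every $c\in C$, so every $C\in\C'(X)$ lies inside the union of the $k$ balls $B(c^\star,r)$, $c^\star\in C^\star$. Next, choosing granularity $\alpha':=\frac{\alpha}{n}\OPT_\beta(X)$ and iterating the doubling property $O(\log(r/\alpha'))$ times, each ball $B(c^\star,r)$ admits an $\alpha'$-net $T_{c^\star}\subseteq\calX$ with $|T_{c^\star}|\le 2^{O(\ddim\log(r/\alpha'))}=2^{O(\ddim\log(k/\alpha\beta))}$, since $r/\alpha'=\frac{6\lambda k}{\alpha\beta}=O(k/\alpha\beta)$. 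Put $T_{C^\star}:=\bigcup_{c^\star\in C^\star}T_{c^\star}$ (note it depends only on $C^\star$, not on $S$), let $g'$ round each point of $\calX$ to its nearest neighbor in $T_{C^\star}$, and set $U:=\{\tilde v^C: C\in\C'(X)\}$ with $\tilde v^C_x:=\dist(x,g'(C))-\dist(x,C^\star)$. Since $g'(C)$ consists of at most $k$ points of $T_{C^\star}$, we have $|U|\le|T_{C^\star}|^k\le 2^{O(k\ddim\log(k/\alpha\beta))}$, matching the target bound on $\log N_X^\alpha(m)$.

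It then remains to verify that $U$ is an $\alpha$-covering of $V$ that is $\gamma$-bounded in the sense of \Cref{def:covering}. For the covering error, the two-sided triangle-inequality argument behind \eqref{eq:error_ball} shows $|\dist(x,C)-\dist(x,g'(C))|\le\alpha'$ for all $x\in X$ --- using only that $\dist(c,g'(c))\le\alpha'$ for each $c\in C$, which holds because $c$ lies in some net's ball --- and since $\err(x,v^C)\ge\frac{1}{n}\OPT_\beta(X)$ this is at most $\alpha\cdot\err(x,v^C)$; note that, unlike the Euclidean case, there is no multiplicative $(1+\alpha)$ embedding distortion to track. For the $L_\infty$ bound, combining $|\dist(x,C)-\dist(x,C^\star)|\le\frac{6\lambda k}{\beta n}\OPT_\beta(X)$ from \eqref{eq:b2} with the same $\alpha'$ slack gives $\|\tilde v^C\|_\infty\le\gamma$. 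Because $S$ was arbitrary --- and in fact $U$ was independent of $S$ --- this bounds $N_X^\alpha(m)=\max_{S\in\calS(m)}N^{\alpha,\gamma}(S,V,\err)$, which is the claim.

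The only point needing care, exactly as in the Euclidean proof, is calibrating $\alpha'$ so that it is simultaneously small enough to be absorbed into $\alpha\cdot\err(x,v^C)$ --- here into the additive $\frac{\alpha}{n}\OPT_\beta(X)$ built into $\err$ --- and large enough that the net-radius ratio $r/\alpha'$ stays $O(k/\alpha\beta)$, so the doubling bound pays only a $\log(k/\alpha\beta)$ factor per net. Given that, everything reduces to the routine bookkeeping above, and this step is genuinely easier than its Euclidean counterpart since there is no embedding to carry.
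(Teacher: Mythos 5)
Your proposal is correct and follows essentially the same route as the paper's proof of \Cref{lem:covering_doubling_metric}: drop the terminal embedding, cover each ball $B_{\calX}\left(c^\star,\frac{6\lambda k}{\beta n}\OPT_\beta(X)\right)$ at granularity $\frac{\alpha}{n}\OPT_\beta(X)$ via the doubling-dimension covering bound (\Cref{lem:cover_dm}), snap the centers of each good $C$ to the resulting net $T_{C^\star}$, and bound $|U|\le|T_{C^\star}|^k$ with the same covering-error and $L_\infty$ verification as in the Euclidean case. The only cosmetic difference is your justification of the radius bound $\dist(c,C^\star)\le\frac{6\lambda k}{\beta n}\OPT_\beta(X)$ via \eqref{eq:b2}, where the paper simply invokes the definition of $\C'(X)$ (the property established in the proof of \Cref{lem:balance_is_good}); this does not change the argument.
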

\begin{proof}
    It suffices to construct $\alpha$-covering with respect to $S$ for every $S\in \mathcal{S}(m)$. Different from the Euclidean case (\Cref{lem:covering_euclidean}), we directly apply the covering of point set without using terminal embedding. 
    The following lemma bounds the cardinality of a covering of point set in doubling metrics.
    \begin{lemma}[\cite{DBLP:conf/focs/GuptaKL03}]
        \label{lem:cover_dm}
        For $\alpha>0$ and a metric space $(\calX,\dist)$ with doubling dimension $\ddim$ and diameter $\Delta$, there exists an $\alpha$-covering $T$ of $\calX$ with $|T|\le \exp\left(O(\ddim\cdot\log (\Delta/\alpha))\right)$.
    \end{lemma}
    
    By definition of $\C'(X)$, for every $C\in\C'(X)$, $c\in C$, it holds that $\dist(c,C^\star)\le \frac{6\lambda k}{\beta}\cdot \frac{1}{n}\OPT(X)$. Hence, consider an $\frac{\alpha\OPT(X)}{n}$-covering $T_{c^\star}$ of $B_{\calX}\left(c^\star, \frac{6\lambda k}{\beta} \cdot\frac{1}{n}\OPT(X)\right)$ for every $c^\star\in C^\star$ and let $T_{C^\star}:=\bigcup_{c^\star\in C^\star}T_{c^\star}$. By \Cref{lem:cover_dm}, we have 
    \begin{eqnarray*}
        |T_{C^\star}|\le k\cdot \exp\left(O\left(\ddim\cdot \log{\frac{k}{\alpha\beta}}\right)\right)
    \end{eqnarray*}
    We define $g:V\to V$ such that for every $c\in V$,
    \begin{equation*}
        g(c)=\arg\min_{c'\in T_{C^\star}} \dist(c,c').
    \end{equation*}
    Construct $U$ is $\{\tilde v^{C}\in \R^{X}:C\in\C'(X)\}$, where $\tilde v^{C}$ is a cost function defined as: for every $x\in X$, $\tilde v^{C}_x:=\dist(x,g(C))-\dist(x,C^\star)$. Then a similar analysis for bounding the covering error and $L_\infty$ norm as in Euclidean case(\Cref{lem:covering_euclidean}) certifies that $U$ is an $\alpha$-covering of $V$ with respect to any $S\in \mathcal{S}(m)$, and $\log |U|\le O(k\cdot\ddim\cdot\log\frac{k}{\alpha\beta})$.
\end{proof}
\begin{proof}[Proof of \Cref{thm:various_metric} in doubling metric space and general discrete metric space]
    Directly plugging the covering number in \Cref{lem:covering_doubling_metric} into \eqref{eq:simple_m_X} concludes the proof in doubling metric space. For general discrete metric space $M=(\calX,\dist)$, we know that the doubling dimension is $O(\log |\calX|)$, and thus directly applying the result of doubling metric space completes the proof.
\end{proof}

\subsection{Shortest-path Metric of a Graph with Bounded Treewidth}
\begin{definition}[Tree Decomposition and Treewidth]
	A tree decomposition of a graph $G=(V, E)$ is a tree $\mathcal{T} = (\mathcal{V}, \mathcal{E})$, where each node in $\mathcal{V}$, called a bag, is a subset of vertices in $V$, such that the following holds.
	\begin{itemize}
		\item $\bigcup_{S \in \mathcal{V}} S = V$.
		\item $\forall u \in V$, the nodes in $\mathcal{V}$ that contain $u$ form a connected component in $\mathcal{T}$.
		\item $\forall (u, v) \in E$, there exists $S \in \mathcal{V}$ such that $\{u, v\} \subseteq S$.
	\end{itemize}
	The treewidth of $G$, denoted as $\tw(G)$, is the smallest integer $t$ such that there is a tree decomposition of $G$ with maximum bag size $t + 1$.
\end{definition}

\begin{lemma}[Covering Number in Shortest-path Metric of a Graph]
    \label{lem:covering_graph_metric}
    In shortest-path metric $M=(\calX,\dist)$ of a graph with bounded treewidth $\tw$, for integer $m>0$ and real number $0<\alpha<1/2$, it holds that  
    \begin{equation*}
        \log |N_X^{\alpha}(m)|\le O(k\cdot \tw \cdot \log (k/\alpha\beta + m/\alpha)).
    \end{equation*}
\end{lemma}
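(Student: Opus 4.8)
The plan is to adapt the argument of \Cref{lem:covering_doubling_metric} to the graph setting: for every $S\in\calS(m)$ we exhibit an $\alpha$-covering $U=\{\tilde v^C:C\in\C'(X)\}$ of $V$ with respect to $(S,\err)$, where $\tilde v^C_x:=\dist(x,g(C))-\dist(x,C^\star)$ for a ``snapping'' map $g$ that sends each good center to a nearby representative vertex. By \eqref{eq:b2} every $C\in\C'(X)$ satisfies $\dist(c,C^\star)\le R:=\frac{6\lambda k}{\beta n}\OPT_\beta(X)$ for all $c\in C$, so it suffices, for each $c^\star\in C^\star$, to build a set $T_{c^\star}\subseteq B_\calX(c^\star,R)$ of representative vertices such that every $c\in B_\calX(c^\star,R)$ has some $t\in T_{c^\star}$ with $|\dist(x,c)-\dist(x,t)|\le\frac{\alpha}{n}\OPT_\beta(X)$ for all $x\in S$; we then set $g(c)$ to be such a $t$ drawn from $T_{C^\star(c)}$, where $C^\star(c)$ is the center of $C^\star$ closest to $c$. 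With this choice the bounded-covering-error requirement of \Cref{def:covering} is immediate, since $|v^C_x-\tilde v^C_x|=|\dist(x,C)-\dist(x,g(C))|\le\max_{c\in C}|\dist(x,c)-\dist(x,g(c))|\le\frac{1}{n}\OPT_\beta(X)\le\err(x,v^C)$ for $x\in S$; and the bounded-$L_\infty$ requirement $\|\tilde v^C\|_\infty\le\gamma$ follows exactly as in \Cref{lem:covering_doubling_metric} (using $\dist(g(c),C^\star(c))\le R$ and \Cref{lem:balance_is_good}). Since $|U|\le(k\cdot\max_{c^\star}|T_{c^\star}|)^k$, it remains to bound $|T_{c^\star}|$, i.e.\ the number of distinct $\frac{\alpha}{n}\OPT_\beta(X)$-rounded ``$S$-distance profiles'' $(\dist(x,c))_{x\in S}$ realized by centers $c\in B_\calX(c^\star,R)$.

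Unlike in doubling metrics, we cannot use a metric net of the ball $B_\calX(c^\star,R)$ (bounded-treewidth graphs have unbounded doubling dimension), so we instead use a tree decomposition $\mathcal T$ of the underlying graph of width $\tw$. Restrict $\mathcal T$ to the minimal subtree $\mathcal T'$ whose bags cover $S\cup C^\star$; by a standard argument $\mathcal T'$ has at most $|S|+k\le m+k$ leaves, hence (after suppressing degree-two nodes) $O(m+k)$ branching and leaf bags and $O(m+k)$ maximal non-branching ``segments'', each segment having two interface adhesions of size $\le\tw+1$; we may additionally assume every point of $S\cup C^\star$ lies in a branching or leaf bag. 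The key structural fact is that for every $c\in B_\calX(c^\star,R)$ there is a single vertex set $P_c$ with $|P_c|\le 2(\tw+1)$ --- namely a branching or leaf bag of $\mathcal T'$, or the union of the two interface adhesions of a segment of $\mathcal T'$, according to where the subtree of $\mathcal T$ that attaches $c$ meets $\mathcal T'$ --- for which the separator property of tree decompositions yields $\dist(x,c)=\min_{b\in P_c}(\dist(x,b)+\dist(b,c))$ \emph{simultaneously for all $x\in S$}. Hence the profile of $c$ is pinned down by (i) the identity of $P_c$, one of $O(m+k)$ choices, and (ii) the at most $2(\tw+1)$ numbers $\dist(b,c)$, $b\in P_c$, each of which lies in the length-$2R$ interval $[\dist(b,c^\star)-R,\dist(b,c^\star)+R]$ and therefore, rounded to the nearest multiple of $\frac{\alpha}{2n}\OPT_\beta(X)$, takes only $O(Rn/(\alpha\OPT_\beta(X)))=O(k/\alpha\beta)$ values. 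Thus $|T_{c^\star}|\le O(m+k)\cdot(O(k/\alpha\beta))^{O(\tw)}$, which gives $\log|U|\le O(k\tw\log(k/\alpha\beta)+k\log(m+k))\le O(k\tw\log(k/\alpha\beta+m/\alpha))$ --- free of any dependence on $n=|X|$ --- and substituting this bound into \eqref{eq:simple_m_X} proves the corresponding case of \Cref{thm:various_metric}.

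The main obstacle is the structural fact above: producing a single $O(\tw)$-size separator $P_c$ that serves all of $S$ at once and is determined by only $O(\log(m+k))$ bits of information about $c$. A naive recursive balanced-separator decomposition would record one $(\tw+1)$-sized separator per level over $\Theta(\log n)$ levels, which is exactly the $\log n$ factor we must avoid; restricting to the $O(m+k)$-node backbone $\mathcal T'$ and using its segment interface adhesions as ``portals'' is what removes it. The one delicate point is to verify that a segment's two interface adhesions indeed separate any $c$ attached along that segment from \emph{every} point of $S$ (this uses that no point of $S$ lies privately inside a segment). The remaining steps --- checking both conditions of \Cref{def:covering} for $U$ --- are routine transcriptions of the doubling-metric argument.
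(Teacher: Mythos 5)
Your proof is correct, but it takes a genuinely different route from the paper's. The paper does not build a net of centers at all: it invokes the structural decomposition of Baker et al.~\cite{DBLP:conf/icml/BakerBHJK020} (\Cref{lem:structral_lem}), which supplies $\poly(|S|)$ clusters $T$, each with an $O(\tw)$-size boundary $P_T$ and $|T\cap S|\le O(\tw)$, and discretizes the whole function $\dist(c,\cdot)$ on the important vertices $I_T=(S\cap T)\cup P_T$ to multiples of $\frac{\alpha}{n}\OPT_\beta(X)$, extending to the rest of $S$ through $P_T$; its covering $U$ keeps one true cost vector $v^{C'}\in V$ per rounded class, so the $L_\infty$ condition is automatic, and each coordinate ranges over $O(k/\alpha\beta+m/\alpha)$ values. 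You instead work directly on a width-$\tw$ tree decomposition, restrict to the Steiner backbone spanning bags of $S\cup C^\star$, and use branching/terminal bags and segment adhesions as $O(\tw)$-size portals $P_c$ through which all distances from $c$ to $S$ factor; anchoring $\dist(b,c)$ around $\dist(b,C^\star(c))$ exploits $\dist(c,C^\star)\le R$ and shrinks the per-coordinate range to $O(k/\alpha\beta)$, pushing the $m$-dependence into the $O(m+k)$ choices of portal set, so you in fact get the slightly sharper $O\bigl(k\,\tw\log\frac{k}{\alpha\beta}+k\log(m+k)\bigr)$. The delicate point you flag does hold: once each sample is pinned to a marked bag, a point $x\in S$ whose bags straddle a segment's interface edge lies in that adhesion itself, so $\dist(x,c)=\min_{b\in P_c}(\dist(x,b)+\dist(b,c))$ is valid for all $x\in S$ in every attachment case. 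Two cosmetic slips: the covering-error chain should read $|v^C_x-\tilde v^C_x|\le \frac{\alpha}{n}\OPT_\beta(X)\le\alpha\cdot\err(x,v^C)$ (you dropped the $\alpha$), and since $g(c)$ is only profile-close to $c$ on $S$ rather than metrically close as in the doubling case, your $\|\tilde v^C\|_\infty$ bound comes out as $\frac{18\lambda k}{\beta n}\OPT_\beta(X)$ rather than $\gamma=\frac{12\lambda k}{\beta n}\OPT_\beta(X)$; this only shifts constants absorbed in the $O(\cdot)$ of \Cref{lem:chaining_argument}, or can be avoided entirely by keeping, as the paper does, one genuine representative $v^{C'}$ per class instead of the synthetic $\tilde v^C$.
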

\begin{proof} 
    To bound the covering number in graph metrics, it suffices to construct an $\alpha$-covering of $V$ with respect $S$ for every $S\in \mathcal{S}(m)$.  
    The proof of \Cref{lem:covering_graph_metric} relies on the following structural lemma, proposed by~\cite{DBLP:conf/icml/BakerBHJK020}.
    \begin{lemma}[Structural Lemma, \cite{DBLP:conf/icml/BakerBHJK020}]
        \label{lem:structral_lem}
        Given graph $G=(\calX,E)$ with treewidth $\tw$, and $A\subseteq \calX$, there exists a collection $\mathcal{T}_A$ of subsets of $\calX$, such that the following holds.
        \begin{itemize}
            \item[1.] $\bigcup_{T\in \mathcal{T}_A} T=\calX$.
            \item[2.] $|\mathcal{T}_A|\le \poly(|A|)$.
            \item[3.] For every $T\in\mathcal{T}_A$, either $|T|\le O(\tw)$, or i) $|T\cap A|\le O(\tw)$ and ii) there exists $P_T\subseteq V$ with $|P_T|\le O(\tw)$ such that there is no edge in $E$ between $T$ and $\calX\setminus (T\cup P_T)$.
        \end{itemize}
    \end{lemma}
    Recall that any $S\in \mathcal{S}(m)$ makes $\xi_S$ happens, i.e., $S$ satisfies \eqref{eq:xi_S}. Thus for every $C\in \C'(X)$ and every $x\in S$, 
    \begin{eqnarray*}
        \dist(x,C^\star)\le \sum_{x\in S}\dist(x,C^\star)\le \frac{\lambda m}{n}\OPT(X),
    \end{eqnarray*}
    and thus 
    \begin{equation*}
        \dist(x,C)\le \dist(C(x),C^\star)+\dist(x,C^\star)\le\left(\frac{6k}{\beta}+m \right)\cdot\frac{\lambda}{n}\OPT(X).
    \end{equation*}
    Let $\mathcal{T}_S$ be a collection of subsets asserted by \Cref{lem:structral_lem}. 
    We can construct the covering as follows:
    
    For every $T\in\mathcal{T}$, and $c\in T$, let $P_{T}\subseteq \calX$ be a set asserted in condition 3 in \Cref{lem:structral_lem} (we assume $|T|>O(\tw)$, otherwise let $P_{T}=T$). By \Cref{lem:structral_lem}, we have for every $x\in \calX\setminus T$, it holds that
    \begin{equation*}
        \dist(x,c)=\min_{y\in P_{T}}\left\{\dist(x,y) + \dist(y,c)\right\},
    \end{equation*}
    since the shortest path between $x$ and $c$ must pass by some vertices in $P_T$.
    Let $I_T:=(S\cap T)\cup P_T$ denote the \emph{important vertices} of $T$,
    we define a \emph{rounded distance function} $\dist_T':T\times X\to \R_+$ w.r.t. $T$ as follows:
    \begin{itemize}
        \item $\forall c\in T, x\in I_T$, $\dist_T'(c,x)$ is the closest multiple of $\frac{\alpha}{n}\OPT(X)$ to $\dist(x,c)$ no greater than $\left(\frac{6k}{\beta}+m \right)\cdot\frac{\lambda}{n}\OPT(X)$.
        \item $\forall c\in T, x\in X\setminus I_T$, $\dist'_T(c,x)=\min_{y\in P_T}\left\{\dist'_T(c,y)+\dist(y,x) \right\}$.
    \end{itemize}
    Note that for $x\in T\setminus S$ or for $x\in \calX$ that satisfies $\dist(c,x)>\left(\frac{6k}{\beta}+m \right)\cdot\frac{\lambda}{n}\OPT(X)$, the rounded distance $\dist'_T(c,x)$ may be distorted badly. However, as discussed above, we only case about $x\in S$ which is ensured that the distance to $c$ is not that far. 
    
    Notice that for a fixed $c\in T$, the rounded distance function $\dist(c,\cdot)$ is determined by the values of $\dist_T'(c,x)$ for all $x\in I_T$, and each of them is among $\frac{6\lambda k}{\alpha\beta} + \frac{\lambda m}{\alpha}$ possible values. Therefore the number of distinct rounded distance functions $|\{\dist_T'(c,\cdot):c\in T\}|$ w.r.t. $T$ is 
    \begin{equation*}
        \le \left(\frac{2\lambda k}{\alpha\beta}+\frac{\lambda m}{\alpha} \right)^{|I_T|}\le \left(\frac{k}{\alpha\beta}+\frac{m}{\alpha}\right)^{O(\tw)}
    \end{equation*}
    We define $\dist':\calX\times X\to \R_+$ as: for every $c\in \calX$, let $T_c\in \mathcal{T}$ that contains $c$, it holds that $\dist'(c,x)=\dist'_{T_c}(c,x)$. Then we have 
    \begin{equation*}
        |\{\dist'(c,\cdot):c\in \calX\}|\le |\mathcal{T}_S|\cdot \left(\frac{k}{\alpha\beta}+\frac{m}{\alpha}\right)^{O(\tw)}\le \poly(|S|)\cdot \left(\frac{k}{\alpha\beta}+\frac{m}{\alpha}\right)^{O(\tw)}.
    \end{equation*}
    For every $C\in \C'(X)$, we denote by $\tilde v^C\in \R^X$ the \emph{rounded coset vector} defined as follows: for every $x\in X$, $\tilde v^C_x:=\min_{c\in C} \dist'(c,x)-\dist(x,C^\star)$. Let $\tilde{V}$ denote the set $\{\tilde v^C:C\in C'(X)\}$, we construct $U$ as follows: for every $\tilde v\in \tilde{V}$, $U$ contains one $v^C$ for some $C\in\C'(X)$ such that $\tilde v^C=\tilde v$.
    
    We next show that $U$ is the desired $\alpha$-covering.
    
    \paragraph{Cardinality of~$U$} Observe that the cardinality of $U$ is upper bounded by the number of distinct rounded distance functions $|\{\min_{c\in C}\dist'(c,\cdot): C\in \calX^k \}|$, which is at most 
    \begin{equation*}
        \left(\poly(|S|)\cdot \left(\frac{k}{\alpha\beta}+\frac{m}{\alpha}\right)\right)^{O(\tw)\cdot k} \le \left(\frac{k}{\alpha\beta}+\frac{m}{\alpha}\right)^{O(k\cdot \tw)}
    \end{equation*}
    Thus it holds that $\log |U|\le O(k\cdot \tw\cdot \log(\frac{k}{\alpha\beta}+\frac{m}{\alpha}) )$.
    
    \paragraph{Bouneded Covering Error}
    For every $C\in \C'(X)$, and $x\in S$, we have $\dist(x,C)\le \left(\frac{6k}{\beta}+m \right)\cdot\frac{\lambda}{n}\OPT(X)$. Hence, the error incurred by rounding, i.e., $|\dist'(c,x)-\dist(c,x)|$ is at most $\frac{\alpha}{n}\OPT(X)$. Let $\tilde v^C$ be the rounded cost vector with respect to $v^C$, by definition, there exists $C'\in \C'(X)$ such that $v^{C'}\in U$ and $\tilde v^{C'}=\tilde v^C$. The covering error is 
    \begin{align*}
        \left|v^C_x-v^{C'}_x\right|&\le \left|v^C_x-\tilde v^C+\tilde v^{C'}-v^{C'}_x\right|\\
        &\le \left|v^C_x-\tilde v^C \right| + \left|\tilde v^{C'}-v^{C'}_x \right|\\
        &\le \frac{2\alpha}{n}\OPT(X)\\
        &\le 2\alpha\cdot \err(x,v^C).
    \end{align*}
    It suffices to rescale $\alpha$.
    
    \paragraph{Bounded $L_\infty$ Norm} For every $v^C\in U$, we have $v^C\in V$ also, thus the $L_\infty$ norm is bounded because the $L_\infty$ norm of cost vectors in $V$ is bounded.
    \end{proof}
    \begin{proof}[Proof of \Cref{thm:various_metric} in graph metric space]
        Similar to doubling metric case, we plug the covering number in~\Cref{lem:covering_graph_metric} into \eqref{eq:simple_m_X} to concludes the proof.
    \end{proof}

\section{Lower Bounds}
\subsection{An $\Omega(1/\beta)$ Query Complexity Lower Bound for Any Algorithm}
\label{sec:proof_intro_lb}
\begin{theorem}[Restatement of \Cref{thm:intro_lb}]
    \label{thm:lb}
There exists a family of datasets $X \subset \mathbb{R}$ with balancedness $\beta$ such that any (randomized) $O(1)$-approximate algorithm for \twoMedian with success probability at least $3/4$ must query the identify of data points in $X$ for $\Omega(1/\beta)$ times (provided that queried points have free access to distance function). 
\end{theorem}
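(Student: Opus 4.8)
The plan is to prove the bound via Yao's minimax principle: I would construct a distribution $\mathcal D$ over one-dimensional datasets, all of balancedness $\beta$, and show that every \emph{deterministic} algorithm making at most $q:=\tfrac{1}{20\beta}$ identity queries is $O(1)$-approximate for \twoMedian with probability strictly below $3/4$ over $X\sim\mathcal D$. By Yao, this transfers to randomized algorithms and yields the claimed $\Omega(1/\beta)$ query lower bound.

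\textbf{The hard distribution.} Fix $n$ with $\beta n/2$ an integer, let $s:=\beta n/2$ be the size of the ``small'' cluster, and fix a large constant $M$. I would draw a uniformly random index set $T\subseteq[n]$ of size $s$ and, independently, a uniformly random location $\ell\in\{M,2M,\dots,M^2\}$, and set $x_i=\ell$ for $i\in T$ and $x_i=0$ for $i\notin T$. Thus $X$ is a ``big'' point mass of $n-s$ points at $0$ together with a far-away ``small'' point mass of $s$ points at $\ell$, but the algorithm knows neither which indices form the small cluster nor the value $\ell$. The optimal solution is $C^\star=\{0,\ell\}$ with $\OPT=0$, inducing clusters of sizes $n-s$ and $s=\beta|X|/2=\beta|X|/k$, so the balancedness of $X$ is exactly $\beta$. (If $\OPT>0$ is desired one can instead spread the ``big'' points over a tiny interval $[0,\delta]$ with $\delta\ll\beta M$; nothing below changes.)

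\textbf{Two key steps.} Let $A$ be deterministic making at most $q$ queries, and let $\mathsf{miss}$ be the event that none of its queries hits $T$. The first step is the observation that, conditioned on $\mathsf{miss}$, every answer $A$ receives equals $0$, hence its entire transcript — in particular the set $Q$ of queried indices (of size at most $q$) and the output $C$ — is a fixed quantity, independent of $T$ and of $\ell$; consequently $\Pr[\mathsf{miss}]=\binom{n-q}{s}\big/\binom{n}{s}\ge(1-q/(n-s))^{s}\ge 0.9$ for $q=\tfrac{1}{20\beta}$. The second step is that, conditioned on $\mathsf{miss}$, the (fixed) output $C$ is independent of $\ell$, while $\ell$ is still uniform over $M$ well-separated candidates; since $\OPT=0$, being $O(1)$-approximate forces $C=\{0,\ell\}$ exactly, which requires correctly guessing $\ell$, so $\Pr[A\text{ errs}\mid\mathsf{miss}]\ge 1-1/M$. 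Combining, $\Pr[A\text{ errs}]\ge 0.9(1-1/M)>1/4$ once $M$ is a large enough constant, so no $q$-query deterministic algorithm is $3/4$-correct over $\mathcal D$; Yao's principle then gives the theorem.

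\textbf{Main obstacle.} The only delicate point is the adaptivity of $A$, and it is precisely what motivates making the big cluster a point mass (or a fixed tiny perturbation of one): with identical answers under $\mathsf{miss}$, the query set is deterministic and the elementary tail bound on $\Pr[\mathsf{miss}]$ goes through, whereas handling an adversarially adaptive algorithm against a ``noisy'' big cluster would be the part requiring more thought. I would also note that the qualifier about free access to the distance function is vacuous here: in $\mathbb{R}$ a point's identity is its coordinate, which already reveals all its distances, and $\dist$ cannot be evaluated on an un-queried point, so this extra power does not weaken the bound.
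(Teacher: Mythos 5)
Your proposal is correct and follows essentially the same route as the paper's proof: Yao's principle reducing to deterministic algorithms, a hard distribution with a hidden small cluster of $\beta n/2$ points far from a mass at $0$ so that $\OPT=0$ forces exact recovery, a bound showing few queries miss the small cluster with constant probability, and the observation that conditioned on missing, the (fixed) output cannot identify the hidden location. The only differences are cosmetic (you use $M$ candidate locations and $q=\tfrac{1}{20\beta}$ where the paper uses two locations $\{1,2\}$ and $\tfrac{1}{2\beta}$), and your explicit handling of adaptivity via the all-zeros transcript is a correct, slightly more careful rendering of the same step.
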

To prove our lower bound, we first apply the Yao's principle~\cite{yao1983lower} and derive the following lemma that reduce to proving lower bounds for deterministic algorithms with respect to some input distribution.
\begin{lemma}
    \label{lem:yao}
    For real number $\alpha>1$,
    let $D$ be a distribution over a family of datasets $X\subset \R$, 
    if any deterministic algorithm must query $\Omega(1/\beta)$ times to computes an $\alpha$-approximate center set for \twoMedian on $X$ sampled from $D$ with success probability at least $3/4$,
    then any randomized $\alpha$-approximate algorithm for \twoMedian with success probability at least $3/4$ must query $\Omega(1/\beta)$ times.
\end{lemma}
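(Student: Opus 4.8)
The plan is to establish the contrapositive by a standard averaging argument, which is exactly the ``easy direction'' of Yao's principle~\cite{yao1983lower}. Suppose, for the sake of contradiction, that there is a randomized $\alpha$-approximate algorithm $\mathcal{A}$ for \twoMedian that succeeds with probability at least $3/4$ on \emph{every} input $X$ in the family while making only $o(1/\beta)$ queries to the identity of data points. From $\mathcal{A}$ I will produce a deterministic algorithm with the same query bound that is $\alpha$-approximate with probability at least $3/4$ when $X$ is drawn from $D$, contradicting the hypothesis of the lemma.

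First I would view $\mathcal{A}$ as a distribution over deterministic algorithms: let $r$ denote its internal random string and write $\mathcal{A}_r$ for the deterministic algorithm obtained by hardwiring $r$. Since the per-input success guarantee holds for every $X$, it also holds after averaging over $X\sim D$, so
\[
    \Pr_{X\sim D,\; r}\bigl[\mathcal{A}_r(X)\text{ is }\alpha\text{-approximate}\bigr]\;\ge\;\tfrac34 .
\]
Interchanging the two expectations (equivalently, an averaging argument over $r$), there exists a fixed string $r^\star$ for which
\[
    \Pr_{X\sim D}\bigl[\mathcal{A}_{r^\star}(X)\text{ is }\alpha\text{-approximate}\bigr]\;\ge\;\tfrac34 .
\]
The algorithm $\mathcal{A}_{r^\star}$ is deterministic and, on every input, makes at most as many queries as $\mathcal{A}$ does in the worst case over its coins, hence it makes $o(1/\beta)$ queries. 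This contradicts the assumed deterministic lower bound, so no such $\mathcal{A}$ exists, which proves the lemma.

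There is essentially no technical obstacle here; the only points requiring a little care are bookkeeping ones. One must fix the convention that ``makes $\Omega(1/\beta)$ queries'' refers to worst-case query complexity (over both inputs and coins), so that hardwiring $r^\star$ preserves the bound; if instead only an expected query bound were available for $\mathcal{A}$, one would first truncate it after, say, four times its expected number of queries and absorb the resulting failure probability by a union bound, at the cost of slightly adjusting the success threshold. Finally, one should note that every dataset in the family shares the same balancedness $\beta$ (as in the construction of \Cref{thm:lb}), so $\Omega(1/\beta)$ is a well-defined target and the quantitative bound transfers verbatim.
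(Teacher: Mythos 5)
Your proof is correct and is essentially the paper's argument: both are the easy (averaging) direction of Yao's principle, viewing the randomized algorithm as a mixture of deterministic algorithms with the same query bound and exchanging the two averages. The only difference is presentational --- you argue the contrapositive and fix a good coin string $r^\star$ to obtain a deterministic algorithm that beats the distributional lower bound, whereas the paper keeps the direct form and extracts a hard instance $X$ from the support of $D$ on which the randomized algorithm fails with probability more than $1/4$; the two are equivalent.
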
 
\begin{proof}
    For an algorithm $\mathcal{A}$, we denote by $\mathcal{A}(X)$ the output of $A$ when running on $X$.
    For any randomized algorithm $\mathcal{R}$ which makes at most $o(1/\beta)$ queries, it can be seen as a distribution over some deterministic algorithms $\mathcal{R}_1,\dots,\mathcal{R}_s$. By assumption, for every $i\in [s]$, it holds that
    \begin{equation*}
        \Pr_{X\sim D}\left[\cost(X,\mathcal{R}_i(X)) > \alpha \OPT(X)\right]> 1/4.
    \end{equation*}
    Therefore, averaging over all deterministic algorithms, we have
    \begin{equation*}
        \Pr_{X\sim D}\left[\cost(X,\mathcal{R}(X)) > \alpha \OPT(X)\right]> 1/4.
    \end{equation*}
    where the randomness is from the choice of $X$ and the randomness of $\mathcal{R}$.
    Hence, there exists an instance $X\in\mathrm{supp}(D)$ such that 
    \begin{equation*}
        \Pr\left[\cost(X,\mathcal{R}(X)) > \alpha \OPT(X)\right]> 1/4.
    \end{equation*}
    which completes the proof.
\end{proof}
\begin{proof}[Proof of \Cref{thm:lb}]
    For an integer $n\ge 10/\beta$, let $m=\beta n/2$, we construct a family of $n$-point datasets as follows. For a set $\Pi\in [n]$ and a real number $t$, let $X^{\Pi,t}:=\{x_1^{\Pi,t},\dots,x_n^{\Pi,t}\}$ denotes an ordered $n$-point set in $1$-dimensional space such that,
    \begin{equation*}
        \forall i\in [n], \quad x_i^{\Pi,t}=\begin{cases}
            t& i\in \Pi\\
            0& i\not\in\Pi
        \end{cases}.
    \end{equation*}
    Let $\mathcal{X}:=\{X^{\Pi,t};|\Pi|=m, t\in \{1,2\}\}$ be a family of $n$-point datasets. Clearly, every $X\in\mathcal{X}$ is of balancedness $\beta$ for \twoMedian, and the optimal objective value of $X$ for \twoMedian is $0$. Let $D$ be a uniform distribution over $\mathcal{X}$.

    By \Cref{lem:yao}, it suffices to prove that for any $\alpha > 1$ and any deterministic algorithm $\mathcal{A}$, which makes at most $\frac{1}{2\beta}$ queries, will compute a $2$-point center set $C$ such that $\cost(X,C)>\alpha\cdot \OPT(X)$ for $X$ sampled from $D$ with probability at least $1/4$. We assume $\mathcal{A}$ queries with index $i_1,\dots, i_l$ with $l\le \frac{1}{2\beta}$, and receives $x_{i_1},\dots,x_{i_l}$ from the oracle. We observe that $\mathcal{A}$ finds a $2$-center set $C$ such that $\cost(X,C)>\alpha\cdot \OPT(X)$ if and only if $\mathcal{A}$ finds the exact optimal solution (otherwise, we say $\mathcal{A}$ fails). Therefore, we have
    \begin{equation}
    \label{eq:lb_goal}
    \begin{aligned}
        &\quad\Pr_{X\sim D}\left[\cost(X,\mathcal{A}(X))>\alpha\cdot\OPT(X) \right]\\
        &=\Pr_{X\sim D}[\mathcal{A}\text{ fails}]\\
        &=\Pr_{X\sim D}\left[\mathcal{A}\text{ fails} \mid \forall j\in[l], X_{i_j}=0 \right]\Pr_{X\sim D}\left[\forall j\in[l], x_{i_j}=0 \right]\\
        &\quad +\Pr_{X\sim D}\left[\mathcal{A}\text{ fails} \mid \exists j\in[l], X_{i_j}\neq 0 \right]\Pr_{X\sim D}\left[\exists j\in[l], x_{i_j}\neq 0 \right]\\
        &\geq \Pr_{X\sim D}\left[\mathcal{A}\text{ fails} \mid \forall j\in[l], X_{i_j}=0 \right]\Pr_{X\sim D}\left[\forall j\in[l], x_{i_j}=0 \right]
    \end{aligned}   
    \end{equation}
    Recall that $\mathcal{A}$ makes queries with index $i_{1},\dots,i_{l}$, and thus the output $\mathcal{A}(X)$ depends only on $x_{i_j}$ for all $j\in[l]$. For $X$ sampled uniformly from $\mathcal{X}$, its optimal center set is $\{0,1\}$ with probability $0.5$ and $\{0,2\}$ with probability $0.5$. Therefore, condition on $\forall j\in[l], X_{i_j}=0$, $\mathcal{A}$ will fail with probability at least $0.5$, i.e.,
    \begin{equation}
        \label{eq:lb_1}
        \Pr[\mathcal{A}\text{ fails}\mid \forall j\in[l], X_{i_j}=0]\ge \frac{1}{2}
    \end{equation}
    since it can not determine the center other than $0$. It remains to bound $\Pr_{X\sim D}[\forall j\in[l], X_{i_j}=0]$.
    \begin{equation}
    \label{eq:lb_2}
    \begin{aligned}
        \Pr_{X\sim D}[\forall j\in[l], X_{i_j}=0]&={n-l\choose m}\cdot {n\choose m}^{-1}\\
        &=\frac{(n-l)!(n-m)!}{(n-l-m)!n!}\\
        &=\frac{(n-l-m+1)(n-l-m+2)\cdots(n-l)}{(n-m+1)(n-m+2)\cdots n}\\
        &=\left(1-\frac{l}{n-m+1} \right)\left(1-\frac{l}{n-m+2} \right)\cdots \left(1-\frac{l}{n}\right)\\
        &\ge \left(1-\frac{l}{n-m+1} \right)^{m}\\
        &\ge 1-\frac{lm}{n-m+1}\\
        &\ge 1-\frac{1}{2\beta}\cdot \frac{\beta n}{2}\cdot \left(\left(1-\frac{\beta}{2}\right)n + 1 \right)^{-1}\\
        &\ge \frac{1}{2}
    \end{aligned}
    \end{equation}
    where the sixth derivation is due to Bernoulli's inequality. Therefore, plugging \eqref{eq:lb_1} and \eqref{eq:lb_2} into \eqref{eq:lb_goal}, we have 
    \begin{equation*}
        \Pr_{X\sim D}\left[\cost(X,\mathcal{A}(X))>\alpha\cdot\OPT(X) \right]\ge \frac{1}{4}
    \end{equation*}
    which completes the proof.
\end{proof}

\subsection{Vanilla \kMedian on Uniform Sample Can Incur Big Error}
\label{sec:proof_kmedian_lb}

\begin{lemma}
    \label{lemma:kbmedian_lb}
    There exists a family of $0.5$-balanced datasets $X_n\subset \R$ with $|X_n|=n$ for any integer $n\ge 1$, such that letting $S_n$ be a set of $o(n)$ uniform samples from $X_n$, with probability at least $1/4$, it holds that $\cost(X_n,C^\star_n)\ge 1.01\cdot\OPT(X_n)$, where $C^\star_n$ is an optimal center set for \ProblemName{$3$-Median} on $S_n$.
\end{lemma}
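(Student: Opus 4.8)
The plan is to exhibit, for an arbitrary sublinear sample size $s=s(n)=o(n)$ (we may assume $s\to\infty$, as otherwise a size-$s$ sample cannot meaningfully solve $3$-Median and the claim is vacuous), a $0.5$-balanced dataset $X_n$ built from three far-apart heavy point masses plus one small ``trap'' whose best treatment differs between $X_n$ and a typical sample: the optimal $3$-Median on $X_n$ prefers to \emph{absorb} the trap into a heavy cluster, whereas with constant probability the optimal $3$-Median on $S_n$ prefers to \emph{dedicate} a center to it --- a choice that is cheap on $S_n$ but costly on $X_n$.

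Concretely, I would fix a large scale $G$, set $w:=\lceil 2n/s\rceil$ (so $1\le w\le n/12$ for large $n$), place point masses of sizes $n/6,\ n/6,\ 2n/3-w$ at positions $0,\ G,\ 2G$, and place $w$ trap points at $q:=-d$ where $d$ is chosen so that $wd=\tfrac34\cdot\tfrac n6 G$ (hence $d=\Theta(sG)$). The first step is to verify that the optimal $3$-Median on $X_n$ places its centers at $\{0,G,2G\}$ and assigns the trap to the center at $0$: this has cost $wd$, induces clusters of sizes $n/6+w,\ n/6,\ 2n/3-w$ (all $\ge n/6$, so $X_n$ is $0.5$-balanced), and beats the two alternatives --- dedicating a center to the trap forces two heavy masses onto a single center at cost $\tfrac n6 G>wd$, and merging two heavy masses outright already costs $\ge\tfrac n6 G>wd$. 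Hence $\OPT(X_n)=wd$.

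Next I would analyze $S_n$. Let $k_P:=|S_n\cap P|$; since $\E[k_P]=ws/n=2$, a Chernoff/Poisson estimate gives $\Pr[k_P\ge 3]\ge 1/4$, and since each heavy mass has $\Theta(n)$ points, Chernoff shows the sampled heavy-mass sizes are within a $(1\pm o(1))$ factor of their means w.h.p. Conditioning on these events, I would show the optimal $C^\star_n$ on $S_n$ must place a center at $q$: covering the three sampled heavy masses with all three centers leaves the $k_P\ge 3$ trap points to be served from distance $\ge d$, at cost $\ge 3d$, which by the choice of $d$ exceeds the cost ($\approx\tfrac s6 G$) of instead dedicating one center to $q$ and covering the three heavy masses with the remaining two (one merge); every other configuration is strictly worse since it would serve a pair of points $\Omega(\min(G,d))$ apart from a common center. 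Evaluating this $C^\star_n$ on $X_n$: its trap center handles all $w$ trap points at cost $0$, while its two other centers form a two-median of the heavy masses, which on $X_n$ costs $\ge\tfrac n6 G$. Thus $\cost(X_n,C^\star_n)\ge\tfrac n6 G=\tfrac43\,wd=\tfrac43\,\OPT(X_n)\ge 1.01\,\OPT(X_n)$ on an event of probability $\ge 1/4$.

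I expect the crux to be the calibration of $w$ and $d$: one needs $wd$ strictly below the merge threshold $\tfrac n6 G$ (so $X_n$ ``absorbs'') yet $3d$ --- more precisely $k_0 d$ for the least $k_0$ with $\Pr[k_P\ge k_0]\ge 1/4$ --- strictly above the sampled merge cost (so $S_n$ ``dedicates''), and these windows must overlap; this is exactly where the Poisson behavior of $k_P$, forced by $s=o(n)$, is used. The remaining steps are a routine but careful enumeration of candidate solutions on $X_n$ and on $S_n$ to confirm both optimizations land where claimed.
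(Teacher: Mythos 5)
Your construction is essentially the paper's: a small, far-away ``trap'' cluster that the full dataset optimally absorbs, but that a constant-probability sampling event forces the sample optimum to serve with a dedicated center, which when evaluated on $X_n$ forces an expensive merge of two heavy clusters costing at least $1.01\cdot\OPT(X_n)$; the paper implements exactly this dedicate-versus-absorb flip with masses $n,n,n,\tfrac{n}{1.01 f(n)}$ placed at $0,1,w,w+f(n)$ with $w\to\infty$. One minor caveat: the paper calibrates the trap so that a \emph{single} sampled trap point already makes dedicating strictly cheaper on $S_n$ (cost $\ge f(n)$ versus $\le f(n)-1$), so it needs no concentration and covers any sample size $f(n)\ge 3$, bounded or not, whereas your version needs $k_P\ge 3$, Chernoff concentration of the heavy masses, and the thin $9/8$ margin, so your dismissal of bounded $s$ as ``vacuous'' is not quite right (that regime is real, though easily handled by recalibrating as the paper does).
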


\begin{proof}
    Our proof is constructive. For $n\ge 10$, assume $S_n$ has $f(n)$ points sampled uniformly and independently from $X_n$, where $3\le f(n) < n/2$. To simplify the proof, we construct $X_n$ with size of $O(n)$ (instead of $n$) as follows. We place $n$ points at $0$ and $n$ points at $1$. Also, we place $n$ points at $w$, $\frac{n}{1.01\cdot f(n)}$ points at $w + f(n)$, and let $w\to \infty$. Clearly, the optimal $3$-point center set $C^\star_{n}$ is $\{0,1,w\}$, and the objective value is $n/1.01$. 

    Recall that $S_n$ is a set of $f(n)$ uniform samples from $X_n$, we have the probability of that $w+f(n) \in S_n$ is 
    \begin{align*}
        \Pr[w+f(n)\in S_n] &= 1- \left(1 - \frac{n/(1.01\cdot f(n))}{3n + n/(1.01\cdot f(n))} \right)^{f(n)}\\
        &\ge 1 - \exp\left(-\frac{n/1.01}{3n + n/(1.01\cdot f(n))} \right) \\
        &\ge 1 - \exp\left(-\frac{1}{3.03 + 1/3} \right)\\
        &\ge 0.25
    \end{align*}
    where the third derivation is due to $f(n)\ge 3$.
    Condition on $w+f(n)\in S_n$, we observe that $S_n$ contains at most $f(n)$ points at $0$, at most $f(n)$ points at $1$, at most $f(n)$ points at $w$ and at least $1$ point at $w+f(n)$. In this case, the optimal $3$-point center set $C'$ must contain $w+f(n)$. On $X_n$, the optimal center set that contains $w+f(n)$ is either $\{0,w,w+f(n)\}$ or $\{1,w,w+f(n)\}$, both of which have an objective value $n\ge 1.01\cdot\OPT(X_n)$. Therefore, $f(n)$ must be greater than $n/2$, which completes the proof.
\end{proof}

\section{Experiments}

\begin{figure*}[t]
    \centering
    \begin{subfigure}[b]{0.3\textwidth}
        \centering
        \includegraphics[width=\textwidth]{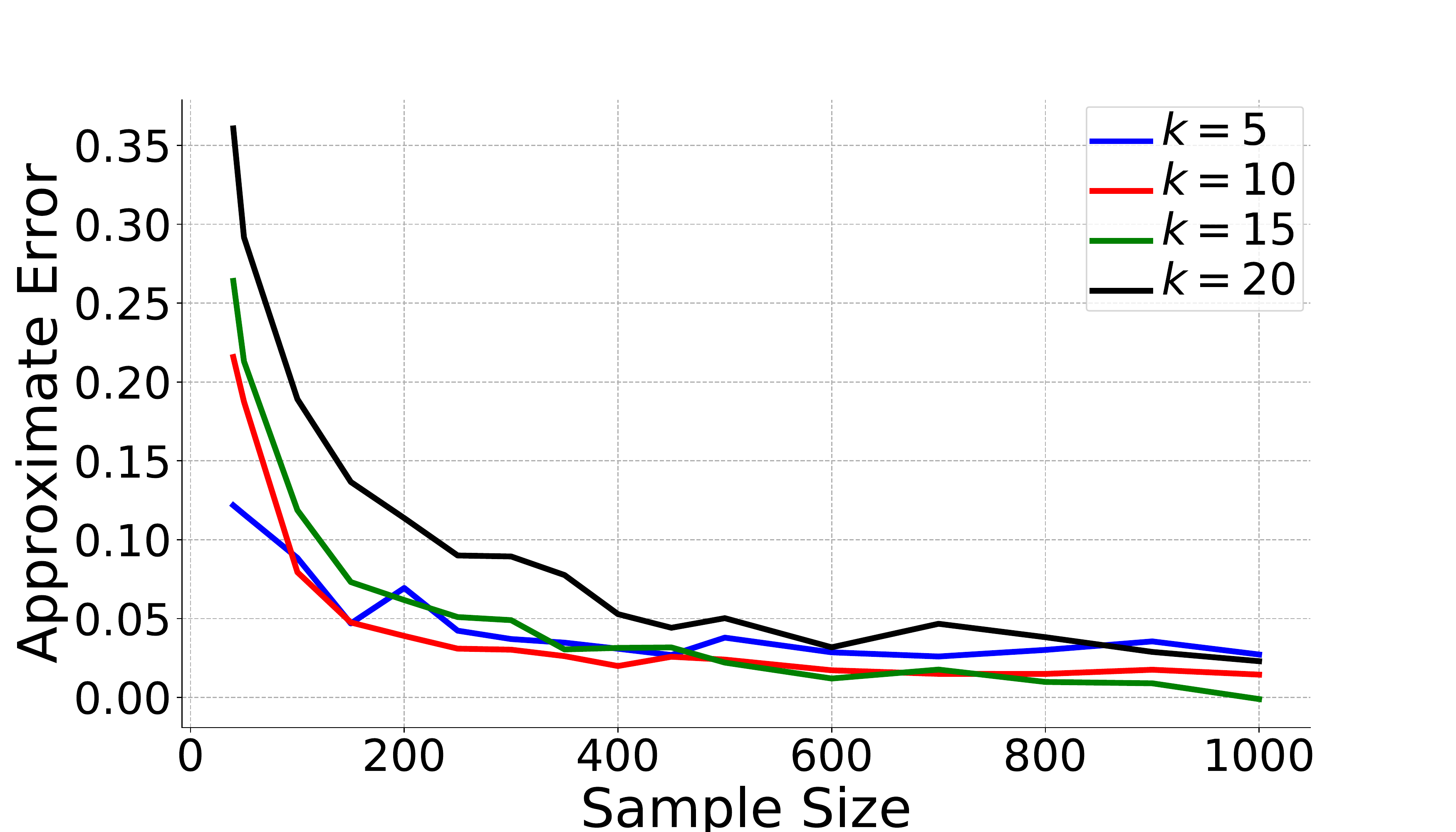}
        \caption*{Twitter dataset}
    \end{subfigure} \qquad
    \begin{subfigure}[b]{0.3\textwidth}
        \centering
        \includegraphics[width=\textwidth]{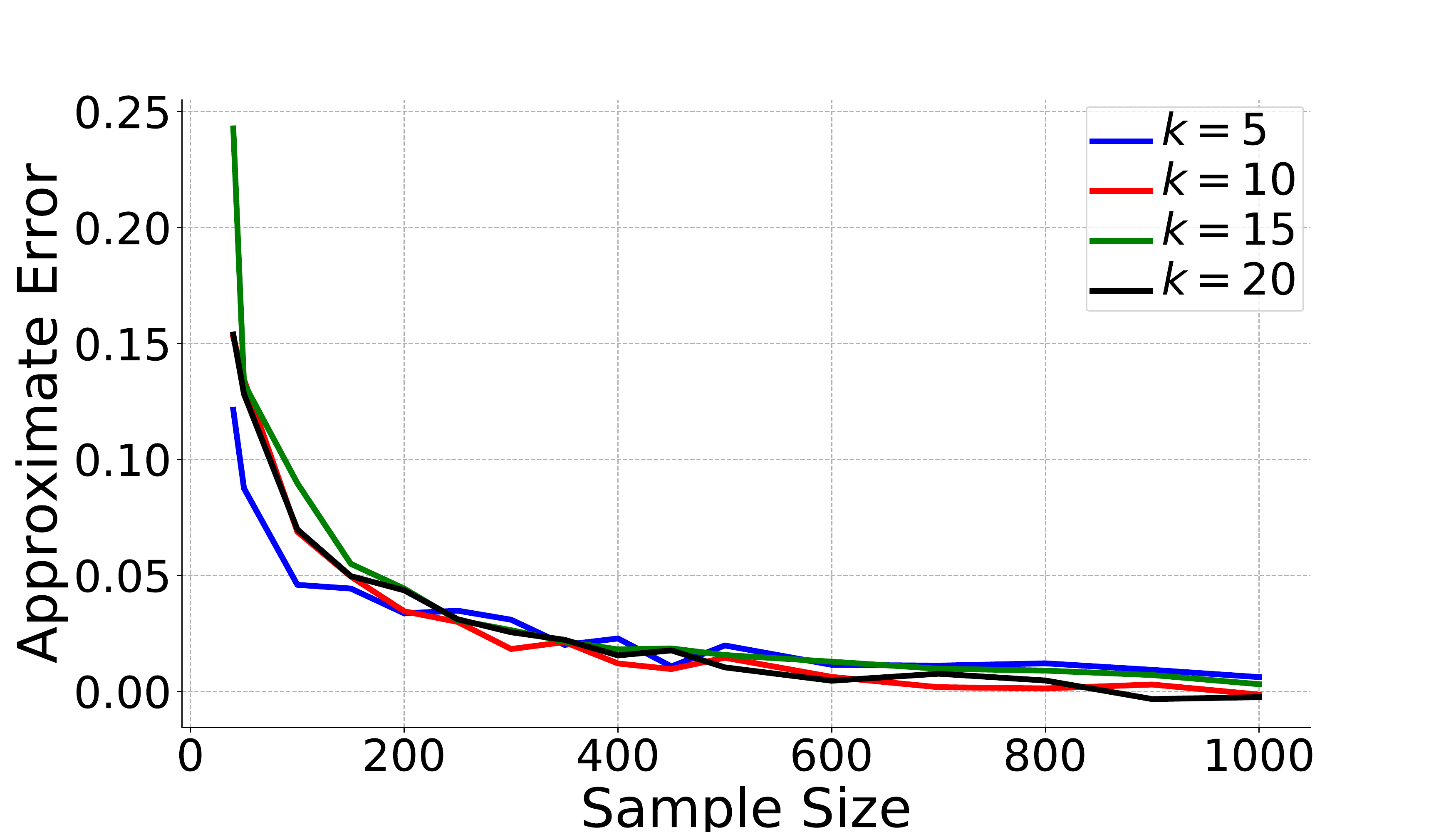}
        \caption*{Census1990 dataset}
    \end{subfigure}
    \begin{subfigure}[b]{0.3\textwidth}
        \centering
        \includegraphics[width=\textwidth]{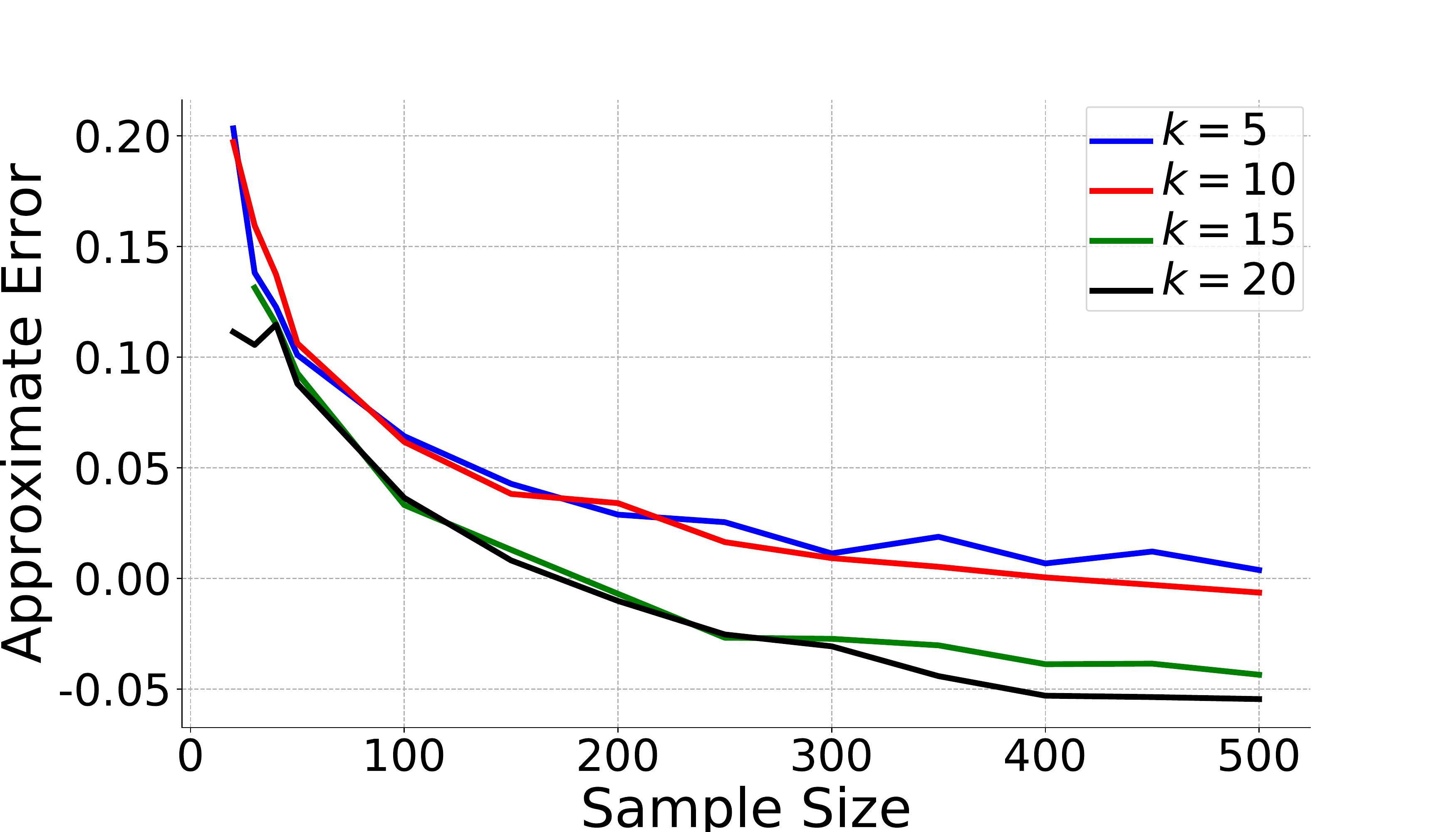}
        \caption*{NY dataset}
    \end{subfigure}
    \caption{The size-error tradeoff of the uniform sampling.}
    \label{fig:size_vs_error}
\end{figure*}

\begin{figure*}[t]
    \centering
    \begin{subfigure}[b]{0.3\textwidth}
        \centering
        \includegraphics[width=\textwidth]{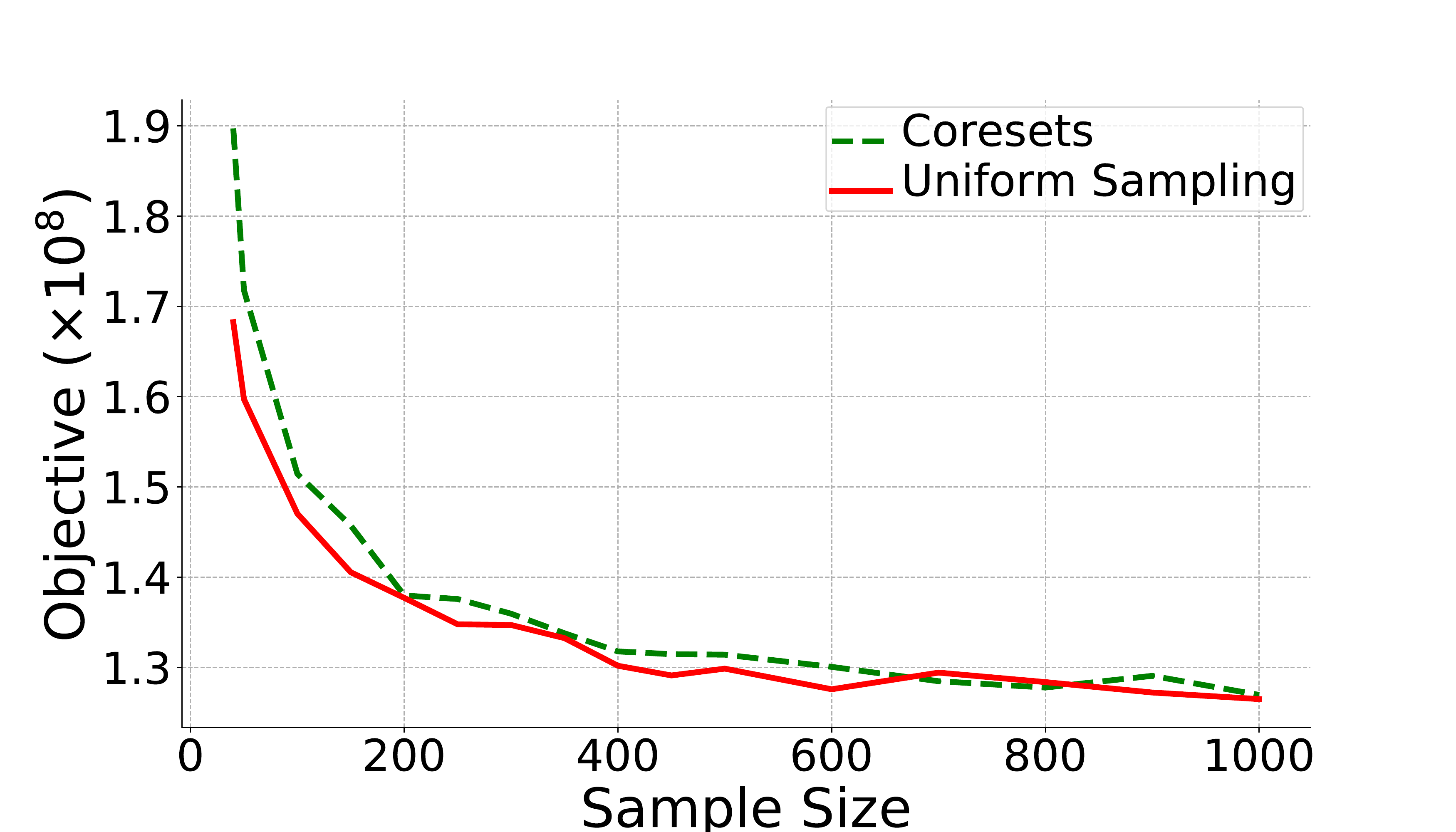}
        \caption*{Twitter dataset}
    \end{subfigure} \qquad
    \begin{subfigure}[b]{0.3\textwidth}
        \centering
        \includegraphics[width=\textwidth]{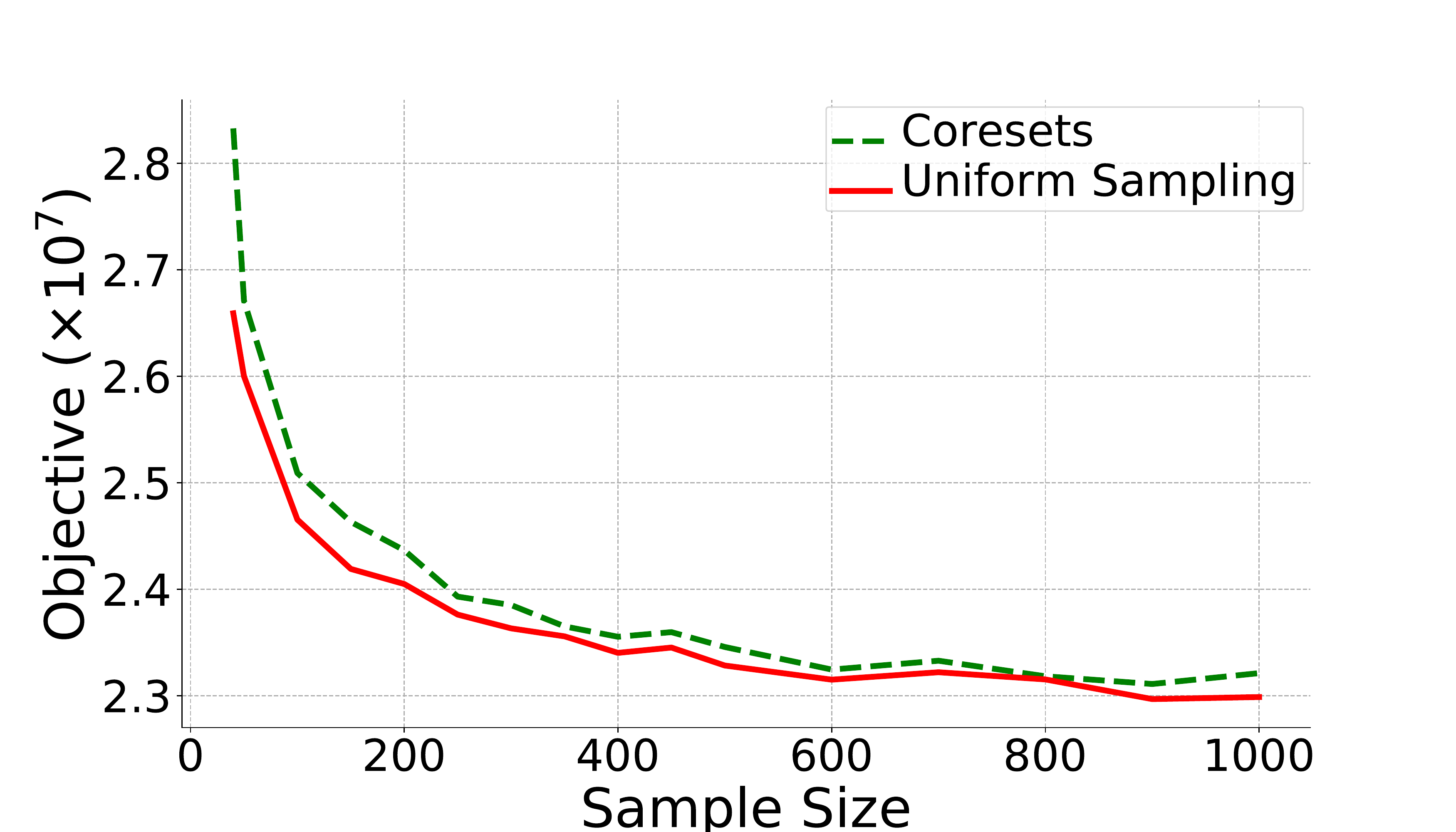}
        \caption*{Census1990 dataset}
    \end{subfigure}
    \begin{subfigure}[b]{0.3\textwidth}
        \centering
        \includegraphics[width=\textwidth]{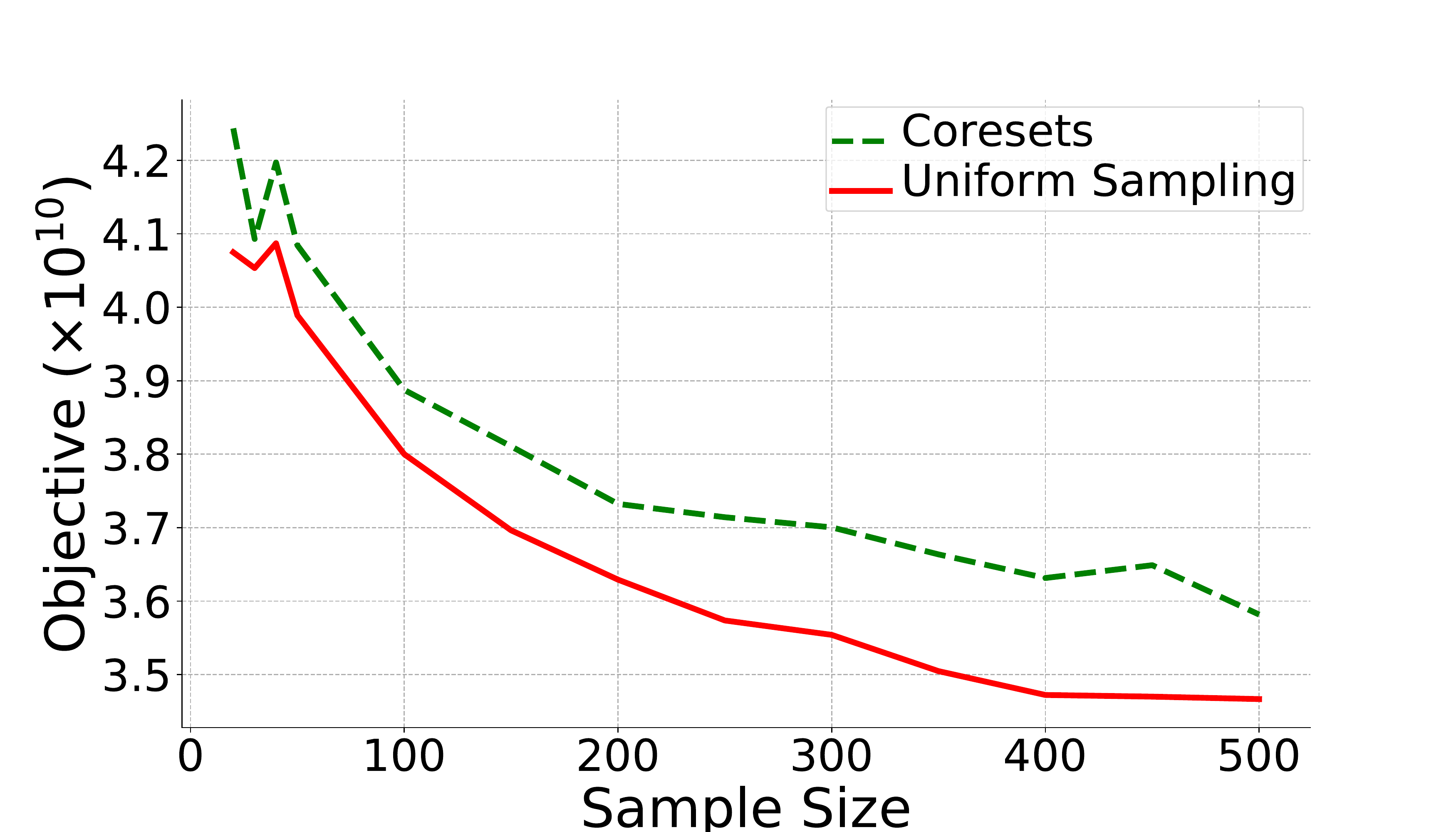}
        \caption*{NY dataset}
    \end{subfigure}
    \caption{The size-objective tradeoff, compared with that of coresets.}
    \label{fig:vs_coresets}
\end{figure*}

\paragraph{Experiment Setup}
Our experiments are conducted on $3$ real datasets: Twitter, Census1990 and NY.
For Twitter~\cite{twitter_data} and Census1990~\cite{census1990_data}, 
we select numerical features and normalize them into Euclidean vectors in $\mathbb{R}^d$, with $n=21040936$ data points and $d=2$ for Twitter, and $n=2458285, d=68$ for Census1990.
The NY dataset is a weighted graph with $|V| = 2276360$ vertices and $|E|=2974516$ edges,
representing the largest connected component of the road network of New York State, extracted from OpenStreetMap~\cite{OpenStreetMap}.
All experiments are conducted on a PC with Intel Core i7 CPU and 16 GB memory, and algorithms are implemented in C++ 11.

\paragraph{Error Measure}
To measure the error of a center set $C$ (which is a solution to \kMedian), ideally one should compare $\cost(X, C)$ with $\OPT$.
However, since the exact value of $\OPT$ is NP-hard to find, we turn to compare it with an approximate solution $\Capx$ instead.
Hence, our measure of error for a center set $C$ is defined as the following $\hat{\epsilon}$
which is a signed relative error compared with $\cost(X, \Capx)$: $\hat\epsilon(C) := \frac{\cost(X,C)-\cost(X,\Capx)}{\cost(X,\Capx)}$.
The reason to keep the sign is that $\Capx$ is not the optimal solution, hence it is possible that $\cost(X, C)$ is better than $\cost(X, \Capx)$ which leads to a negative sign.
To compute $\Capx$, we first construct a coreset (e.g., using~\cite{DBLP:conf/stoc/FeldmanL11}) to reduce the size of the dataset, and then apply a standard approximation algorithm for \kMedian~\cite{DBLP:conf/stoc/AryaGKMP01} on top of it. The use of coreset is to ensure the local search can run for sufficiently large number of rounds to converge in a reasonable time.

\paragraph{Experiment: Size-error Tradeoff and Balancedness}
Given $k \geq 1$, we take $m$ uniform samples $S$ from the dataset with varying $m$, and run a local search algorithm~\cite{DBLP:conf/stoc/AryaGKMP01} on $S$ to find a center set $C_S$ for \kMedian.
We evaluate the error $\hat\epsilon(C_S)$ and plot the size-error curves in \Cref{fig:size_vs_error} for various choices of $k$.
In fact, it turns out that the choice of $k$ also affect the balancedness parameter $\beta$.
Hence, for each dataset and value of $k$, we evaluate the balancedness $\beta$ of the abovementioned $\Capx$ on the original dataset,
and we also take $m = 500$ uniform samples $S$ and evaluate the balancedness $\beta'$ of $C_S$ on $S$.
The resulted $\beta$ and $\beta'$ are reported in \Cref{tab:balancedness}.
To make the measurement stable, we repeat the sampling and local search $20$ times independently and report the average statistics.

As can be seen from \Cref{fig:size_vs_error}, uniform sampling shows an  outstanding performance, and admits a similar convergence of the error curve regardless of the choice of $k$ and the consequent balancedness of datasets.
In NY dataset, it even achieves negative error when $k = 20$, which means it performs even better than the baseline.
We also observe in \Cref{tab:balancedness} that the datasets are mostly balanced even when $k$ is relatively large, and thus the factor of $1/\beta$ is reasonably bounded (i.e., no greater than $100$),
and even for the less balanced scenario, the error is still very small.
Moreover, the center set computed on $S$, although computed using a vanilla local search for \kMedian, actually satisfies the balancedness constraint of \kBMedian problem, that is, $\beta'\ge \beta/2$ for every choice of $k$ and every dataset.
This suggests that it is not necessary to run the more sophisticated \kBMedian at all in practice.
These findings help to justify why it is often seen in practice that a few uniform samples are enough to compute a good approximation.

\begin{table}[t]
    
    \caption{Balancedness under different values of $k$.}
    \label{tab:balancedness}
          \begin{center}
              \begin{small}
                  \begin{sc}
                      \begin{tabular}{ccccccc}
                          \toprule
                          \multirow{2}{*}{$k$} & \multicolumn{2}{c}{Twitter} & \multicolumn{2}{c}{Census1990} & \multicolumn{2}{c}{NY} \\
                          & $\beta$ & $\beta'$ & $\beta$ & $\beta'$ & $\beta$ & $\beta'$  \\
                          \midrule
                          5 & 0.549 & 0.523 & 0.302 & 0.292 & 0.036 & 0.127 \\
                          10 & 0.107 & 0.273 & 0.359 & 0.326 & 0.053 & 0.093\\
                          15 & 0.138 & 0.192 & 0.120 & 0.249 & 0.073 & 0.121 \\
                          20 & 0.202 & 0.184 & 0.045  & 0.216 & 0.072 & 0.120\\
                          \bottomrule
                      \end{tabular}
                  \end{sc}
              \end{small}
          \end{center}
      \vskip -0.1in
      \end{table}

\paragraph{Experiment: Comparison to Coreset}
Since uniform sampling is usually very efficient, we expect to see an advantage in running time compared with coreset constructions.
Our second experiment aims to demonstrate this advantage, particularly to compare with the coresets constructed by importance sampling~\cite{DBLP:conf/stoc/FeldmanL11}.
We set $k=20$, and vary the sample size $m$ for both uniform sampling and importance sampling.
Let $S$ and $S'$ be the subsets constructed by uniform sampling and importance sampling, respectively.
We run a local search algorithms on each of $S$ and $S'$, and compute the objective value of the output center set on the original dataset.
In \Cref{fig:vs_coresets}, we plot the objective-size curves, and observe that uniform sampling is comparable to coreset on all these datasets.
However, for the running time,
it takes $391$s/$114$s/$298$s to compute the coreset on Twitter/Census1990/NY dataset, while the runtime of the local step is $16$s/$42$s/$203$s.
Hence, the coreset construction is even more costly than the local search, and the coreset actually becomes the bottleneck of the total running time.
On the other hand, the uniform sampling takes merely $10^{-3}$s to sample $500$ points from the dataset which is significantly more efficient.
This also justifies the practical benefit of using uniform sampling than methods like importance sampling.

\bibliographystyle{alpha}
\bibliography{ref}

\end{document}